\def\BibTeX{{\rm B\kern-.05em{\sc i\kern-.025em b}\kern-.08em
    T\kern-.1667em\lower.7ex\hbox{E}\kern-.125emX}}
\newtheorem{theorem}{Theorem}
\newtheorem{claim}{Claim}
\newtheorem{claimproof}{Proof of Claim}
\newtheorem{lemma}{Lemma}
\newtheorem{definition}{Definition}
\newcommand{\seta}{\ensuremath{\mathcal{A}}}
\newcommand{\setd}{\ensuremath{\mathcal{D}}}
\newcommand{\bs}[1]{\boldsymbol{#1}}
\newcommand{\mc}{\mathcal}
\newcommand{\infsub}{\mathrm{inf}}
\definecolor{calpolypomonagreen}{rgb}{0.12, 0.3, 0.17}
\newcounter{remarkcount}
\newenvironment{remark}{\refstepcounter{remarkcount}\begin{trivlist}\item \textbf{Remark \theremarkcount.}}{\end{trivlist}}
\newcommand{\circlearrow}{}
\DeclareRobustCommand{\circlearrow}{%
  \mathrel{\vphantom{\rightarrow}\mathpalette\circle@arrow\relax}%
}
\newcommand{\circle@arrow}[2]{%
  \m@th
  \ooalign{%
    \hidewidth$#1\circ\mkern1mu$\hidewidth\cr
    $#1-$\cr}%
}
\let\emptyset\varnothing
\newcommand{\mbf}{\mathbf}
\newcommand{\mbb}{\mathbb}
\theoremstyle{definition}
\theoremstyle{remark}
\def\BibTeX{{\rm B\kern-.05em{\sc i\kern-.025em b}\kern-.08em
    T\kern-.1667em\lower.7ex\hbox{E}\kern-.125emX}}
\begin{document}
\onecolumn
%
\title{Message Transmission and Common Randomness Generation over MIMO Slow Fading Channels with Arbitrary Channel State Distribution}

\author{
\IEEEauthorblockN{Rami Ezzine\IEEEauthorrefmark{1}, Moritz Wiese\IEEEauthorrefmark{1}\IEEEauthorrefmark{4}, Christian Deppe\IEEEauthorrefmark{2}\IEEEauthorrefmark{4} and Holger Boche\IEEEauthorrefmark{1}\IEEEauthorrefmark{3}\IEEEauthorrefmark{4}}
\IEEEauthorblockA{\IEEEauthorrefmark{1}Technical University of Munich, Chair of Theoretical Information Technology, Munich, Germany\\
\IEEEauthorrefmark{2}Technical University of Munich, Institute for Communications Engineering,  Munich, Germany\\
\IEEEauthorrefmark{3}CASA -- Cyber Security in the Age of Large-Scale Adversaries–
Exzellenzcluster, Ruhr-Universit\"at Bochum, Germany\\
\IEEEauthorrefmark{4}BMBF Research Hub 6G-life, Munich, Germany\\
Email: \{rami.ezzine, wiese, christian.deppe, boche\}@tum.de}
}


%


\maketitle
\thispagestyle{plain}
\pagenumbering{arabic}
\pagestyle{plain}
\begin{abstract}
We investigate the problem of message transmission and the problem of common randomness (CR) generation over single-user multiple-input multiple-output (MIMO) slow fading channels with average input power constraint, additive white Gaussian noise (AWGN), arbitrary state distribution and with complete channel state information available at the receiver side (CSIR). First, we derive a  lower and an upper bound on  the outage transmission capacity of MIMO slow fading channels for arbitrary state distribution and show that the bounds coincide except possibly at the points of  discontinuity of the outage transmission capacity, of which there are, at most, countably many. To prove the lower bound on the outage transmission capacity, we also establish the capacity of a specific compound MIMO Gaussian channel.
Second, we define the outage CR capacity for a two-source model with unidirectional communication over a  MIMO slow fading channel with arbitrary state distribution and establish a lower  and an upper bound on it using our bounds on the outage transmission capacity of the MIMO slow fading channel. 
\end{abstract}


\begin{IEEEkeywords}
Common randomness, outage transmission capacity, MIMO slow fading channels, compound MIMO Gaussian channels
\end{IEEEkeywords}

%
\IEEEpeerreviewmaketitle

\section{Introduction}
Motivated by its striking applications in the theory of identification, Ahlswede and Csiszár introduced the concept of generation of non-secret common randomness (CR) in \cite{part2}.
The identification scheme is an approach in communications
developed by Ahlswede and Dueck \cite{identification} in 1989. In the identification framework, the decoder is not interested in knowing what the received message is. He rather wants to know if a specific message of special interest to him has been sent or not. Naturally, the sender has no knowledge of that specific message, otherwise, the problem would be trivial. It turns out that CR may allow a significant increase in the identification capacity of channels\cite{Generaltheory,part2,CRincrease}. 
 While the number of identification messages (also called identities) increases exponentially with the block-length in the deterministic identification scheme for discrete memoryless channels (DMCs), the size of the identification code increases doubly exponentially with the block-length when CR is used as a resource.
The identification scheme is more suitable than the classical transmission scheme proposed by Shannon \cite{shannon} in many practical applications which require robust and ultra-reliable low latency information exchange including several machine-to-machine
and human-to-machine systems \cite{applications}, industry 4.0 \cite{industrie40} and 6G communication systems \cite{6Gcomm}. It is therefore expected that CR will be an important resource for future communication systems \cite{6Gcomm}\cite{6Gpostshannon} and, in particular, that resilience requirements \cite{6Gcomm} and security requirements \cite{semanticsecurity} can also be met on the basis of CR. These requirements are again of particular importance for achieving trustworthiness, which is a key challenge for future communication systems due to modern applications \cite{6Gandtrustworthiness}. For this reason,  CR generation for future communication networks is a central research question in large 6G research projects \cite{researchgroup1}\cite{researchgroup2}.

The applications of CR generation are not restricted to the identification scheme. The availability of CR as a resource plays in general a key role in distributed  settings\cite{survey}. It allows to design correlated random protocols that often perform faster and more efficiently than the deterministic ones or the ones using independent
randomization. 
Further examples of the applications of CR include correlated random coding over arbitrarily varying channels (AVCs) \cite{capacityAVC} and oblivious transfer and bit commitment schemes \cite{commitmentcapacity}\cite{unconditionallysecure}. 
CR is also of high relevance in the key generation problem. Indeed, under additional secrecy constraints, the generated CR can be used as secret keys, as shown in the fundamental two papers \cite{part1}\cite{Maurer}. The generated secret keys can be used to perform cryptographic tasks including secure message transmission and message authentication. In our work, however, we will not impose any secrecy requirements.

We study the problem of CR generation in the basic two-party communication setting in which
Alice and Bob aim to agree on a common random variable with high probability 
by observing independent and identically distributed (i.i.d.) samples of correlated discrete sources and while communicating as little as possible. Ahlswede and Csizár initially introduced the problem of CR generation from discrete correlated sources where the communication was over discrete noiseless channels with limited capacity\cite{part2}. A single-letter characterization of the CR capacity for this model was established in \cite{part2}. CR capacity refers to the maximum rate of CR that Alice and Bob can generate using the resources available in the model.  The results on CR capacity were later extended to single-input single-output (SISO) and multiple-antenna Gaussian channels in \cite{CRgaussian}   for  their  practical  relevance  in many communication situations such as  wired  and  wireless communications,  satellite  and  deep  space  communication  links,  etc.
 The results on CR capacity over
Gaussian channels have been used to establish a lower-bound
on the corresponding correlation-assisted secure identification
capacity in the log-log scale in \cite{CRgaussian}. This lower bound can
already exceed the secure identification capacity over Gaussian
channels with randomized encoding established in \cite{wafapaper}.

In our work, we consider the CR generation problem over single-user multiple-input multiple-output (MIMO) slow fading channels with complete channel state information available at the receiver side (CSIR). The focus is on the MIMO setting since multiple-antenna systems present considerable practical benefits including increased capacity, reliability and spectrum
efficiency. This is due to a combination of both diversity and spatial multiplexing gains \cite{Tse}. In particular, a practically relevant model in wireless communications is the slow fading model with  additive white Gaussian noise (AWGN)\cite{Tse,goldsmith,inftheoretic,infaspects}.
 In the multiple-antenna slow fading scenario, the channel state, represented by the channel matrix, is random but remains constant during the codeword transmission. Therefore, channel fades cannot be averaged out and ensuring reliable communication is consequently challenging.
 
A commonly used concept to assess the performance in slow fading environments is the $\eta$-outage transmission capacity defined to be the supremum of all rates for which the outage probability is lower than or equal to $\eta$\cite{Tse}\cite{goldsmith}. From the channel transmission perspective and for a given coding scheme, outage occurs when the  instantaneous channel state is so poor  that that coding scheme 
is not able to establish reliable communication over the channel. The capacity versus outage approach was initially proposed in \cite{inftheoretic} for fading channels. Later, this approach was applied to multi-antenna channels in \cite{telatar}, where the analysis was restricted to MIMO Rayleigh fading channels. However, to the best of our knowledge, no rigorous proof of the outage transmission capacity of SISO and MIMO slow fading channels with arbitrary state distribution is provided in the literature. For instance, the capacity formula provided in the literature for the SISO case is not valid
when the distribution function of the absolute value of the state is discontinuous. 

The first contribution of this paper lies in deriving a lower and an upper bound on the $\eta$-outage transmission capacity of MIMO slow fading channels with average input power constraint, AWGN and arbitrary state distribution. We will show that the bounds coincide except possibly at the points of  discontinuity of the outage transmission capacity, of which there are, at most, countably many and we will show that when the state has a density, which is positive except on a set with Lebesgue measure equal to zero, then the bounds on the $\eta$-outage capacity coincide for all possible values of $\eta,$ regardless of whether the outage capacity is continuous at $\eta$ or not. To prove the lower bound on the outage transmission capacity, we will also establish the capacity of a compound MIMO Gaussian channel corresponding to the set of MIMO Gaussian channels that are not in outage for some fixed input covariance matrix and target rate, and for which the operator norm of the state is upper-bounded by some positive constant.
We will additionally establish the $\eta$-outage transmission capacity of single-input multiple-output (SIMO) slow fading channels and provide an alternative proof of the outage transmission capacity for the SISO case based on the degradedness of SISO Gaussian channels as well as the strong converse for this type of channels. It is here worth-mentioning that the $\eta$-outage capacity formula that we prove for the SISO and the SIMO case hold regardless whether the $\eta$-outage capacity is continuous at $\eta$ or not.
The outage transmission capacity formula that we prove for the SISO case is an extension of the formula presented in the literature to arbitrary state distribution.

 The second contribution of this paper lies in introducing the concept of outage in the CR generation framework as well as deriving a lower and an upper bound on the $\eta$-outage CR  capacity for a two-source model with one-way communication over MIMO slow fading channels with AWGN and arbitrary state distribution. In the CR generation framework, outage occurs when the channel state is so poor that Alice and Bob cannot agree on a common random variable with high probability. The $\eta$-outage CR capacity is defined to be the maximum of all achievable CR rates for which the outage probability from the CR generation perspective does not exceed $\eta.$ In the proof of the bounds on the $\eta$-outage CR capacity, we will use our bounds on the $\eta$-outage transmission capacity of MIMO slow fading channels.

\quad \textit{ Paper Outline:} Section \ref{sec2} describes the system model and provides the key definitions as well as the main and auxiliary results. In Section \ref{proofoutagecapacity}, we derive a lower and an upper bound on the $\eta$-outage transmission capacity of MIMO slow fading channels with average input power constraint, AWGN and with arbitrary state distribution. In Section \ref{SIMOSISO}, we establish the $\eta$-outage transmission capacity for the SIMO case and provide an alternative proof of it for the SISO case.  Section \ref{proofoutagecrcapacity} is devoted to the derivation of a lower  and an upper bound on the $\eta$-outage CR capacity for a two-source model with unidirectional communication  over MIMO slow fading channels. In Section \ref{proofcompoundcapacity}, we establish the capacity of a specific compound MIMO complex Gaussian channel.  Section \ref{conclusion} contains concluding remarks and proposes
potential future research in this field.

\quad \textit{Notation:}  
$\mathbb{C}$ denotes the set of complex numbers and $\mbb R$ denotes the set of real numbers; $H(\cdot)$ and $h(\cdot)$  correspond to the entropy and the differential entropy function, respectively; $I(\cdot;\cdot)$ denotes the mutual information between two random variables. All information
quantities are taken to base 2. Throughout the paper,  $\log$ is taken to  base 2.  The natural exponential and the natural logarithm are denoted by $\exp$ and $\ln$, respectively.  For any random variables $X$, $Y$ and $Z$, we use the notation $\color{black}X \circlearrow{Y} \circlearrow{Z}\color{black}$ to indicate a Markov chain.
 $\mathcal{T}_{U}^{n}$ denotes the set of $U$-typical sequences of block-length $n$ and of type $P_{U}$. For any matrix $\mbf A,$ $\text{tr}(\mbf A)$ refers to the trace of $\mbf A,$ $\lVert \mbf A\rVert$ stands for the operator norm of $\mbf A$ with respect to the Euclidean norm,$\mbf A^{H}$ stands for the standard Hermitian transpose of $\mbf A$ and $\mbf A^{-1}$ refers to the matrix inverse of $\mbf A.$ For any random matrix $\mbf A \in \mathbb{C}^{m \times n}$ with entries $\mbf A_{i,j}$ $i=1,\hdots,m, j=1,\hdots, n,$ we define
\[
\mbb E \left[\mbf A\right] = \begin{bmatrix} 
    \mbb E\left[\mbf A_{11}\right] &  \mbb E\left[\mbf A_{12}\right] & \dots \\
    \vdots & \ddots & \\
    \mbb E\left[\mbf A_{m1}\right] &        &  \mbb E\left[\mbf A_{mn}\right]
    \end{bmatrix}.
\]
 For any random vector $\bs{X},$ $\text{cov}(\bs{X})$ refers to its covariance matrix. For any set $\mathcal{E}$, $\mathcal{E}^c$ refers to its  complement and $|\mathcal{E}|$ refers to its cardinality.

\section{System Model, Definitions and  Results}
\label{sec2}
\subsection{System Model}
\label{systemmodel}
Let a MIMO slow fading channel $W_{\mbf G}$ be given.
First, we define the MIMO slow fading channel $W_{\mbf G}.$ Suppose that one terminal called Terminal $A$
 wants to transmit a message to another terminal called  Terminal $B$ by sending, for arbitrary $n>0,$ an input sequence $\bs{t}^n=(\bs{t}_1,\hdots,\bs{t}_n)\in\mbb C^{N_{T}\times n}$ of block-length $n$ over the MIMO slow fading channel. Terminal $B$ observes the output sequence 
$\bs{z}^n=(\bs{z}_1,\hdots,\bs{z}_n)\in \mbb C^{N_{R}\times n}$ of block-length $n$ such that
\begin{align}
\bs{z}_{i}=\mbf G\bs{t}_{i}+\bs{\xi}_{i} \quad i=1, \hdots,n.
\label{MIMOchannelmodel} \nonumber
\end{align} 
Here, $N_T$ and $N_R$ refer to the number of transmit and receive antennas, respectively.
$\mbf G \in \mbb C^{N_{R}\times N_{T}}$ models the random complex gain, where we assume that both terminals $A$ and $B$ know  the distribution of the gain $\mbf G$ and that the actual realization of the gain is known  by Terminal $B$ only. \color{black} \
 $\bs{\xi}^n=(\bs{\xi}_1,\hdots,\bs{\xi}_n)\in \mathbb{C}^{N_{R}\times n}$ models the noise sequence.
We assume that the $\bs{\xi}_i$s are i.i.d. such that $\bs{\xi}_{i} \sim \mathcal{N}_{\mathbb{C}}\left(\bs{0}_{N_R},\sigma^2 \mbf I_{N_{R}}\right), i=1, \hdots,n.$ We further assume that $\mbf G$ and $\bs{\xi}^n$ are mutually independent.

We are interested in the problem of \textit{common randomness (CR) generation} over $W_{\mbf G}.$ 
Let a discrete memoryless multiple source (DMMS) $P_{XY}$ with two components, with  generic variables $X$ and $Y$ on alphabets $\mathcal{X}$ and $\mathcal{Y}$, respectively, be given. The DMMS emits i.i.d. samples of $(X,Y).$
Suppose that the outputs of $X$ are observed only by Terminal $A$ and those of $Y$ only by Terminal $B.$ We further assume that the joint distribution of $(X,Y)$ is known to both terminals.
Terminal $A$
can communicate with Terminal $B$ over the MIMO slow fading channel $W_{\mbf G}.$
We also assume that $(X^n,Y^n)$ is independent of $(\mbf G,\bs{\xi}^n).$  There are no other resources available to any of the terminals. 
\begin{definition}
A CR-generation protocol of block-length $n$ consists of:
\begin{enumerate}
    \item A function $\Phi$ that maps $X^n$ into a random variable $K$ with alphabet $\mathcal{K}$ generated by Terminal $A.$
    \item A function $\Lambda$ that maps $X^n$ into the input sequence $\bs{T}^n \in \mbb C^{N_T\times n}$ satisfying the power constraint
    \begin{equation}
\frac{1}{n}\sum_{i=1}^{n}\bs{T}_{i}^H\bs{T}_{i}\leq P, \quad \text{almost surely.}   \ \
\label{energyconstraintMIMOCorrelated}
\end{equation}
    \item A function $\Psi$ that maps $Y^n$ and the  output sequence $\bs{Z}^n=(\bs{Z}_1,\hdots,\bs{Z}_n) \in \mbb C^{N_R\times n}$ into a random variable $L$ with alphabet $\mathcal{K}$ generated by Terminal $B.$
\end{enumerate}
Such a protocol induces a pair of random variables $(K,L)$ whose joint distribution is determined by $P_{XY}$ and by the channel $W_\mbf G$. Such a pair of random variables $(K,L)$ is called permissible.
This is illustrated in Fig. \ref{correlatedMIMO}.
\end{definition}
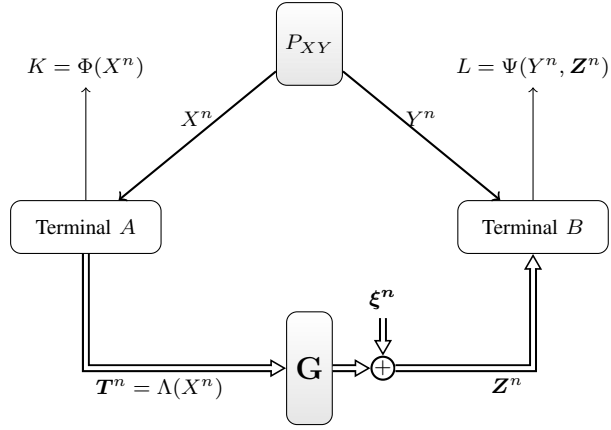
\begin{figure}
\centering
\tikzstyle{block} = [draw, rectangle, rounded corners,
minimum height=2em, minimum width=2cm]
\tikzstyle{blocksource} = [draw, top color=white, bottom color=white!80!gray, rectangle, rounded corners,
minimum height=1.1cm, minimum width=.31cm]
\tikzstyle{blockchannel} = [draw, top color=white, bottom color=white!80!gray, rectangle, rounded corners,
minimum height=1.5cm, minimum width=.35cm]
\tikzstyle{input} = [coordinate]
\tikzstyle{vectorarrow} = [thick, decoration={markings,mark=at position
   1 with {\arrow[semithick]{open triangle 60}}},
   double distance=1.4pt, shorten >= 5.5pt,
   preaction = {decorate},
   postaction = {draw,line width=1.4pt, white,shorten >= 4.5pt}]
   \tikzstyle{sum} = [draw, circle,inner sep=0pt, minimum size=2mm,  thick]
\usetikzlibrary{arrows}
\begin{tikzpicture}[scale= 1,font=\footnotesize]
\node[blocksource] (source) {$P_{XY}$};
\node[blockchannel, below=3cm of source](channel) {\large$\mathbf{G}$};
\node[sum, right=.5cm of channel] (sum) {$+$};

\node[block, below left=2.15cm of source] (x) {Terminal $A$};
\node[block, below right=2.15cm of source] (y) {Terminal $B$};
\node[above=.5cm of sum] (noise) {$\bs{\xi^n}$};
\node[above=1.5cm of x] (k) {$K=\Phi(X^n)$};
\node[above=1.5cm of y] (l) {$L=\Psi(Y^n,\bs{Z}^n)$};

\draw[->,thick] (source) -- node[above] {$X^n$} (x);
\draw[->, thick] (source) -- node[above] {$Y^n$} (y);
\draw [vectorarrow] (x) |- node[below right] {$\bs{T}^n=\Lambda(X^n)$} (channel);
\draw [vectorarrow] (channel) -- (sum);
\draw[vectorarrow] (noise) -- (sum);
\draw[vectorarrow] (sum) -| node[below left] {$\bs{Z}^n$} (y);
\draw[->] (x) -- (k);
\draw[->] (y) -- (l);

\end{tikzpicture}
\caption{Standard two-source model \cite{part2} with unidirectional communication over a  MIMO slow fading channel}
\label{correlatedMIMO}
\end{figure}
We define first an achievable $\eta$-outage CR rate and the $\eta$-outage CR capacity for the model presented above. This is an extension of the definition of an achievable CR rate and of the CR capacity over rate-limited discrete noiseless channels introduced in \cite{part2}.
\begin{definition} 
\label{etaoutagecrrate}
Fix a non-negative constant $\eta<1.$ A number $H$ is called an achievable $\eta$-outage CR rate  if there exists a non-negative constant $c$ such that for every $\alpha>0$ and $\delta>0$ and for sufficiently large $n$ there exists a permissible pair of random variables $(K,L)$ such that
\begin{equation}
    \mbb P\left[\mbb P\left[K\neq L|\mbf G\right]\leq \alpha \right]\geq 1-\eta, 
    \label{errorMIMOcorrelated}
\end{equation}
\begin{equation}
    |\mathcal{K}|\leq 2^{cn},
    \label{cardinalityMIMOcorrelated}
\end{equation}
\begin{equation}
    \frac{1}{n}H(K)> H-\delta,
     \label{rateMIMOcorrelated}
\end{equation}
where the constant $0 \leq \eta <1$ and the constant $\alpha>0$ in \eqref{errorMIMOcorrelated} correspond to an upper-bound on the outage probability and to an upper-bound on the error probability, from the common randomness generation perspective, respectively, and where the outer probability in \eqref{errorMIMOcorrelated} is with respect to $\mbf G.$
\end{definition}

\begin{remark}
\label{fastgleichentropyrate}
Together with \eqref{errorMIMOcorrelated}, the technical condition  \eqref{cardinalityMIMOcorrelated} ensures for every $\epsilon>0$ and sufficiently large block-length $n$ that $\mbb P\left[\mbf G \in \seta^{(n,\epsilon)}\right]\geq 1-\eta,$
where
$$\seta^{(n,\epsilon)} = \bigg\{ \mbf g \in \mbb C^{N_{R}\times N_{T}}: \bigg| \frac{H(K|\mbf G=\mbf g)}{n}-\frac{H(L|\mbf G=\mbf g)}{n} \bigg| \leq \epsilon \bigg\}.$$
 This follows from the analogous statement in\cite{part2}.
\end{remark}
\begin{remark}
\label{remarkUCR}
The most convenient form of CR is \textit{uniform} CR, i.e., $K$ and $L$ are uniform (or nearly uniform) random variables \cite{part2}. In Section \ref{prooflowerboundcrcapacity}, we will provide a scheme for generation of nearly uniform random variables that coincide with high probability when the system is not in outage from the CR generation perspective.
\end{remark}
\begin{definition} 
The $\eta$-outage CR capacity $C_{\eta,CR}^{X,Y}(P,W_\mbf G)$ is the maximum achievable $\eta$-outage CR rate defined according to Definition \ref{etaoutagecrrate}.
\end{definition}
Next, we define an achievable $\eta$-outage transmission rate for the MIMO slow fading channel $W_{\mbf G}$ and the corresponding $\eta$-outage transmission capacity.
For this purpose, we begin by providing the definition of a  transmission-code for  $W_{\mbf G}.$
\begin{definition}
\label{defcode}
A transmission-code $\Gamma$ of block-length $n$ and size \footnote{\text{This is the same notation used in} \cite{codingtheorems}.} $\lVert \Gamma \rVert $ and with average power constraint $P$ for the MIMO channel $W_{\mbf G}$ is a family of pairs of codewords and decoding regions $\left\{(\mbf{t}_\ell,\setd_\ell^{(\mbf g)}): \mbf g \in \mbb C^{N_R\times N_T}, \ \ell=1,\ldots,\lVert \Gamma \rVert \right\}$ such that for all $\ell,j \in \{1,\ldots,\lVert \Gamma \rVert\}$ and all $\mbf g \in \mbb C^{N_R \times N_T}:$ 
\begin{align}
& \mbf{t}_\ell \in \mbb C^{N_{T}\times n},\quad \setd_\ell^{(\mbf g)} \subset \mbb C^{N_{R}\times n}, \nonumber \\
&\frac{1}{n}\sum_{i=1}^{n}\bs{t}_{\ell,i}^H\bs{t}_{\ell,i}\leq P 
 \ \ \mbf{t}_\ell=(\bs{t}_{\ell,1},\hdots,\bs{t}_{\ell,n}), \nonumber \\
&\setd_\ell^{(\mbf g)}  \cap \setd_j^{(\mbf g)} = \emptyset,\quad \ell \neq j. \nonumber
\end{align}The maximum error probability for gain $\mbf g$ is expressed as 
\begin{align}
    e(\Gamma,\mbf g)=\underset{\ell \in \{1,\ldots,\lVert \Gamma \rVert\}}{\max}W_{\mbf g}({\setd_\ell^{(\mbf g)_c}}|\mbf{t}_\ell). \nonumber
\end{align}
\end{definition}
\begin{remark}
Since we do not assume any channel state information at the transmitter side, the codewords $\mbf{t}_\ell, \ \ell=1,\hdots,\lVert \Gamma \rVert,$ do not depend on the gain.
\end{remark}
\begin{remark}
Throughout the paper, we consider the maximum error probability criterion. However, due to the converse, the rate and capacity expressions hold also for the average error probability criterion.
\end{remark}
\begin{definition}
\label{deftransmissionrate}
    Let $0 \leq \eta<1$. A real number $R$ is called an \textit{achievable} $\eta$-\textit{outage transmission rate} of the channel $W_{\mbf G}$ if for every $\theta,\delta>0$ there exists a code sequence $(\Gamma_n)_{n=1}^\infty$, where each code $\Gamma_n$ of block-length $n$ is defined according to Definition \ref{defcode},  such that
    \[
        \frac{\log\lVert \Gamma_n\rVert}{n}\geq R-\delta
    \]
    and
    \begin{align}
        \mbb P[e(\Gamma_n,\mbf G)\leq\theta]\geq 1-\eta 
        \label{errorouterprob}
    \end{align}
    for sufficiently large $n,$ where the probability in \eqref{errorouterprob} is with respect to $\mbf G.$
\end{definition}
\begin{definition}
The $\eta$-\textit{outage transmission capacity} of the channel $W_{\mbf G}$ is the supremum of all achievable $\eta$-outage transmission rates defined according to Definition \ref{deftransmissionrate} and it is denoted by $C_\eta(P,W_\mbf G)$.
\end{definition}
\subsection{Main Results}
\begin{theorem}
   \label{cetathmMIMO}
Let $\mc Q_{P}$  be the set of complex positive semi-definite Hermitian $N_T\times N_T$ matrices  whose trace is smaller than or equal to $P.$
 For any $\mbf g \in \mbb C^{N_R\times N_T}$and any $\mbf Q \in \mc Q_{P},$ we define
 \begin{align}
     f(\mbf g,\mbf Q)=\log\det(\mathbf{I}_{N_{R}}+\frac{1}{\sigma^2}\mathbf{g}\mathbf{Q}\mathbf{g}^{H}).
     \label{fgQ}
 \end{align}
   Let $\mbf G \in \mbb C^{N_{R}\times N_{T}}$ be any random matrix. Then, the $\eta$-outage transmission capacity of the channel $W_{\mbf G}$ satisfies
 \begin{align}
C_\eta(P,W_\mbf G)\geq l(\eta)
\label{lowerboundoutagecapacity}
\end{align}
and
\begin{align}
    C_\eta(P,W_\mbf G)\leq u(\eta),
\label{upperboundoutagecapacity}
\end{align}
where
\begin{align}
   l(\eta)=\sup \ \Big\{R: \underset{\mathbf{Q}\in\mc Q_{P}}{\inf }\mbb P\left[f(\mbf G,\mbf Q)<R \right] < \eta\Big\}
\label{Retasupell}
\end{align}
and
\begin{align}
    u(\eta)= \sup \ \Big\{R: \underset{\mathbf{Q}\in\mc Q_{P}}{\inf }\mbb P\left[f(\mbf G,\mbf Q)<R \right] \leq \eta\Big\}. 
    \label{Retasupu}
\end{align}
The lower and upper bound in \eqref{lowerboundoutagecapacity} and \eqref{upperboundoutagecapacity} hold with equality except possibly at the points of discontinuity of $C_\eta(P,W_\mbf G),$ of which there are, at most, countably many. Furthermore, if  $\mbf G$ has a density, which is positive except on a set with Lebesgue measure equal to zero, then the bounds in \eqref{lowerboundoutagecapacity} and \eqref{upperboundoutagecapacity} coincide for all possible values of $\eta,$ regardless of whether $C_{\eta}(P,W_\mbf G)$ is continuous at $\eta$ or not and it holds that
\begin{align}
    C_{\eta}(P,W_\mbf G)=\max \ \Big\{R: \underset{\mathbf{Q}\in\mc Q_{P}}{\min }\mbb P\left[f(\mbf G,\mbf Q)<R \right] \leq \eta\Big\}. \label{outagecapacitycontinuitystrongmontonewachsend} 
\end{align}
\end{theorem}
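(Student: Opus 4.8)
The plan is to route everything through the single auxiliary function
\[
g(R)\;=\;\inf_{\mbf Q\in\mc Q_P}\mbb P\!\left[f(\mbf G,\mbf Q)<R\right],
\]
which is nondecreasing in $R$ since each map $R\mapsto\mbb P[f(\mbf G,\mbf Q)<R]$ is nondecreasing and an infimum of nondecreasing functions is nondecreasing. With this notation $l(\eta)=\sup\{R:g(R)<\eta\}$ and $u(\eta)=\sup\{R:g(R)\le\eta\}$ are exactly the left- and right-continuous generalized inverses of $g$, so $l\le u$ for free. I would first prove the two one-sided bounds $l(\eta)\le C_\eta(P,W_{\mbf G})\le u(\eta)$ and only afterwards compare $l$ and $u$.

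\emph{Lower bound (achievability).} Fix $R<l(\eta)$. By monotonicity of $g$ this forces $g(R)<\eta$, so there is a \emph{single} covariance $\mbf Q^\ast\in\mc Q_P$ with $\mbb P[f(\mbf G,\mbf Q^\ast)<R]<\eta$. Since $\mbb P[\lVert\mbf G\rVert>A]\to0$ as $A\to\infty$, I choose $A$ so large that the bad set $\{f(\cdot,\mbf Q^\ast)<R\}\cup\{\lVert\cdot\rVert>A\}$ still has probability $<\eta$. On the complementary compound set $\mc S_A=\{\mbf g:f(\mbf g,\mbf Q^\ast)\ge R,\ \lVert\mbf g\rVert\le A\}$ every channel obeys $f(\mbf g,\mbf Q^\ast)\ge R$, so with Gaussian inputs of covariance $\mbf Q^\ast$ the compound MIMO Gaussian capacity over $\mc S_A$ (established separately, where the truncation $\lVert\mbf g\rVert\le A$ is precisely what compactifies the state family and makes the compound coding theorem applicable) is at least $\inf_{\mbf g\in\mc S_A}f(\mbf g,\mbf Q^\ast)\ge R$. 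Hence for every $\theta,\delta>0$ and large $n$ there is a code of rate $\ge R-\delta$ with maximal error $\le\theta$ uniformly over $\mc S_A$; as $\mbb P[\mbf G\in\mc S_A]\ge1-\eta$, condition \eqref{errorouterprob} holds and $R$ is an achievable $\eta$-outage rate. Letting $R\uparrow l(\eta)$ gives \eqref{lowerboundoutagecapacity}. I expect this compound coding theorem to be the single hardest ingredient, which is why the bound must be stated for the norm-bounded set $\mc S_A$ rather than for the full, possibly unbounded, outage complement.

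\emph{Upper bound (converse).} Let $R$ be achievable and fix small $\theta,\delta>0$; take the corresponding code sequence with $\mbb P[e(\Gamma_n,\mbf G)\le\theta]\ge1-\eta$ and set $\mc G_n=\{\mbf g:e(\Gamma_n,\mbf g)\le\theta\}$. Let $\bar{\mbf Q}_n\in\mc Q_P$ be the average input covariance of $\Gamma_n$, which satisfies $\text{tr}\,\bar{\mbf Q}_n\le P$ by the power constraint and does \emph{not} depend on $\mbf g$. For each $\mbf g\in\mc G_n$ the channel $W_{\mbf g}$ is a fixed memoryless Gaussian MIMO channel, so Fano's inequality together with the single-letterization $I(M;\bs Z^n\mid\mbf g)\le n\,f(\mbf g,\bar{\mbf Q}_n)$ (Gaussian inputs maximize for a given covariance, then concavity of $\log\det$ and Jensen) yields $f(\mbf g,\bar{\mbf Q}_n)\ge(R-\delta)(1-\theta)-\tfrac1n$ for all $\mbf g\in\mc G_n$. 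Thus $\{f(\cdot,\bar{\mbf Q}_n)<(R-\delta)(1-\theta)-\tfrac1n\}\subseteq\mc G_n^c$, whence $g\big((R-\delta)(1-\theta)-\tfrac1n\big)\le\mbb P[\mbf G\in\mc G_n^c]\le\eta$ and therefore $(R-\delta)(1-\theta)-\tfrac1n\le u(\eta)$. Sending $n\to\infty$ and then $\theta,\delta\downarrow0$ (legitimate because achievability holds for \emph{every} $\theta,\delta$) gives $R\le u(\eta)$, i.e.\ \eqref{upperboundoutagecapacity}. Notably this needs only the weak converse, the extra degree of freedom being that $\theta$ may be driven to $0$.

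\emph{Coincidence and the density case.} The functions $l$, $u$, and $C_\eta(P,W_{\mbf G})$ are all nondecreasing in $\eta$, and a direct check from the definitions gives $l(\eta)=u(\eta^-)$, the left limit of $u$. A monotone function has at most countably many discontinuities, so $u(\eta^-)=u(\eta)$, hence $l(\eta)=u(\eta)=C_\eta(P,W_{\mbf G})$, for all but countably many $\eta$; the sandwich $l\le C_\eta\le u$ then shows the bounds can disagree only at discontinuities of $C_\eta(P,W_{\mbf G})$, of which there are at most countably many. Finally, if $\mbf G$ has a density that is positive off a Lebesgue-null set, then for every $\mbf Q\ne\mbf 0$ the map $\mbf g\mapsto f(\mbf g,\mbf Q)$ is real-analytic and nonconstant, so its level sets are Lebesgue-null and $R\mapsto\mbb P[f(\mbf G,\mbf Q)<R]$ is continuous; with compactness of $\mc Q_P$ the infimum becomes a minimum and $g$ becomes continuous, while positivity of the density on the nonempty open sets $\{R_1<f(\cdot,\mbf Q)<R_2\}$ forces $g$ to be strictly increasing. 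A continuous strictly increasing $g$ has $l(\eta)=u(\eta)$ for \emph{every} $\eta$ with both extrema attained, which yields \eqref{outagecapacitycontinuitystrongmontonewachsend}.
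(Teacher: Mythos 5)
Your proposal is correct and covers all four assertions of the theorem; it follows the paper's overall architecture (compound-channel achievability, Fano converse, monotone-function analysis, with Theorem \ref{capacitycompoundchannels} taken as given, exactly as the paper does), but it differs in two substantive ways. For the lower bound \eqref{lowerboundoutagecapacity}, both arguments apply the compound capacity result to a norm-truncated non-outage set, but the paper works at the boundary rate $l(\eta)-\mu\epsilon$, where only $\mbb P[f(\mbf G,\hat{\mbf Q})<l(\eta)-\mu\epsilon]\le\eta$ is available; having no probability slack, it must also serve the channels with $\lVert\mbf g\rVert>a$, which it does through a degradedness (Shannon-ordering) argument with CSIR-adjusted decoders, and this in turn requires Lemma \ref{existencenonsingularcovariancematrix} to produce a non-singular $\hat{\mbf Q}$ so that the sphere $\{\lVert\mbf g\rVert=a\}$ lies inside the compound set. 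You instead prove achievability of every $R<l(\eta)$, where the strict inequality $\inf_{\mbf Q}\mbb P[f(\mbf G,\mbf Q)<R]<\eta$ leaves room to absorb $\mbb P[\lVert\mbf G\rVert>A]$, and then let $R\uparrow l(\eta)$; this removes the degradedness step and the non-singularity lemma entirely, a genuine simplification (your set $\mc S_A$ is closed in the ball of radius $A$, so the compound theorem applies verbatim). Your converse is the same Fano-plus-single-letterization bound built on the state-independent averaged covariance $\bar{\mbf Q}_n\in\mc Q_P$, just organized directly---the inclusion $\{f(\cdot,\bar{\mbf Q}_n)<(R-\delta)(1-\theta)-\tfrac1n\}\subseteq\{e(\Gamma_n,\cdot)>\theta\}$ yields $(R-\delta)(1-\theta)-\tfrac1n\le u(\eta)$---rather than by contradiction; both are sound. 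The coincidence claim via the generalized-inverse identity $l(\eta)=u(\eta^-)$ plus the sandwich $l\le C_\eta\le u$ is the paper's sequence argument in compressed form, and your density-case reasoning (null level sets of the real-analytic $f(\cdot,\mbf Q)$ give continuity of each $s_{\mbf Q}$, compactness of $\mc Q_P$ turns the infimum into a minimum, positivity of the density on open preimages gives strict monotonicity) matches Lemma \ref{strongmonotonicitiy}, while additionally justifying that the supremum in \eqref{outagecapacitycontinuitystrongmontonewachsend} is attained, a point the paper treats only implicitly. One caveat you share with the paper: the open-preimage and strict-monotonicity claims fail for $\mbf Q=\mbf 0$ (where $f(\cdot,\mbf 0)\equiv 0$); this is harmless because for $R>0$ and a positive density the minimizing $\mbf Q$ is never $\mbf 0$ (since $\mbb P[f(\mbf G,\mbf 0)<R]=1$ while non-zero $\mbf Q$ give probability strictly less than one), but the exclusion should be stated explicitly.
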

The proof of the Theorem \ref{cetathmMIMO} is provided in Section \ref{proofoutagecapacity}.
\begin{remark}
The outage capacity formula in \eqref{outagecapacitycontinuitystrongmontonewachsend} is the one provided in \cite{Tse} for MIMO slow Rayleigh fading channels.
\end{remark}
    \begin{theorem}
    \label{outagecapacitySIMOSISO}
If $N_T=1,$ then the $\eta$-outage transmission capacity of the SIMO slow fading channel $W_{\mbf G}$ is equal to
\begin{align}
    C_\eta(P,W_\mbf G) =\sup \ \Big\{R: \mbb P\left[\log\det(\mathbf{I}_{N_{R}}+\frac{P}{\sigma^2}\mathbf{G}\mathbf{G}^{H})<R \right] \leq \eta\Big\}, \nonumber
\end{align}
regardless of whether it is continuous at $\eta$ or not. 

Furthermore, if $N_T=N_R=1,$ then the $\eta$-outage transmission capacity of the SISO slow fading channel $W_{\mbf G}$ is equal to
 
 \begin{align}   
        C_\eta(P,W_\mbf G)=\log\left(1+\frac{P\gamma_0}{\sigma^2}\right),\label{outageTSE}
    \end{align}
    where 
    \begin{align}
     \gamma_0=\inf\{\gamma:\mbb P[\lvert \mbf G\rvert^2\geq \gamma]\geq 1-\eta\}
     \label{generalizedinverse}
    \end{align}
    is the generalized inverse of the complementary cdf of $|\mbf G|^2.$
    \end{theorem}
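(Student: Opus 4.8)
The plan is to collapse the statement to a scalar problem and then pair a degradedness-based achievability scheme with a strong-converse argument. First I would simplify $f(\mbf g,\mbf Q)$ in \eqref{fgQ} for $N_T=1$: the input covariance $\mbf Q\in\mc Q_P$ is then a scalar $q\in[0,P]$ and $\mbf g\in\mbb C^{N_R\times1}$, so the matrix determinant lemma gives $f(\mbf g,q)=\log\bigl(1+\tfrac{q}{\sigma^2}\norm{\mbf g}^2\bigr)$, which is strictly increasing in $q$. Hence the infimum over $\mc Q_P$ in \eqref{Retasupu} is attained at $q=P$, and the SIMO expression follows from $\log\det(\mbf I_{N_R}+\tfrac{P}{\sigma^2}\mbf G\mbf G^H)=\log(1+\tfrac{P}{\sigma^2}\norm{\mbf G}^2)$. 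Since the receiver knows $\mbf g$ (CSIR), maximal-ratio combining produces a sufficient statistic that is exactly a SISO AWGN channel of gain $\norm{\mbf g}$; this reduces the SIMO claim to the SISO claim with $|\mbf G|^2$ replaced by $\norm{\mbf G}^2$. I would therefore prove the SISO statement and read off SIMO from it.

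For achievability I would use a single point-to-point Gaussian codebook tuned to the worst admissible channel and let every better channel \emph{simulate} the worst one. Concretely, fix a target rate $R$ with $R\le C_\eta$, let $\gamma_R$ solve $\log(1+P\gamma_R/\sigma^2)=R$, and observe that $\mbb P[\,|\mbf G|^2\ge\gamma_R\,]=1-\mbb P[f(\mbf G,P)<R]\ge1-\eta$; the boundary case $R=C_\eta$ is reached because $t\mapsto\mbb P[|\mbf G|^2<t]$ is left-continuous, so the quantile $\gamma_0$ of \eqref{generalizedinverse} satisfies $\mbb P[|\mbf G|^2<\gamma_0]\le\eta$. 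I then take a capacity-approaching code of rate $R-\delta$ for the AWGN channel of SNR $P\gamma_R/\sigma^2$ (whose capacity is exactly $R$). For any realized gain with $|\mbf g|^2\ge\gamma_R$, the receiver adds independent circularly symmetric Gaussian noise to drive the effective SNR down to $P\gamma_R/\sigma^2$, obtaining a channel statistically identical to the design channel, and applies the fixed decoder; its error is $\le\theta$ simultaneously for every such $\mbf g$. Hence $\mbb P[e(\Gamma_n,\mbf G)\le\theta]\ge\mbb P[|\mbf G|^2\ge\gamma_R]\ge1-\eta$, so $R$ is an achievable $\eta$-outage rate. This noise-injection (stochastic degradedness) dispenses with any bound on the operator norm and pushes the achievable rate all the way to the quantile, i.e.\ to the ``$\le\eta$'' form, rather than only to the strict ``$<\eta$'' bound $l(\eta)$ of Theorem \ref{cetathmMIMO}.

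For the converse I would invoke the strong converse for the AWGN channel. Suppose $R>C_\eta$ and define $\gamma(R)$ by $R=\log(1+P\gamma(R)/\sigma^2)$, so $\gamma(R)>\gamma_0$ and $\mbb P[|\mbf G|^2<\gamma(R)]>\eta$. Picking $\kappa>0$ small, continuity from below keeps $\mbb P[|\mbf G|^2<\gamma(R)-\kappa]>\eta$, and on this set the rate $R$ exceeds the capacity $\log(1+P|\mbf g|^2/\sigma^2)$ of the corresponding AWGN channel by a uniform gap; the strong converse then forces $e(\Gamma_n,\mbf g)\to1$ uniformly, so $e(\Gamma_n,\mbf g)>\theta$ there for all large $n$. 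Therefore $\mbb P[e(\Gamma_n,\mbf G)\le\theta]\le1-\mbb P[|\mbf G|^2<\gamma(R)-\kappa]<1-\eta$ eventually, contradicting \eqref{errorouterprob}; no $R>C_\eta$ is achievable. The upper bound $C_\eta\le u(\eta)$ is also available directly from Theorem \ref{cetathmMIMO}, but the strong converse is what keeps it tight even where $C_\eta$ jumps.

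The main obstacle I anticipate is the boundary behaviour, namely showing the supremum is attained in the ``$\le\eta$'' form rather than only the strict form. This is precisely where the scalar structure is indispensable: the total ordering of channels by $|\mbf g|^2$ (equivalently $\norm{\mbf g}^2$) yields stochastic degradedness, so one worst-case code serves the entire non-outage set and left-continuity lets the rate reach $\gamma_0$; no such total order exists for general MIMO gains, which is why only the countably-many-discontinuities statement survives there. The second delicate point is applying the strong converse in the correct finite-$\theta$ form, guaranteeing $e(\Gamma_n,\mbf g)>\theta$ for the prescribed fixed $\theta<1$ on a state set of probability exceeding $\eta$ with a uniform rate gap; the weak converse would not suffice.
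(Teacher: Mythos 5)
Your proposal is correct, but its two halves compare very differently with the paper. The SISO half is essentially the paper's own ``alternative proof'': the paper likewise builds a single code for the boundary channel $W_s$ with $|s|^2=\gamma_0$ and re-uses its encoder at every gain $|g|^2\ge\gamma_0$ via degradedness plus CSIR (your noise injection is just the explicit construction of that degradation --- note you must also rescale the received signal by $s/g$ before adding noise, so that the simulated channel has gain exactly $s$ and not merely the right SNR), and the paper's converse is the same strong-converse contradiction. Your $\kappa$-slack, which makes the rate gap uniform over $\{\,|g|^2\le\gamma(R)-\kappa\,\}$ before invoking the strong converse, is actually \emph{more} careful than the paper, which asserts error $>\theta$ on all of $\{\,|g|^2<\gamma_1\,\}$ even though the gap vanishes at the boundary. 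The SIMO half is where you genuinely depart: the paper proves achievability by applying its compound-channel theorem (Theorem \ref{capacitycompoundchannels}) to the closed non-outage set $\tilde{\mc G}_{a}$ inside an operator-norm ball, treating gains with $\norm{\mbf g}>a$ by a rescaling/degradedness argument, and it obtains the converse from $C_\eta(P,W_{\mbf G})\le u(\eta)$ in Theorem \ref{cetathmMIMO} combined with Lemma \ref{intermediatelemma} (your observation that the scalar $q=P$ minimizes the outage probability is exactly that lemma). You instead collapse SIMO to SISO by MRC sufficiency, which avoids the compound machinery entirely, is more elementary and self-contained, and makes structurally transparent why the total order of scalar gains forces the quantile (``$\le\eta$'') form; what the paper's route buys is that it exercises the same compound-channel apparatus that powers the general MIMO lower bound, and its SIMO converse comes for free from Theorem \ref{cetathmMIMO}. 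Two caveats on your version: (i) in the converse direction, turning a SIMO code into a SISO code requires a randomized decoder (to regenerate the output component orthogonal to $\mbf g$), which lies outside Definition \ref{defcode}; this is harmless because the strong converse is insensitive to decoder randomization, or you can sidestep it by applying the strong converse directly to the vector Gaussian channel $W_{\mbf g}$, whose point-to-point capacity is $\log\bigl(1+P\norm{\mbf g}^2/\sigma^2\bigr)$; (ii) your strict inequality $\mbb P\bigl[\,|\mbf G|^2<\gamma(R)\,\bigr]>\eta$ for $\gamma(R)>\gamma_0$ is valid only when $\gamma_0$ is read as the sup-type generalized inverse $\sup\{\gamma:\mbb P[\,|\mbf G|^2\ge\gamma\,]\ge1-\eta\}$ --- the paper's converse makes exactly the same appeal at the corresponding step, so this is a shared, not a new, point of fragility.
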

The proof of Theorem \ref{outagecapacitySIMOSISO} is provided in Section \ref{SIMOSISO}.
  \begin{remark}
  If the cdf of $\lvert \mbf G \rvert^2$ is continuous and strictly monotone increasing, then the generalized inverse in \eqref{generalizedinverse} coincides with the normal inverse and the outage transmission capacity formula in \eqref{outageTSE} coincides with the one provided in \cite{Tse}.
  \end{remark}
\begin{theorem}
For the model described in Section \ref{systemmodel}, the $\eta$-outage CR capacity satisfies
\begin{align}
C_{\eta,CR}^{X,Y}(P,W_\mbf G) \geq  
  \underset{ \substack{U \\{\substack{U \circlearrow{X} \circlearrow{Y}\\ I(U;X)-I(U;Y) \leq l(\eta)}}}}{\max} I(U;X)  \label{lowerboundcr}
\end{align}
and
\begin{align}
C_{\eta,CR}^{X,Y}(P,W_\mbf G) \leq  
  \underset{ \substack{U \\{\substack{U \circlearrow{X} \circlearrow{Y}\\ I(U;X)-I(U;Y) \leq u(\eta)}}}}{\max} I(U;X),  \label{upperboundcr}
\end{align}
where
$l(\eta)$ and $ u(\eta)$ are defined in \eqref{Retasupell} and \eqref{Retasupu}, respectively. 
The lower and upper bound in \eqref{lowerboundcr} and \eqref{upperboundcr} hold with equality except at the points where $l(\eta)$ and $u(\eta)$ do not coincide, of which there are, at most, countably many.
\label{ccretathmMIMO}
\end{theorem}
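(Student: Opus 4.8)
The plan is to reduce the $\eta$-outage CR generation problem to the classical CR generation problem over a rate-limited noiseless channel, whose single-letter CR capacity was characterized by Ahlswede and Csisz\'ar in \cite{part2}. There, when Alice and Bob observe $X^n$ and $Y^n$ and Alice may communicate to Bob over a noiseless link of rate at most $R$, the one-way CR capacity equals $\max_{U:\, U \circlearrow{X} \circlearrow{Y},\ I(U;X)-I(U;Y)\leq R} I(U;X)$. The key idea is that the $\eta$-outage transmission capacity $C_\eta(P,W_\mbf G)$ plays the role of the noiseless-link rate $R$: whenever $\mbf G$ is not in outage---an event of probability at least $1-\eta$---a good transmission code delivers its message reliably and the channel behaves as a noiseless link of that rate. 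Since Theorem \ref{cetathmMIMO} sandwiches $C_\eta(P,W_\mbf G)$ between $l(\eta)$ and $u(\eta)$, substituting these two bounds for $R$ produces the lower bound \eqref{lowerboundcr} and the upper bound \eqref{upperboundcr}, respectively.

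For the lower bound I would fix any $U$ with $U \circlearrow{X} \circlearrow{Y}$ and $I(U;X)-I(U;Y) < l(\eta)$, and splice the achievability part of Theorem \ref{cetathmMIMO} together with the binning scheme underlying \cite{part2}. Concretely, Alice draws $U^n$ from a covering codebook jointly typical with $X^n$---of rate $I(U;X)$---sets $K$ to the codeword index, and transmits the associated Slepian--Wolf bin index, of rate $I(U;X)-I(U;Y)$, across $W_\mbf G$ using a transmission code $\Gamma_n$ of that rate---achievable because $I(U;X)-I(U;Y)<l(\eta)\leq C_\eta(P,W_\mbf G)$ by Theorem \ref{cetathmMIMO}---whose outage probability is at most $\eta$. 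Whenever $\mbf G$ is not in outage, the bin index arrives with error at most $\theta$, Bob resolves $U^n$ within the bin using his side information $Y^n$, and sets $L$ to the recovered index, so that $K=L$ with high probability. The event $\{K\neq L\}$ then splits into the outage of $\Gamma_n$---contributing at most $\eta$ to the outer probability in \eqref{errorMIMOcorrelated}---and the conditional typicality/decoding error given non-outage, bounded by $\alpha$; the cardinality bound \eqref{cardinalityMIMOcorrelated} and the rate bound $\tfrac{1}{n}H(K)\geq I(U;X)-\delta$ follow from the codebook size. Maximizing over admissible $U$ yields \eqref{lowerboundcr}.

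For the upper bound I would take an arbitrary achievable $\eta$-outage CR rate $H$ and run the single-letter converse of \cite{part2} conditionally on the gain. By \eqref{errorMIMOcorrelated}, the non-outage set $\{\mbf g:\ \mbb P[K\neq L\mid \mbf G=\mbf g]\leq \alpha\}$ has probability at least $1-\eta$, and on it the scheme behaves as CR generation over a noiseless link whose rate is controlled by the information that $\bs{T}^n$ conveys through $W_\mbf G$. Combining Fano's inequality (via $K=L$ with high probability), the identity $H(K)=I(K;X^n)$, the data-processing inequality, and the standard single-letterization---which introduces a time-sharing variable and an auxiliary $U$ satisfying $U \circlearrow{X} \circlearrow{Y}$---bounds $\tfrac{1}{n}H(K)$ by $I(U;X)$ subject to a communication-rate constraint. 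The crucial point is that this communication rate, being the reliable throughput of $W_\mbf G$ at outage level $\eta$, is upper bounded by the outage transmission capacity and hence by $u(\eta)$ through the converse part of Theorem \ref{cetathmMIMO}, giving \eqref{upperboundcr}. The two bounds coincide wherever $l(\eta)=u(\eta)$, which by Theorem \ref{cetathmMIMO} fails for at most countably many values of $\eta$.

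The hard part will be the converse, and specifically making rigorous that the effective noiseless communication rate is bounded by the \emph{outage} transmission capacity $u(\eta)$ rather than by some averaged or worst-case quantity. The difficulty is that the communication travels over a noisy channel with random state $\mbf G$, so the Fano and data-processing steps of the Ahlswede--Csisz\'ar converse must be carried out conditionally on $\mbf G$ and then recombined across the non-outage event $\{\mbb P[K\neq L\mid \mbf G]\leq\alpha\}$; one must verify that the conditioning on the $(1-\eta)$-probable non-outage set, the single-letterization, and the strong-converse bound on the per-state throughput fit together so that the resulting constraint is exactly $I(U;X)-I(U;Y)\leq u(\eta)$. Reconciling the strict constraint governing $l(\eta)$ with the non-strict one governing $u(\eta)$, which is what produces the at-most-countably-many exceptional values of $\eta$, is the remaining delicate point.
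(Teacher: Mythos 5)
Your lower bound follows essentially the same route as the paper: the Ahlswede--Csisz\'ar binning scheme of \cite{part2}, where $K$ is the covering codeword jointly typical with $X^n$ and only the bin (first) index, of rate $\approx I(U;X)-I(U;Y)<l(\eta)$, is sent over $W_{\mbf G}$ with a code whose error probability is below $\theta$ outside an outage event of probability at most $\eta$; the error, cardinality and rate conditions \eqref{errorMIMOcorrelated}--\eqref{rateMIMOcorrelated} are then checked exactly as you describe (the paper additionally treats the degenerate case $l(\eta)=0$ by invoking the no-communication CR capacity of \cite{part2}).

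The converse, however, has a genuine gap, and it sits precisely at the point you flag as ``the hard part'' without resolving it. Two issues make your plan, as stated, fail. First, you cannot invoke the converse part of Theorem \ref{cetathmMIMO} as a black box: a CR-generation protocol is not a transmission code, so the outage transmission capacity does not directly upper-bound anything in this proof. Second, and more fundamentally, conditioning on the non-outage set $\{\mbf g:\mbb P[K\neq L\mid\mbf G=\mbf g]\leq\alpha\}$ does \emph{not} control the channel throughput: that set has probability at least $1-\eta$ and typically contains exactly the \emph{good} states, for which $f(\mbf g,\mbf Q)$ is arbitrarily large, so $I(K;\bs{Z}^n\mid \mbf G, Y^n)$ restricted to it can far exceed $u(\eta)$. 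The paper's resolution is an intersection (change-of-measure) argument that your sketch is missing. Let $\mbf Q^\star=\frac1n\sum_i \mbf Q_i$ be the time-averaged input covariance (so $\mbf Q^\star\in\mc Q_P$ by the power constraint), and let $R(\mbf Q^\star)=\sup\{R:\mbb P[f(\mbf G,\mbf Q^\star)<R]\leq\eta\}\leq u(\eta)$. By definition of this supremum, the event $\{f(\mbf G,\mbf Q^\star)\leq R(\mbf Q^\star)+\epsilon\}$ has probability strictly greater than $\eta$, while the non-outage event has probability at least $1-\eta$; hence their intersection $\Omega$ has positive probability. One then replaces $\mbf G$ by $\tilde{\mbf G}$ distributed as $\mbf G$ conditioned on $\Omega$ (keeping the protocol fixed); this is legitimate because $(K,Y^n)$ is independent of the gain, so $H(K\mid Y^n)=H(K\mid\tilde{\mbf G},Y^n)$ is unchanged. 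On $\Omega$ one has \emph{simultaneously} the Fano bound (error at most $\alpha$, giving $H(K\mid \tilde{\mbf G},Y^n,\tilde{\bs Z}^n)\leq 1+\alpha c n$ via \eqref{cardinalityMIMOcorrelated}) and the pointwise throughput bound $f(\mbf g,\mbf Q^\star)\leq R(\mbf Q^\star)+\epsilon$, which after the standard chain of data-processing and concavity steps yields $\frac1n H(K\mid Y^n)\leq u(\eta)+\frac1n+\alpha c+\epsilon$. Feeding this into the Csisz\'ar--K\"orner single-letterization (which identifies $\frac{1}{n}H(K\mid Y^n)=I(U;X_J)-I(U;Y_J)$ for the time-shared auxiliary $U$) gives \eqref{upperboundcr}. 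Without this simultaneous control on one positive-probability set -- rather than separate bounds ``recombined across the non-outage event'' -- the constraint $I(U;X)-I(U;Y)\leq u(\eta)$ cannot be extracted.
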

The proof of Theorem \ref{ccretathmMIMO} is provided in Section \ref{proofoutagecrcapacity}.
\subsection{Auxiliary Result}
For the proof the lower bound in Theorem \ref{cetathmMIMO}, we require the following result on the  capacity of a compound MIMO complex Gaussian channel with fixed noise covariance matrix equal to $\sigma^2 \mbf I_{N_{R}}$ and with $N_R\times N_T$
channel matrix  whose operator norm is bounded from above. This is illustrated in what follows. Let $a>0$ be fixed arbitrarily.
Define the set 
 \begin{equation}
    \mc B_a=\{ \mbf g \in \mbb C^{N_{R}\times N_{T}}: \lVert \mbf g \rVert \leq a\}.
    \label{setba} \nonumber
\end{equation}
Let $\mc G_{a}$ be any closed subset of $\mc B_a.$ We define the compound channel $$\mathcal{C}=\{ W_{\mbf g}: \mbf g \in \mc G_{a}\}.$$ 
We define first an achievable transmission rate and the transmission capacity for the compound channel $\mc C.$
\begin{definition}
\label{defcompundrate}
 A real number $R$ is called an \textit{achievable} rate for the compound channel $\mc C=\{ W_{\mbf g}: \mbf g \in \mc G_{a}\} $ if for every $\theta,\delta>0$ and all $\mbf g \in \mc G_{a}$ there exists a code sequence $(\Gamma_n)_{n=1}^\infty,$ where each code $\Gamma_n$ of block-length $n$ is defined according to Definition \ref{defcode},  such that
    \[
        \frac{\log\lVert \Gamma_n\rVert}{n}\geq R-\delta
    \]
    and 
    \[
     \ e(\Gamma_n,\mbf g)\leq\theta,
    \]
    for sufficiently large $n,$ where $e(\Gamma_n,\mbf g)$ is defined in Definition \ref{defcode}.
\end{definition}
\begin{definition}
The compound capacity of  $\mc C$ is the supremum of all achievable rates for $\mc C$ defined according to Definition \ref{defcompundrate}.
\end{definition}
\begin{theorem}
\label{capacitycompoundchannels}
 The compound capacity of $\mathcal{C}$ is equal to $$\underset{\mbf Q \in \mc Q_{P}}{\max}\underset{\mbf g \in \mc G_{a}}{\min} \log\det(\mbf I_{N_{R}}+\frac{1}{\sigma^2}\mbf g \mbf Q \mbf g^H).$$
\end{theorem}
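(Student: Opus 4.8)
I would prove the two inequalities separately, writing $R^\star=\max_{\mbf Q\in\mc Q_{P}}\min_{\mbf g\in\mc G_{a}}f(\mbf g,\mbf Q)$ for the claimed value. First I would record the structural facts that make $R^\star$ well defined and the argument run: $\mc Q_{P}$ is a compact convex set of Hermitian $N_T\times N_T$ matrices, $\mc G_{a}$ is compact (a closed subset of the closed and bounded—hence compact—set $\mc B_a$ in the finite-dimensional space $\mbb C^{N_{R}\times N_{T}}$), and for fixed $\mbf Q$ the map $\mbf g\mapsto f(\mbf g,\mbf Q)$ in \eqref{fgQ} is continuous while for fixed $\mbf g$ the map $\mbf Q\mapsto f(\mbf g,\mbf Q)$ is concave (a composition of $\log\det$ with an affine map). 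Hence $\mbf Q\mapsto\min_{\mbf g\in\mc G_{a}}f(\mbf g,\mbf Q)$ is continuous and concave on the compact set $\mc Q_{P}$ and attains its maximum at some $\mbf Q^\star$, with $R^\star=\min_{\mbf g\in\mc G_{a}}f(\mbf g,\mbf Q^\star)$.

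For the converse I would fix any code sequence achieving rate $R$ in the sense of Definition \ref{defcompundrate}, let the message $W$ be uniform on $\{1,\dots,\lVert\Gamma_n\rVert\}$, and combine Fano's inequality with the usual single-letterization: for every $\mbf g\in\mc G_{a}$,
\begin{align}
nR\leq I(W;\bs Z^n)+n\epsilon_n\leq\sum_{i=1}^{n} I(\bs X_i;\bs Z_i)+n\epsilon_n, \nonumber
\end{align}
where $\bs X_i$ is the $i$-th input symbol induced by $W$ and $\bs Z_i$ the corresponding output under gain $\mbf g$. Because the noise is Gaussian, for each $i$ the mutual information is maximized, for a given second-moment matrix $\mbf S_i=\mbb E[\bs X_i\bs X_i^H]$, by a Gaussian input, so $I(\bs X_i;\bs Z_i)\leq f(\mbf g,\mbf S_i)$. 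Concavity of $f(\mbf g,\cdot)$ gives $\tfrac1n\sum_i f(\mbf g,\mbf S_i)\leq f(\mbf g,\bar{\mbf S})$ with $\bar{\mbf S}=\tfrac1n\sum_i\mbf S_i$, and the power constraint in Definition \ref{defcode} forces $\mathrm{tr}(\bar{\mbf S})\leq P$, i.e. $\bar{\mbf S}\in\mc Q_{P}$. Since $\bar{\mbf S}$ does not depend on $\mbf g$, minimizing over $\mbf g$ yields $R\leq\min_{\mbf g\in\mc G_{a}}f(\mbf g,\bar{\mbf S})+\epsilon_n\leq R^\star+\epsilon_n$; letting $n\to\infty$ gives $R\leq R^\star$.

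For achievability I would fix $\delta>0$ and target rate $R^\star-\delta$. Since $R^\star=\min_{\mbf g\in\mc G_{a}}f(\mbf g,\mbf Q^\star)$, every state already satisfies $f(\mbf g,\mbf Q^\star)\geq R^\star>R^\star-\delta$, so a Gaussian input of covariance $\mbf Q^\star$ is individually good for each state; the whole difficulty is to make one codebook good for all of them at once. I would draw a random codebook with i.i.d. $\mc N_{\mbb C}(\bs 0,\mbf Q^\star)$ symbols, slightly rescaled and expurgated so that the power constraint of Definition \ref{defcode} holds surely. Using compactness of $\mc G_{a}$ and uniform continuity (on $\mc G_{a}$) of the output law of $W_{\mbf g}$, I would pick a finite $\epsilon$-net $\{\mbf g_1,\dots,\mbf g_N\}$. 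For each net point, a threshold (Feinstein) decoder tuned to the known state $\mbf g_j$ achieves maximal error below $\theta/2$ at rate $R^\star-\delta$; a union bound over the \emph{fixed} number $N$ of net points then produces a single codebook simultaneously good at all net points. Finally, for an arbitrary $\mbf g\in\mc G_{a}$ with $\lVert\mbf g-\mbf g_j\rVert\leq\epsilon$, the information density under $\mbf g$ differs from that under $\mbf g_j$ by a quantity that is small with high probability, so the decoder calibrated to the nearest net point still succeeds under the true state, giving achievability of every rate below $R^\star$.

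The hard part will be the last step: passing from the finite net to the full continuum $\mc G_{a}$ uniformly. Both outputs and decision regions live in $\mbb C^{N_{R}\times n}$ and scale with $\sqrt n$, so a purely geometric argument (the per-block displacement between the $\mbf g$- and $\mbf g_j$-outputs is at most $\epsilon\sqrt{nP}$) is delicate to control against the codeword separation. The clean route is to argue at the level of information densities rather than geometry: show that for $\lVert\mbf g-\mbf g_j\rVert\leq\epsilon$ the density $i_{\mbf g}(\bs t^n;\bs z^n)$ is uniformly close to $i_{\mbf g_j}(\bs t^n;\bs z^n)$ with high probability, so that a single threshold decoder calibrated to the ($n$-independent) net survives the worst-case state. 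Making this robustness quantitative—and confirming that $\mbf Q^\star$ is simultaneously near-optimal for every state, which is precisely what the saddle-point/minimax structure of $R^\star$ guarantees—is the technical heart of the argument.
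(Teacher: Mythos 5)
Your proposal follows essentially the same route as the paper: an identical Fano/single-letterization converse (your $\bar{\mbf S}$ is the paper's $\mbf Q^{\star}=\frac{1}{n}\sum_i \mbf Q_i$, shown to lie in $\mc Q_P$ by Lemma \ref{traceQ}), and achievability via a Feinstein-type threshold code that is simultaneously good for a finite subset of states (the paper's Lemma \ref{existenceerror} and Theorem \ref{capacityfinitestate}), followed by a compactness/net argument to cover all of $\mc G_a$. The ``technical heart'' you flag but do not execute is exactly what the paper supplies: Lemma \ref{approximation} bounds the likelihood ratio $W_{\mbf g}(\bs{z}^n|\bs{t}^n)/W_{\hat{\mbf g}}(\bs{z}^n|\bs{t}^n)$ by $2^{cn\lVert \mbf g-\hat{\mbf g}\rVert}$ on the set of output sequences of bounded empirical power, Lemma \ref{lemmadelta} shows that set is missed with exponentially small probability, and since the transferred penalty is linear in the net fineness $\nu$ while the finite-net error exponents are fixed positive constants, choosing $\nu$ small enough keeps all exponents negative.
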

The proof of Theorem \ref{capacitycompoundchannels} is provided in Section \ref{proofcompoundcapacity}.
\section{Proof of Theorem \ref{cetathmMIMO}}
\label{proofoutagecapacity}
\subsection{Proof of the Lower Bound on the Outage Transmission Capacity}
\label{prooflowerboundMIMO}
Under the assumption of the validity of Theorem \ref{capacitycompoundchannels}, which will be proved in Section \ref{proofcompoundcapacity}, we will show that
    \begin{equation}
       C_\eta(P,W_\mbf G)\geq l(\eta)-\mu \epsilon, \nonumber
    \end{equation}
   for some $1\leq \mu \leq 2,$ where $\epsilon$ is an arbitrarily small positive constant and where 
    \begin{align}
   l(\eta)=\sup \ \Big\{R: \underset{\mathbf{Q}\in\mc Q_{P}}{\inf }\mbb P\left[f(\mbf G,\mbf Q)<R \right] < \eta\Big\}.
\nonumber
\end{align}
   Clearly, from the definition of $l(\eta),$  it holds that
    \begin{align}
         P_{\infsub}^{(\epsilon)}=\underset{\mbf Q \in \mc Q_{P}}{\inf }\mbb P\left[f(\mbf G,\mbf Q)<l(\eta)-\epsilon \right] <\eta.
        \nonumber
    \end{align}
We fix $\alpha_1>0$ to be sufficiently small such that $P_{\infsub}^{(\epsilon)}+\alpha_1\leq \eta.$
    We will show in the following lemma that for sufficiently large $n$, we can  choose a non-singular
    $\hat{\mbf Q}\in \mc Q_{P}$ such that for some $1\leq \mu \leq 2,$
    \begin{align}
        \mbb P \left[f(\mbf G,\hat{\mbf Q})<l(\eta)-\mu\epsilon \right] \nonumber
        &\leq \eta. \nonumber 
        \nonumber
    \end{align}
    \begin{lemma}
    \label{existencenonsingularcovariancematrix}
    For sufficiently large $n,$ there exists a non-singular
    $\hat{\mbf Q}\in \mc Q_{P}$ satisfying for some $1\leq \mu \leq 2$
        \begin{align}
        \mbb P \left[f(\mbf G,\hat{\mbf Q})<l(\eta)-\mu \epsilon \right] \nonumber
        &\leq  \eta. \nonumber
        \end{align}
    \end{lemma}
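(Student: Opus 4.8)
The plan is to take the near-optimal covariance matrix handed to us by the definition of $l(\eta)$ and to repair its possible singularity by a small convex perturbation toward a scaled identity. Since $P_{\infsub}^{(\epsilon)}=\inf_{\mbf Q\in\mc Q_{P}}\mbb P[f(\mbf G,\mbf Q)<l(\eta)-\epsilon]$ and $\alpha_1$ was fixed so that $P_{\infsub}^{(\epsilon)}+\alpha_1\le\eta$, the definition of the infimum furnishes a matrix $\mbf Q^{\ast}\in\mc Q_{P}$ with
\[
\mbb P\left[f(\mbf G,\mbf Q^{\ast})<l(\eta)-\epsilon\right]<P_{\infsub}^{(\epsilon)}+\alpha_1\le\eta .
\]
This $\mbf Q^{\ast}$ need not be non-singular, and that is the only reason the statement is not immediate.

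To force non-singularity while staying feasible, I would set $\hat{\mbf Q}=(1-\beta)\mbf Q^{\ast}+\beta\frac{P}{N_T}\mbf I_{N_T}$ for a small $\beta\in(0,1)$ to be calibrated later. Linearity of the trace gives $\text{tr}(\hat{\mbf Q})=(1-\beta)\text{tr}(\mbf Q^{\ast})+\beta P\le(1-\beta)P+\beta P=P$, so $\hat{\mbf Q}\in\mc Q_{P}$, and since $\beta\frac{P}{N_T}\mbf I_{N_T}\succ 0$ the matrix $\hat{\mbf Q}$ is positive definite, hence non-singular, for every $\beta>0$. Note that this construction does not involve $n$ at all, so the same $\hat{\mbf Q}$ serves every block-length, in particular all sufficiently large $n$.

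The structural fact I would exploit is that $\mbf Q\mapsto f(\mbf g,\mbf Q)$ is concave, being $\log\det$ composed with the affine map $\mbf Q\mapsto\mbf I_{N_R}+\frac{1}{\sigma^2}\mbf g\mbf Q\mbf g^{H}$, and non-negative, because $\mbf g\mbf Q\mbf g^{H}\succeq 0$ forces the determinant to be at least $1$. Concavity together with $f(\mbf g,\frac{P}{N_T}\mbf I_{N_T})\ge 0$ then yields, for every $\mbf g$,
\[
f(\mbf g,\hat{\mbf Q})\ge(1-\beta)f(\mbf g,\mbf Q^{\ast})+\beta f(\mbf g,\tfrac{P}{N_T}\mbf I_{N_T})\ge(1-\beta)f(\mbf g,\mbf Q^{\ast}).
\]
In particular, on the event $\{f(\mbf G,\mbf Q^{\ast})\ge l(\eta)-\epsilon\}$ one gets $f(\mbf G,\hat{\mbf Q})\ge(1-\beta)(l(\eta)-\epsilon)$.

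Finally I would calibrate $\beta$. Assuming $0<\epsilon<l(\eta)$ (the case $l(\eta)=0$ making the lower bound trivial), set $\mu=1+\beta(l(\eta)-\epsilon)/\epsilon$, which produces the exact identity $(1-\beta)(l(\eta)-\epsilon)=l(\eta)-\mu\epsilon$; restricting to $\beta\le\epsilon/(l(\eta)-\epsilon)$ then guarantees $1\le\mu\le 2$. With this choice the inclusion $\{f(\mbf G,\mbf Q^{\ast})\ge l(\eta)-\epsilon\}\subseteq\{f(\mbf G,\hat{\mbf Q})\ge l(\eta)-\mu\epsilon\}$ holds, and passing to complements gives
\[
\mbb P\left[f(\mbf G,\hat{\mbf Q})<l(\eta)-\mu\epsilon\right]\le\mbb P\left[f(\mbf G,\mbf Q^{\ast})<l(\eta)-\epsilon\right]\le\eta,
\]
which is precisely the assertion. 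The only genuine obstacle is the non-singularity requirement; the concavity-plus-non-negativity estimate is exactly what converts the unavoidable multiplicative factor $(1-\beta)$ into an additive slack $\mu\epsilon$ while keeping $\mu$ bounded by $2$.
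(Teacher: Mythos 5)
Your proof is correct, but it follows a genuinely different route from the paper's. The paper starts from a near-optimal (possibly singular) $\tilde{\mbf Q}$, takes an arbitrary sequence of non-singular matrices in $\mc Q_P$ converging to it, picks $\mu\in[1,2]$ so that $\mbb P\left[f(\mbf G,\tilde{\mbf Q})=l(\eta)-\mu\epsilon\right]=0$ (atom avoidance, so that the indicator functions converge almost surely), and then invokes Lebesgue dominated convergence to transfer the probability bound to a far-enough member of the sequence; this is why the paper's statement carries the ``for sufficiently large $n$'' clause, with the approximating matrix depending on the index. You instead repair singularity with one explicit convex perturbation $\hat{\mbf Q}=(1-\beta)\mbf Q^{\ast}+\beta\frac{P}{N_T}\mbf I_{N_T}$ and exploit two structural facts the paper does not use here, namely concavity of $\mbf Q\mapsto f(\mbf g,\mbf Q)$ ($\log\det$ composed with an affine map) and non-negativity of $f$, to get the \emph{pointwise} bound $f(\mbf g,\hat{\mbf Q})\geq(1-\beta)f(\mbf g,\mbf Q^{\ast})$ for every realization $\mbf g$; the role of $\mu$ is then purely algebraic, absorbing the multiplicative loss $(1-\beta)$ into additive slack. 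What your approach buys: a single matrix independent of any index (so the ``sufficiently large $n$'' qualifier becomes vacuous), an event inclusion in place of a limiting argument, and no measure-theoretic machinery (no dominated convergence, no atom-avoidance choice of $\mu$). What the paper's approach buys: it never uses concavity, so it is the more generic ``approximate by non-singular matrices'' argument. One small point to tidy up in yours: your calibration of $\beta$ and $\mu$ assumes $0<\epsilon<l(\eta)$, and you only dismiss $l(\eta)=0$; in the remaining regime $0<l(\eta)\leq\epsilon$ your formula would give $\mu<1$ (or be undefined). But there the claim is trivial: since $f(\mbf g,\mbf Q)\geq 0$ for all $\mbf g$ and $\mbf Q$, taking $\mu=1$ and $\hat{\mbf Q}=\frac{P}{N_T}\mbf I_{N_T}$ gives
\begin{align}
\mbb P\left[f(\mbf G,\hat{\mbf Q})<l(\eta)-\epsilon\right]\leq \mbb P\left[f(\mbf G,\hat{\mbf Q})<0\right]=0\leq\eta, \nonumber
\end{align}
so stating this case explicitly closes the only gap in your argument.
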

    \begin{proof}
 
      Notice first that from the definition of $P_{\inf}^{(\epsilon)},$ there exists a $\tilde{\mbf Q}\in \mc Q_{P}$ such that
    \begin{align}
      \mbb P\left[f(\mbf G,\tilde{\mbf Q})<l(\eta)-\epsilon \right] \leq P_{\inf}^{(\epsilon)}+\frac{\alpha_1}{2}. \nonumber\end{align}
      Now, one can find a $1\leq \mu \leq 2$ such that
       \begin{align}
      \mbb P\left[f(\mbf G,\tilde{\mbf Q})=l(\eta)-\mu\epsilon \right]=0. \label{existencemu}\end{align}
      It holds then that
      \begin{align}
        \mbb P\left[f(\mbf G,\tilde{\mbf Q})<l(\eta)-\mu\epsilon \right]
        &\leq  \mbb P\left[f(\mbf G,\tilde{\mbf Q})<l(\eta)-\epsilon \right] \nonumber \\
        &\leq P_{\inf}^{(\epsilon)}+\frac{\alpha_1}{2}. \nonumber
        \end{align}
    It is known that for all $\mbf g \in \mbb C^{N_R\times N_T}$ there exists a sequence of non-singular $(\mbf Q_{n})_{n=1}^\infty,$ with each $\mbf Q_n \in \mc Q_P,$ converging to $\tilde{\mbf Q},$ regardless of whether $\tilde{\mbf Q}$ is singular or not. It follows from \eqref{existencemu} that
   $\left( \mbf I_{\{ \mbf g \in \mbb C^{N_R\times N_T}: f(\mbf g, \mbf Q_{n})<l(\eta)-\mu\epsilon \}}(\mbf g) \right)_{n=1}^{\infty}$ converges to $\mbf I_{\{ \mbf g \in \mbb C^{N_R\times N_T}: f(\mbf g, \tilde{\mbf Q})<l(\eta)-\mu\epsilon \}}(\mbf g) $ almost surely, where $\mbf I_{\{\cdot\}}$ is the indicator function.
   Therefore, it follows using the Lebesgue's dominated convergence theorem that for sufficiently large $n:$
   \begin{align}
       \mbb P \left[f(\mbf G, \mbf Q_n)<l(\eta)-\mu\epsilon \right]&= \int \mbb I_{\{ f(\mbf G, \mbf Q_{n})<l(\eta)-\mu \epsilon \}}   d\mbb P \nonumber \\
       &\leq \int \mbb I_{\{ f(\mbf G, \tilde{\mbf Q})<l(\eta)-\mu\epsilon \}}   d\mbb P+\frac{\alpha_1}{2} \nonumber \\&= \mbb P \left[f(\mbf G,\tilde{\mbf Q})<l(\eta)-\mu\epsilon \right]+\frac{\alpha_1}{2} \nonumber \\
       &\leq P_{\infsub}^{(\epsilon)}+\alpha_1 \nonumber \\
       & \leq \eta. \nonumber
   \end{align}
   Therefore, one can find for sufficiently large $n$ a non-singular a  $\hat{\mbf Q} \in \mc Q_P$ such that
   \begin{align}
          \mbb P \left[f(\mbf G, \hat{\mbf Q})<l(\eta)-\mu\epsilon\right]\leq \eta. \nonumber 
   \end{align}
    \end{proof}
	 Since $\hat{\mbf Q}\in \mc Q_{P}$ is non-singular, it holds that
    \begin{align}
       \underset{a\rightarrow \infty}{\lim}  \underset{ \substack{\mbf g \\{\lVert\mbf g \rVert=a}}}{\min}f(\mbf g,\hat{ \mbf Q})=\infty. \label{equalaintildega} 
    \end{align}
		Now, consider the set
   $$\hat{\mc G}_{a}=\{ \mbf g \in \mbb C^{N_{R}\times N_{T}}:l(\eta)-\mu\epsilon\leq f(\mbf g,\hat{\mbf Q}) \  \text{and} \ \lVert \mbf g \rVert \leq a \}   $$
   for some $a>0$ chosen sufficiently large such that
 \begin{align}
 \Big\{\mbf g \in \mbb C^{N_R\times N_T}: \lVert \mbf g
 \rVert=a\Big\}\subseteq \hat{\mc G}_{a}.
  \label{choicea}
 \end{align}
   From \eqref{equalaintildega}, we know the existence of an $a>0$ satisfying \eqref{choicea}.

   Now, since the set $\Big\{ \mbf g \in \mbb C^{N_R\times N_T}:l(\eta)-\mu\epsilon \leq f(\mbf g,\hat{\mbf Q})\Big\}$ is closed, it follows that $\hat{\mc G}_{a}$ is a closed subset of $\mc B_a=\{ \mbf g \in \mbb C^{N_{R}\times N_{T}}: \lVert \mbf g \rVert \leq a\}.$ By applying Theorem \ref{capacitycompoundchannels}, it follows that the compound capacity of $\hat{\mc C}=\{W_{\mbf g}: \mbf g\in \hat{\mc G}_{a}\}$ is equal to 
\begin{align}
    \underset{\mbf Q \in \mc Q_{P}}{\max}\underset{\mbf g\in\hat{\mc G}_{a}}{\min} f(\mbf g,\mbf Q). \nonumber
\end{align} \color{black}
Since $\hat{\mbf Q}\in \mc Q_{P}$, it follows
that $$\underset{g\in\hat{\mc G}_{a}}{\min} f(\mbf g,\hat{\mbf Q})$$ is  an achievable rate for $\hat{\mc C}$.
   
  Let $\theta,\delta>0$.
Since $\color{black}l(\eta)-\mu\epsilon \leq \underset{\mbf g\in\hat{\mc G}_{a}}{\min} f(\mbf g,\hat{\mbf Q})\ \color{black},$ there exists a code sequence $(\Gamma_{\hat{\mc G}_{a},n})_{n=1}^\infty$ and a block-length $n_0$ such that
    \[
        \frac{\log\lVert \Gamma_{\hat{\mc G}_{a},n}\rVert}{n}\geq l(\eta)-\mu\epsilon-\delta
    \]
    and such that 
 \begin{align}
    &\mbf g \in \hat{\mc G}_{a} \implies e(\Gamma_{\hat{\mc G}_{a},n},\mbf g) \leq \theta     \nonumber 
    \end{align}
    for $n\geq n_0.$
   
    Next, we will prove the following lemma:
    \begin{lemma}
     For $n\geq n_0$
    \begin{align}
     \mbf g \in \mc B_a^{c}=\{\mbf g \in \mbb C^{N_R\times N_T}: \lVert \mbf g \rVert >a    \} \Longrightarrow e(\Gamma_{\mbf g,n},\mbf g) \leq \theta, \nonumber
    \end{align}
    where $\Gamma_{\mbf g,n}$ is some code  with block-length $n,$ and with the same size and the same encoder as $\Gamma_{\hat{\mc G}_{a},n}.$
    \end{lemma}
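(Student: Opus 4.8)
The plan is to prove this lemma by a channel-simulation (stochastic degradedness) argument: for every $\mbf g$ with $\norm{\mbf g}>a$ I will exhibit a boundary state $\mbf g'\in\hat{\mc G}_a$ such that $W_{\mbf g'}$ is a degraded version of $W_{\mbf g}$, so the receiver may first turn its observation into one distributed as under $W_{\mbf g'}$ and then reuse the decoding regions already supplied by $\Gamma_{\hat{\mc G}_a,n}$. Concretely, given $\mbf g$ with $\norm{\mbf g}>a$, I would set $c=a/\norm{\mbf g}\in(0,1)$ and $\mbf g'=c\,\mbf g$, so that $\norm{\mbf g'}=a$. By the choice of $a$ in \eqref{choicea} we have $\{\mbf g:\norm{\mbf g}=a\}\subseteq\hat{\mc G}_a$, hence $\mbf g'\in\hat{\mc G}_a$; in particular $\Gamma_{\hat{\mc G}_a,n}$ already decodes $W_{\mbf g'}$ with maximal error at most $\theta$ for $n\geq n_0$.

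Next I would verify the degradedness. For a single channel use, if $\bs z=\mbf g\bs t+\bs\xi$ with $\bs\xi\sim\mathcal{N}_{\mathbb C}(\bs 0_{N_R},\sigma^2\mbf I_{N_R})$, I form $\bs z'=c\,\bs z+\bs n$, where $\bs n\sim\mathcal{N}_{\mathbb C}(\bs 0_{N_R},(1-c^2)\sigma^2\mbf I_{N_R})$ is drawn independently of $\bs\xi$. Then $\bs z'$ has mean $c\,\mbf g\bs t=\mbf g'\bs t$ and covariance $c^2\sigma^2\mbf I_{N_R}+(1-c^2)\sigma^2\mbf I_{N_R}=\sigma^2\mbf I_{N_R}$, so $\bs z'\sim W_{\mbf g'}(\cdot\mid\bs t)$. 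The added noise covariance is positive semi-definite precisely because $\norm{\mbf g}>a$ forces $c<1$, which is the only place the hypothesis $\mbf g\in\mc B_a^{c}$ enters. Applying this affine-plus-noise map independently to each of the $n$ coordinates turns the output of $W_{\mbf g}$ under input $\mbf t_\ell$ into a sequence $\bs Z'^n$ with exactly the law $W_{\mbf g'}(\cdot\mid\mbf t_\ell)$.

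Finally I would define $\Gamma_{\mbf g,n}$ to have the same codewords as $\Gamma_{\hat{\mc G}_a,n}$, and to decode by first applying the above memoryless processing to $\bs Z^n$ and then declaring $\hat\ell$ whenever the processed sequence lands in $\setd_\ell^{(\mbf g')}$. Since the processed sequence has exactly the distribution of the output of $W_{\mbf g'}$ under $\mbf t_\ell$, the error for each message $\ell$ equals $W_{\mbf g'}((\setd_\ell^{(\mbf g')})^c\mid\mbf t_\ell)\leq\theta$, so $e(\Gamma_{\mbf g,n},\mbf g)\leq\theta$ for $n\geq n_0$, giving the claim. I expect the main obstacle to be not the degradedness itself — which is immediate once $c<1$ keeps the simulation-noise covariance valid — but the fact that this construction yields a \emph{stochastic} decoder, whereas Definition \ref{defcode} asks for deterministic decoding regions. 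Because the identity above holds \emph{pointwise in each message}, the maximal error of the stochastic decoder is exactly bounded by $\theta$; I would therefore resolve the formal point by permitting the decoder the local randomness used for the simulation (standard for continuous-output channels and immaterial to achievable rates), noting that fixing this randomness realizes $\setd_\ell^{(\mbf g)}$ as preimages of $\setd_\ell^{(\mbf g')}$ under the resulting deterministic affine map, with the bound $\theta$ preserved by a routine averaging-and-selection argument.
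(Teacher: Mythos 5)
Your route is the same as the paper's: map $\mbf g$ with $\lVert \mbf g \rVert > a$ to the boundary point $\mbf g' = \frac{a}{\lVert \mbf g \rVert}\mbf g$, note that $\mbf g' \in \hat{\mc G}_a$ by the choice of $a$ in \eqref{choicea}, observe that $W_{\mbf g'}$ is a degraded version of $W_{\mbf g}$, and transfer the compound code to $W_{\mbf g}$ keeping the encoder and size while letting the decoder (which knows $\mbf g$ by CSIR) absorb the degradation. In fact your explicit verification of degradedness --- forming $\bs z' = c\bs z + \bs n$ with $c = a/\lVert \mbf g\rVert$ and $\bs n \sim \mathcal N_{\mbb C}\bigl(\bs 0_{N_R},(1-c^2)\sigma^2 \mbf I_{N_R}\bigr)$ independent, so that $\bs z' \sim W_{\mbf g'}(\cdot\mid \bs t)$ --- is a correct and more concrete account of what the paper merely asserts by citing Shannon's partial-ordering note; the paper never spells out the Gaussian simulation.

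The genuine gap is in your last step, the derandomization. Averaging over the simulation noise gives, for \emph{each fixed message} $\ell$, $\mbb E_{\mbf v}\left[ e_\ell(\mbf v)\right] \leq \theta$, where $e_\ell(\mbf v)$ is the error of message $\ell$ when the noise realization is frozen at $\mbf v$. But the exceptional set of realizations $\mbf v$ on which $e_\ell(\mbf v)$ is large depends on $\ell$, so no averaging-and-selection argument produces a \emph{single} $\mbf v$ with $\max_\ell e_\ell(\mbf v) \leq \theta$; what selection gives you is one $\mbf v$ with small \emph{average} (over messages) error, whereas $e(\Gamma,\mbf g)$ in Definition \ref{defcode} is the \emph{maximum} error. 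The standard repair: define the deterministic, pairwise disjoint regions
\begin{align}
\setd_\ell^{(\mbf g)} = \Bigl\{ \bs z^n : \mbb P_{\bs N}\bigl[\, c\,\bs z^n + \bs N \in \setd_\ell^{(\mbf g')} \bigr] > \tfrac{1}{2} \Bigr\}, \nonumber
\end{align}
which are disjoint because the conditional probabilities sum to at most one; Markov's inequality applied to $1 - \mbb P_{\bs N}[\,c\,\bs Z^n + \bs N \in \setd_\ell^{(\mbf g')}]$, whose expectation under $W_{\mbf g}(\cdot\mid \mbf t_\ell)$ is at most $\theta$, gives $e(\Gamma_{\mbf g,n},\mbf g) \leq 2\theta$ with the same encoder and size. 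Since $\theta>0$ is arbitrary in the surrounding outage argument (run everything with $\theta/2$), this factor of two is immaterial to the lower bound on $C_\eta(P,W_{\mbf G})$. Note the paper is itself loose on exactly this point --- Shannon's inclusion theorem is naturally stated for average error --- so flagging the stochastic-decoder issue was the right instinct; only your claimed resolution does not go through as stated.
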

 
\begin{proof}
Suppose first that for a Gaussian channel $W_{\mbf g_{1}}$, a code $\Gamma^{(1)}$ satisfies $e(\Gamma^{(1)},\mbf g_{1}) \leq \theta.$ Then, it can be shown that  there exists a code $\Gamma^{(2)}$ for $W_{\mbf g_{2}}$, the channel from which $W_{\mbf g_{1}}$  was degraded  such that  $e(\Gamma^{(2)},\mbf g_{2})\leq \theta.$ The code $\Gamma^{(2)}$ has the same encoder as $\Gamma^{(1)}$ but has possibly a different decoder. The analogous statement for DMCs  is a special case of the statement provided in \cite{NoteShannon}.

 Now, let $\mbf g$ with $\lVert \mbf g \rVert >a$ be fixed arbitrarily. We recall that $a$ satisfies $\eqref{choicea}.$ Then, the channel $W_{\mbf g'}$ with $\mbf g'=\frac{a}{\lVert \mbf g \rVert}\mbf g\in \hat{\mc G}_{a}$ is a degraded version of the channel $W_{\mbf g}.$ 
 It follows that there exists a code sequence $(\Gamma_{\mbf g,n})_{n=1}^{\infty}$ for $W_\mbf g$ such that  each code $\Gamma_{\mbf g,n}$ of block-length $n$ has the same encoder and the same size as the code $\Gamma_{\hat{\mc G}_{a},n}$ of block-length $n$ but a different decoder adjusted to $\mbf g$ and such that for $n\geq n_0,$ $e(\Gamma_{\mbf g,n},\mbf g)\leq \theta$. Here, we require channel state information at the receiver side (CSIR) so that the decoder can adjust its decoding strategy according to the channel state.
 \end{proof}

So far, we have proved the existence of a block-length $n_0$ and of a
code sequence $(\Gamma_{n})_{n=1}^\infty,$ where each code $\Gamma_n$ of block-length $n$ has the same size and the same encoder as  the code $\Gamma_{\hat{\mc G}_{a},n}$ of block-length $n$ and a decoder adjusted to the actual gain $\mbf g,$ such that
    \[
        \frac{\log\lVert\Gamma_{n}\rVert}{n}\geq l(\eta)-\mu\epsilon-\delta
    \]
    and such that 
 \begin{align}
    &\mbf g \in \hat{\mc G}_{a}\cup \mc B_a^{c} \implies e(\Gamma_{n},\mbf g) \leq \theta     \nonumber 
    \end{align}
    for $n\geq n_0.$

    Now, we have for $n\geq n_0$
    \begin{align}
       \mbb P \left[e(\Gamma_{n},\mbf G) \leq \theta\right]& \geq \mbb P \left[ \mbf G \in \hat{\mc G}_{a}\cup \mc B_a^{c}  \right] \nonumber\\
       &=\mbb P\left[ \mbf G \in \hat{\mc G}_{a}\right]+\mbb P \left[ \mbf G \in \mc B_a^{c}   \right] \nonumber \\
       &\overset{(a)}{\geq} \mbb P \left[ f(\mbf G,\hat{\mbf Q})\geq l(\eta)-\mu\epsilon  \right] \nonumber \\
       &\geq 1-\eta, \nonumber
    \end{align}
    where $(a)$ follows from the choice of the constant $a.$
    This completes the proof of the lower-bound on the $\eta$-outage transmission capacity.
\subsection{Proof of the Upper Bound on the Outage Transmission Capacity}
We will show that
\begin{align}
   C_\eta(P,W_\mbf G)\leq u(\eta),
    \label{conversequation}
\end{align}
where
\begin{align}
    u(\eta)= \sup \ \Big\{R: \underset{\mathbf{Q}\in\mc Q_{P}}{\inf }\mbb P\left[f(\mbf G,\mbf Q)<R \right] \leq \eta\Big\}. 
    \nonumber
\end{align}
The weak converse for compound channels does not guarantee that the error probability cannot be made arbitrarily small for all possible states when the target rate exceeds the compound capacity. Therefore, we cannot use the weak converse theorem of compound channels to prove the upper bound in \eqref{conversequation}. We will proceed differently. 
   Suppose \eqref{conversequation} were not true. Then there exists an $\epsilon>0$ such that $u(\eta)+\epsilon$ is an achievable $\eta$-outage transmission rate for $W_{\mbf G}$. The goal is to find a contradiction. Choose $\theta>0$ so small that 
	\[
		\frac{(1-\theta)\epsilon}{2}-\theta u(\eta)>\frac{\epsilon}{4}.
	\]
	Due to the achievability of $u(\eta)+\epsilon,$ there exists a code sequence $(\Gamma_n)_{n=1}^{\infty}$ such that 
	\begin{align}
		\frac{\log\lVert\Gamma_n\rVert}{n}&\geq u(\eta)+\frac{\epsilon}{2} 
		\label{ratecondition}
		\end{align}
		and
		\begin{align}
		\mbb P[e(\Gamma_n,\mbf G)>\theta]&\leq\eta
		\label{erro2}
	\end{align}
	for sufficiently large $n.$ Choose an $n$ for which the above holds and which satisfies
	\begin{align}
		\frac{1}{n}\leq\frac{\epsilon}{8}.
		\label{fixedblocklength}
	\end{align}
	
 The uniformly-distributed message $W$ is mapped to the random input sequence $\bs{T}^n=(\bs{T}_1,\hdots,\bs{T}_n)$ of $W_\mbf G.$
We fix the covariance matrices $\mbf Q_1,\dots, \mbf Q_n$ of the random  inputs $\bs{T}_1,\hdots,\bs{T}_n,$ respectively and let $\mbf  Q^{\star}=\frac{1}{n}\sum_{i=1}^{n}\mbf Q_{i}.$ 
	Furthermore, we let 
	\[
		\epsilon'=\frac{\epsilon}{8}.
	\]
We consider the following set:

\begin{align}
    \mc G_{\theta}
    &=\{\mbf g \in \mbb C^{N_R\times N_T}: f(\mbf g,\mbf Q^\star)< u(\eta)+\epsilon' \ \text{and} \ e(\Gamma_n,\mbf g)\leq \theta     \}.
\nonumber
\end{align}
To complete the proof of the upper-bound in \eqref{conversequation} by contradiction, the next step is to show that the set $\mc G_\theta$ is non-empty. For this purpose, we will prove that $\mbb P \left[f(\mbf G,\mbf Q^{\star})<u(\eta)+\epsilon'\right] > \eta$ in what follows:
\begin{lemma}
\label{probbiggereta}
\begin{align}
    \mbb P\left[f(\mbf G,\mbf Q^\star)<u(\eta)+\epsilon'\right]>\eta. \nonumber
    \end{align}
\end{lemma}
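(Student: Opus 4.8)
The plan is to read the conclusion off almost directly from the defining supremum for $u(\eta)$, using the monotonicity of the map
\[
R \longmapsto \underset{\mbf Q \in \mc Q_{P}}{\inf}\mbb P\left[f(\mbf G,\mbf Q)<R\right],
\]
together with the fact that the averaged covariance matrix $\mbf Q^\star$ lies in $\mc Q_{P}$. The whole argument is a formal consequence of the definition of $u(\eta)$; the only place the model genuinely enters is the admissibility of $\mbf Q^\star$, which is where I expect the sole (minor) obstacle.

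First I would observe that the map above is non-decreasing in $R$, since enlarging $R$ enlarges every event $\{f(\mbf G,\mbf Q)<R\}$, hence raises each probability and their infimum. Consequently the set $\{R:\inf_{\mbf Q}\mbb P[f(\mbf G,\mbf Q)<R]\le \eta\}$ is downward closed, so for every $R$ strictly above its supremum $u(\eta)$ the defining inequality fails strictly. Since $\epsilon'=\epsilon/8>0$, the value $u(\eta)+\epsilon'$ is strictly larger than $u(\eta)$, and therefore
\[
\underset{\mbf Q \in \mc Q_{P}}{\inf}\mbb P\left[f(\mbf G,\mbf Q)<u(\eta)+\epsilon'\right]>\eta.
\]

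Next I would verify that $\mbf Q^\star=\frac1n\sum_{i=1}^n\mbf Q_i\in\mc Q_{P}$, which is the one substantive step. As an average of the covariance matrices $\mbf Q_i=\mathrm{cov}(\bs{T}_i)$, the matrix $\mbf Q^\star$ is Hermitian and positive semi-definite. For the trace constraint I would take expectations in the almost-sure power constraint \eqref{energyconstraintMIMOCorrelated}: since $\mathrm{tr}(\mathrm{cov}(\bs{T}_i))=\mbb E\big[\bs{T}_i^H\bs{T}_i\big]-\lVert\mbb E[\bs{T}_i]\rVert^2\le \mbb E\big[\bs{T}_i^H\bs{T}_i\big]$, averaging over $i$ yields $\mathrm{tr}(\mbf Q^\star)\le \frac1n\sum_{i=1}^n\mbb E\big[\bs{T}_i^H\bs{T}_i\big]\le P$. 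Hence $\mbf Q^\star\in\mc Q_{P}$.

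Finally, because $\mbf Q^\star$ is an admissible covariance, I would bound the probability at this particular matrix by the infimum over all of $\mc Q_{P}$:
\[
\mbb P\left[f(\mbf G,\mbf Q^\star)<u(\eta)+\epsilon'\right]\ge \underset{\mbf Q \in \mc Q_{P}}{\inf}\mbb P\left[f(\mbf G,\mbf Q)<u(\eta)+\epsilon'\right]>\eta,
\]
which is exactly the asserted inequality. No step is deep: the content is the supremum characterization of $u(\eta)$ plus monotonicity, with the trace bound $\mathrm{tr}(\mbf Q^\star)\le P$ being the only calculation that uses the structure of the coding scheme.
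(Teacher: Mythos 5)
Your proof is correct and takes essentially the same approach as the paper: both arguments rest on the trace bound establishing $\mbf Q^\star \in \mc Q_{P}$ and on the supremum characterization of $u(\eta)$, combined with the elementary fact that evaluating at the particular matrix $\mbf Q^\star$ can only increase the probability relative to the infimum over $\mc Q_{P}$. The only difference is organizational — the paper introduces the per-matrix quantity $R(\mbf Q^\star)=\sup\bigl\{R:\mbb P\left[f(\mbf G,\mbf Q^\star)<R\right]\leq \eta\bigr\}$, shows $R(\mbf Q^\star)\leq u(\eta)$ by set inclusion, and applies the above-the-supremum trick at $R(\mbf Q^\star)$, whereas you apply that trick directly at $u(\eta)$ and then pass from the infimum to $\mbf Q^\star$; these are the same two observations in the opposite order.
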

\begin{proof}
By Lemma \ref{traceQ} below, we know that $\text{tr}(\mbf Q^\star)\leq P$ and therefore $\mbf Q^\star \in \mc Q_{P}.$ By Lemma \ref{supQ} below, it follows that
\begin{align}
    R(\mbf Q^\star)&= \sup \Big\{R: \mbb P\left[f(\mbf G,\mbf Q^\star)<R\right]\leq \eta \Big\} \nonumber \\
    &\leq u(\eta). \nonumber
\end{align}
This yields
\begin{align}
    \mbb P \left[ f(\mbf G,\mbf Q^\star)< u(\eta)+\epsilon' \right] 
    &\geq \mbb P \left[ f(\mbf G,\mbf Q^\star)< R(\mbf Q^\star)+\epsilon' \right] \nonumber \\
    &>\eta. \nonumber
\end{align}
\end{proof}
\begin{lemma}
\begin{align}
    \mathrm{tr}(\mbf Q^\star)\leq P. \nonumber
\end{align}
\label{traceQ}
\end{lemma}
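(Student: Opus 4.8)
The plan is to bound the trace of each per-symbol covariance matrix by the corresponding expected per-symbol energy, and then to invoke the almost-sure power constraint that every codeword of $\Gamma_n$ obeys. The whole argument reduces to linearity of the trace and of the expectation, so it is short; the only subtlety is the bookkeeping between the covariance matrix and the second-moment matrix.

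First, I would recall from Definition \ref{defcode} that every codeword $\mbf t_\ell=(\bs t_{\ell,1},\hdots,\bs t_{\ell,n})$ of $\Gamma_n$ satisfies $\frac{1}{n}\sum_{i=1}^{n}\bs t_{\ell,i}^H\bs t_{\ell,i}\leq P$. Since the uniformly-distributed message $W$ is mapped deterministically to one of these codewords, the induced random input sequence $\bs T^n$ inherits the same bound for every realization, hence
\[
\frac{1}{n}\sum_{i=1}^{n}\bs T_i^H\bs T_i\leq P \quad \text{almost surely.}
\]

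Next, I would relate $\mathrm{tr}(\mbf Q_i)$ to $\mbb E[\bs T_i^H\bs T_i]$. Writing $\mbf Q_i=\mathrm{cov}(\bs T_i)=\mbb E[\bs T_i\bs T_i^H]-\mbb E[\bs T_i]\mbb E[\bs T_i]^H$ and using the identity $\mathrm{tr}(\bs T_i\bs T_i^H)=\bs T_i^H\bs T_i$ together with the linearity of the trace and of the expectation, I obtain
\begin{align}
\mathrm{tr}(\mbf Q_i) &= \mbb E[\bs T_i^H\bs T_i]-\mathrm{tr}\big(\mbb E[\bs T_i]\mbb E[\bs T_i]^H\big) \nonumber \\
&\leq \mbb E[\bs T_i^H\bs T_i], \nonumber
\end{align}
where the inequality holds because $\mbb E[\bs T_i]\mbb E[\bs T_i]^H$ is positive semi-definite and therefore has nonnegative trace.

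Finally, averaging over $i$ and exchanging the finite sum with the expectation yields
\[
\mathrm{tr}(\mbf Q^\star)=\frac{1}{n}\sum_{i=1}^{n}\mathrm{tr}(\mbf Q_i)\leq \mbb E\left[\frac{1}{n}\sum_{i=1}^{n}\bs T_i^H\bs T_i\right]\leq P,
\]
the last step being the expectation of the almost-sure power constraint established above. This proves $\mathrm{tr}(\mbf Q^\star)\leq P$, so that $\mbf Q^\star\in\mc Q_{P}$. I do not anticipate any genuine obstacle here; the only point demanding care is the distinction between the covariance and the second-moment matrix, which is harmless since subtracting the outer product of the mean can only decrease the trace, and thus preserves the desired inequality.
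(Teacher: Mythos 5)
Your proof is correct and follows essentially the same route as the paper's: take the expectation of the almost-sure power constraint and use linearity of trace and expectation together with $\mathrm{tr}(\bs T_i\bs T_i^H)=\bs T_i^H\bs T_i$. The only difference is cosmetic — you spell out explicitly why $\mathrm{tr}(\mbf Q_i)\leq \mbb E[\bs T_i^H\bs T_i]$ (the mean outer product is positive semi-definite), a step the paper leaves implicit in its inequality $\mathrm{tr}(\mbf Q_i)\leq \mathrm{tr}(\mbb E[\bs T_i\bs T_i^H])$.
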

\begin{proof}
From \eqref{energyconstraintMIMOCorrelated}, it holds that
\begin{align}
    \frac{1}{n}\sum_{i=1}^{n} \bs{T}_{i}^H\bs{T}_{i}\leq P, \quad \text{almost surely.} \nonumber
\end{align}
This implies that
\begin{align}
    \mbb E\left[\frac{1}{n}\sum_{i=1}^{n} \bs{T}_{i}^H\bs{T}_{i}\right] &=\frac{1}{n}\sum_{i=1}^{n}\mbb E\left[ \bs{T}_{i}^H\bs{T}_{i}\right] \nonumber \\
    &\leq P.\nonumber
\end{align}
This yields
\begin{align}
    \text{tr}\left[\mbf Q^\star\right] &= \text{tr} \left[\frac{1}{n}\sum_{i=1}^{n} \mbf Q_{i}\right] \nonumber \\
    &=\frac{1}{n}\sum_{i=1}^{n}\text{tr}\left[ \mbf Q_i\right] \nonumber\\
    &\leq \frac{1}{n}\sum_{i=1}^{n} \text{tr}\left(\mbb E\left[ \bs{T}_{i}\bs{T}_{i}^H\right]\right) \nonumber \\
    &=\frac{1}{n}\sum_{i=1}^{n} \mbb E\left[ \text{tr}\left(\bs{T}_{i}\bs{T}_{i}^H\right)\right] \nonumber \\
    &=\frac{1}{n}\sum_{i=1}^{n} \mbb E\left[ \text{tr}\left(\bs{T}_{i}^H\bs{T}_{i}\right)\right] \nonumber \\
    &=\frac{1}{n}\sum_{i=1}^{n} \mbb E\left[ \bs{T}_{i}^H\bs{T}_{i}\right] \nonumber \\
    &\leq P, \nonumber
\end{align}
where we used $r=\text{tr}(r)$ for scalar $r$, $\text{tr}\left(\mbf A \mbf B\right)= \text{tr}\left(\mbf B \mbf A\right)$ and the linearity of the expectation and of the trace operators.
\end{proof}
\begin{lemma}
\label{supQ}
For any $\mbf Q \in \mc Q_{P}$, it holds that
\begin{align}
    \sup \Big\{R: \mbb P\left[f(\mbf G,\mbf Q)<R\right]\leq \eta \Big\} 
   &\leq \sup \Big\{R: \underset{\mbf Q \in \mc Q_{P}}{\inf}\mbb P\left[f(\mbf G,\mbf Q)<R\right] \leq \eta\Big\}.\nonumber
\end{align}
\end{lemma}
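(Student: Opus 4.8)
The plan is to recognize that this inequality is purely structural: it compares the suprema of two sets of rates that turn out to be nested, so it will follow at once from monotonicity of the supremum under set inclusion. The essential observation is that the matrix $\mbf Q$ fixed in the statement is itself an element of $\mc Q_P$, and therefore it is one of the competitors in the infimum appearing on the right-hand side. To keep the argument transparent I would first rename the bound variable of that infimum to $\mbf Q'$, so as to distinguish it from the fixed $\mbf Q$ of the left-hand side.

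Concretely, I would begin by recording the elementary bound that for every real $R$,
\begin{align}
    \underset{\mbf Q' \in \mc Q_{P}}{\inf}\ \mbb P\left[f(\mbf G,\mbf Q')<R\right] \leq \mbb P\left[f(\mbf G,\mbf Q)<R\right], \nonumber
\end{align}
which holds simply because the infimum over a set is no larger than the value attained at any particular member of that set, and $\mbf Q \in \mc Q_{P}$. I would then introduce the two rate sets
\begin{align}
    A &= \Big\{R : \mbb P\left[f(\mbf G,\mbf Q)<R\right]\leq \eta\Big\}, \nonumber \\
    B &= \Big\{R : \underset{\mbf Q' \in \mc Q_{P}}{\inf}\ \mbb P\left[f(\mbf G,\mbf Q')<R\right]\leq \eta\Big\}, \nonumber
\end{align}
whose suprema are exactly the left- and right-hand sides of the claim.

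The heart of the argument is the inclusion $A \subseteq B$: if $R \in A$, then the displayed bound gives $\inf_{\mbf Q' \in \mc Q_{P}} \mbb P[f(\mbf G,\mbf Q')<R] \leq \mbb P[f(\mbf G,\mbf Q)<R] \leq \eta$, so $R \in B$. Taking suprema then yields $\sup A \leq \sup B$, which is precisely the assertion of the lemma. I do not expect any genuine obstacle here; the only points requiring care are the notational clash between the fixed $\mbf Q$ and the dummy variable in the infimum, and the consistent use of the nonstrict inequality $\leq \eta$ in both sets, which is what guarantees that membership in $A$ forces membership in $B$ rather than leaving a boundary case uncovered.
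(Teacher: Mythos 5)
Your proposal is correct and follows essentially the same route as the paper: the paper's proof likewise establishes the set inclusion $\Big\{R:\ \mbb P\left[f(\mbf G,\mbf Q)<R\right]\leq\eta\Big\}\subseteq\Big\{R:\ \underset{\mbf Q'\in\mc Q_{P}}{\inf}\,\mbb P\left[f(\mbf G,\mbf Q')<R\right]\leq\eta\Big\}$ and then takes suprema. Your write-up merely makes explicit the justification (the infimum is bounded by the value at the fixed $\mbf Q$) that the paper leaves implicit, and the renaming of the dummy variable is a sensible clarification of the paper's notational clash.
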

\begin{proof}
For any $\mbf Q \in \mc Q_{P}$, we have
\begin{align}
    \Big\{R: \ \mbb P\left[f(\mbf G,\mbf Q)<R\right]\leq \eta \Big\} 
   &\subseteq \Big\{R:\ \underset{\mbf Q \in \mc Q_{P}}{\inf}\mbb P\left[f(\mbf G,\mbf Q)<R\right] \leq \eta\Big\}. \nonumber
\end{align}
As a result:
\begin{align}
    \sup \Big\{R: \mbb P\left[f(\mbf G,\mbf Q)<R\right]\leq \eta \Big\}\leq \sup \Big\{R: \underset{\mbf Q \in \mc Q_{P}}{\inf}\mbb P\left[f(\mbf G,\mbf Q)<R \right] \leq \eta\Big\}.\nonumber
\end{align}
\end{proof}
Now, we can prove that the set $\mc G_\theta$ is non-empty in what follows:
\begin{lemma}
 $\mc G_{\theta}$ is a non-empty set.
\end{lemma}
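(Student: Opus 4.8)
The plan is to read $\mc G_{\theta}$ as the set of channel states on which two events occur simultaneously, and to show that this set carries strictly positive probability under $\mbf G$; positivity of the measure immediately forces the set to be non-empty. Concretely, I would introduce the two events
\begin{align}
A=\{\mbf g: f(\mbf g,\mbf Q^\star)<u(\eta)+\epsilon'\}, \quad B=\{\mbf g: e(\Gamma_n,\mbf g)\leq \theta\}, \nonumber
\end{align}
so that $\mc G_{\theta}=A\cap B$ and hence $\mbb P[\mbf G \in \mc G_{\theta}]=\mbb P[\mbf G \in A\cap B]$.

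Both ingredients are already in hand. Lemma \ref{probbiggereta} gives $\mbb P[\mbf G \in A]=\mbb P[f(\mbf G,\mbf Q^\star)<u(\eta)+\epsilon']>\eta$, hence $\mbb P[\mbf G \in A^c]<1-\eta$. The error condition \eqref{erro2} gives $\mbb P[e(\Gamma_n,\mbf G)>\theta]\leq \eta$, that is $\mbb P[\mbf G \in B^c]\leq \eta$. Then I would apply the union bound to the complement:
\begin{align}
\mbb P[\mbf G \in (A\cap B)^c]=\mbb P[\mbf G \in A^c\cup B^c]\leq \mbb P[\mbf G \in A^c]+\mbb P[\mbf G \in B^c]<(1-\eta)+\eta=1. \nonumber
\end{align}
Consequently $\mbb P[\mbf G \in \mc G_{\theta}]=1-\mbb P[\mbf G \in (A\cap B)^c]>0$, and a set of positive probability cannot be empty, which proves the claim.

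The argument is essentially routine; the only point that requires care is that the two probability bounds are exactly complementary, one strictly above $\eta$ and the other at least $1-\eta$, so that the union bound leaves a strict gap. This is precisely why the strict inequality in Lemma \ref{probbiggereta} matters: with a merely non-strict bound $\mbb P[\mbf G \in A]\geq \eta$ the intersection $A\cap B$ could a priori have measure zero, and the contradiction driving the converse would collapse. No further structure of $f$ or of the code $\Gamma_n$ is needed at this stage, since the non-emptiness of $\mc G_\theta$ is exactly what lets us later pick a single state $\mbf g$ on which both a rate above $u(\eta)$ is reliably transmitted and the channel is not in outage, yielding the desired contradiction.
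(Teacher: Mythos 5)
Your proposal is correct. It relies on exactly the same two ingredients as the paper's proof --- the strict bound $\mbb P\left[f(\mbf G,\mbf Q^\star)<u(\eta)+\epsilon'\right]>\eta$ from Lemma \ref{probbiggereta} and the bound $\mbb P\left[e(\Gamma_n,\mbf G)>\theta\right]\leq\eta$ from \eqref{erro2} --- but the way you combine them differs from the paper. The paper expands $\mbb P\left[f(\mbf G,\mbf Q^\star)<u(\eta)+\epsilon'\right]$ via the law of total probability conditioned on the event $\{e(\Gamma_n,\mbf G)\leq\theta\}$, deduces first that the conditional probability $\mbb P\left[f(\mbf G,\mbf Q^\star)<u(\eta)+\epsilon'\,\middle|\,e(\Gamma_n,\mbf G)\leq\theta\right]$ is positive, then separately that $\mbb P\left[e(\Gamma_n,\mbf G)\leq\theta\right]\geq 1-\eta>0$, and multiplies the two to get positive joint probability. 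You instead pass to complements and apply the union bound, obtaining $\mbb P\left[\mbf G\in\mc G_\theta^c\right]<(1-\eta)+\eta=1$ in one line. Your route is more elementary and arguably cleaner: it avoids conditional probabilities altogether (the paper's decomposition implicitly requires the conditioning events to have positive probability, a point it only verifies afterwards), while the underlying counting argument --- two events whose probabilities sum to strictly more than one must intersect on a set of positive measure --- is the same. You also correctly isolate the role of strictness in Lemma \ref{probbiggereta}: with only $\mbb P\left[f(\mbf G,\mbf Q^\star)<u(\eta)+\epsilon'\right]\geq\eta$ the two events could be exactly complementary and the intersection could be empty, which is precisely why the paper needs the strict inequality as well.
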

\begin{proof}
By Lemma \ref{probbiggereta}, we have 
\begin{align}
    \eta &<\mbb P \left[ f(\mbf G,\mbf Q^\star)<u(\eta)+\epsilon'\right] \nonumber \\
    &=\mbb P \left[ f(\mbf G,\mbf Q^\star)<u(\eta)+\epsilon'| e(\Gamma_n,\mbf G)\leq \theta \right] \mbb P \left[e(\Gamma_n,\mbf G)\leq \theta\right] +\mbb P \left[ f(\mbf G,\mbf Q^\star)<u(\eta)+\epsilon'| e(\Gamma_n,\mbf G)>\theta \right] \mbb P \left[e(\Gamma_n,\mbf G)> \theta\right]  \nonumber \\
    &\leq \mbb P \left[ f(\mbf G,\mbf Q^\star)<u(\eta)+\epsilon'| e(\Gamma_n,\mbf G)\leq \theta \right]+ \mbb P \left[e(\Gamma_n,\mbf G)> \theta\right] \nonumber \\
    &\leq \mbb P \left[ f(\mbf G,\mbf Q^\star)<u(\eta)+\epsilon'| e(\Gamma_n,\mbf G)\leq \theta \right]+\eta, \nonumber
\end{align}
where we used  \eqref{erro2} in the last step.
This implies that
\begin{align}
    \mbb P \left[ f(\mbf G,\mbf Q^\star)<u(\eta)+\epsilon'| e(\Gamma_n,\mbf G)\leq \theta \nonumber \right]>0. \nonumber
\end{align}
Furthermore, since $\eta <1$, it follows  that
\begin{align}
    \mbb P\left[ e(\Gamma_n,\mbf G)\leq \theta\right]\geq 1-\eta >0. \nonumber
\end{align}
As a result, we have
\begin{align}
       \mbb P \left[ f(\mbf G,\mbf Q^\star)<u(\eta)+\epsilon', e(\Gamma_n,\mbf G)\leq \theta \nonumber \right]>0, \nonumber
\end{align}
which means that
\begin{align}
\mbb P \left[\mbf G \in \mc G_{\theta}\right]>0 \nonumber
\end{align}
and therefore $\mc G_{\theta}$ is a non-empty set.
\end{proof}
Pick a $\mbf g \in \mc G_{\theta}$ and consider the channel
\begin{align}
\bs{z}_{i}=\mbf g\bs{t}_{i}+\bs{\xi}_{i} \quad i=1, \hdots,n.  \label{channgelgtheta}
\end{align}
 The uniformly-distributed message $W$ is mapped to the random input sequence $\bs{T}^n=(\bs{T}_1,\hdots,\bs{T}_n)$ of the channel in \eqref{channgelgtheta}.
 We model the random output sequence of the channel in \eqref{channgelgtheta} by $\bs{Z}^n=(\bs{Z}_1,\hdots,\bs{Z}_n).$ We model the random decoded message by $\hat{W}.$ The set of messages is denoted by $\mathcal{W}.$ 
 We use $\Gamma_n$ as a transmission-code for the channel in \eqref{channgelgtheta} with the fixed block-length $n$ satisfying \eqref{fixedblocklength}.
Since $\mbf g \in \mc G_\theta,$ it follows that
\begin{align}
    \mbb P\left[ W\neq \hat{W}\right]\leq e(\Gamma_n,\mbf g)\leq  \theta. \nonumber
\end{align}
We have
\begin{align}
    H(W)&= \log \lvert \mathcal{W} \rvert \nonumber\\
    &=\log \lVert \Gamma_n \rVert \nonumber \\
    &\geq n\left(u(\eta)+\frac{\epsilon}{2}\right), 
    \label{entropy1}
\end{align}
where we used \eqref{ratecondition} in the last step.
By applying Fano's inequality, we obtain
\begin{align}
    H(W|\hat{W})&\leq 1+ \mbb P\left[W\neq \hat{W}\right] \log \lvert \mathcal{W} \rvert \nonumber \\
    &\leq 1+\theta \log \lvert \mathcal{W} \rvert \nonumber \\
    &=1+\theta H(W). \nonumber
\end{align}

Now, on the one hand, it holds that
\begin{align}
    I(W;\hat{W})&=H(W)-H(W|\hat{W}) \nonumber \\
    &\geq (1-\theta)H(W)-1, \nonumber 
\end{align}
which yields
\begin{align}
   H(W)\leq \frac{1+I(W;\hat{W})}{1-\theta}. \label{applyfano}
\end{align}
On the other hand, we have
\begin{align}
    \frac{1}{n} I(W;\hat{W})&\overset{(a)}{\leq} \frac{1}{n} I(\bs{T}^n;\bs{Z}^n) \nonumber \\
    &\overset{(b)}{=}\frac{1}{n}\sum_{i=1}^{n} I(\bs{Z}_i;\bs{T}^n|\bs{Z}^{i-1}) \nonumber \\
    &=\frac{1}{n} \sum_{i=1}^{n} h(\bs{Z}_i|\bs{Z}^{i-1})-h(\bs{Z}_i|\bs{T}^{n},\bs{Z}^{i-1}) \nonumber \\
    &\overset{(c)}{=} \frac{1}{n} \sum_{i=1}^{n} h(\bs{Z}_i|\bs{Z}^{i-1})-h(\bs{Z}_i|\bs{T}_i) \nonumber \\
    &\overset{(d)}{\leq} \frac{1}{n} \sum_{i=1}^{n} h(\bs{Z}_i)-h(\bs{Z}_i|\bs{T}_i) \nonumber \\
    &= \frac{1}{n} \sum_{i=1}^{n} I(\bs{T}_i,\bs{Z}_i)
    \nonumber \\
    &\leq  \sum_{i=1}^{n} \frac{1}{n} \log\det\left(\mbf I_{N_{R}}+\frac{1}{\sigma^2}\mbf g \mbf Q_{i} \mbf g^{H}\right) \nonumber \\
    &\overset{(e)}{\leq}\log\det\left(\frac{1}{n} \sum_{i=1}^{n}\left[\mbf I_{N_{R}}+\frac{1}{\sigma^2}\mbf g \mbf Q_{i} \mbf g^{H} \right]     \right) \nonumber \\
    &=\log\det\left(\mbf I_{N_{R}}+\frac{1}{\sigma^2}\mbf g \left(\frac{1}{n}\sum_{i=1}^{n}\mbf Q_{i}\right) \mbf g^{H}  \right) \nonumber \\
    &=\log\det\left(\mbf I_{N_{R}}+\frac{1}{\sigma^2}\mbf g \mbf Q^{\star} \mbf g^{H}  \right),
    \label{mutinfconverse}
\end{align} 
where $(a)$ follows from the Data Processing Inequality because $W\circlearrow{\bs{T}^n}\circlearrow{\bs{Z}^n}\circlearrow{\hat{W}}$ forms a Markov chain, $(b)$ follows from the chain rule of mutual information, (c) follows because
$\bs{T}_{1},\dots, \bs{T}_{i-1},\bs{T}_{i+1},\dots, \bs{T}_{n},\bs{Z}^{i-1} \circlearrow{\bs{T}_{i}}\circlearrow{\bs{Z}_{i}}$ forms a Markov chain, $(d)$ follows because conditioning does not increase entropy and $(e)$ follows because $\log\circ\det$ is concave on the set of Hermitian positive semi-definite matrices.

This yields
\begin{align}
H(W)\leq \frac{1+n\log\det(\mbf I_{N_{R}}+\frac{1}{\sigma^2}\mbf g \mbf Q^{\star} \mbf g^{H})}{1-\theta}. 
\label{entropy2}
\end{align}
The inequalities \eqref{entropy1} and \eqref{entropy2} imply that
\begin{align}
    n\left(u(\eta)+\frac{\epsilon}{2}\right)&\leq \frac{1+n\log\det(\mbf I_{N_{R}}+\frac{1}{\sigma^2}\mbf g \mbf Q^{\star} \mbf g^{H})}{1-\theta} \nonumber \\
    &< \frac{1+n(u(\eta)+\epsilon')}{1-\theta},
    \label{equivalent}
\end{align}
where we used that $\mbf g \in \mc G_{\theta}.$
The inequality \eqref{equivalent} is equivalent to
\begin{align}
    -\theta u(\eta)+(1-\theta)\frac{\epsilon}{2}-\frac{1}{n}< \epsilon'. \nonumber
\end{align}
However, by the choice of $\theta$ and $n$, the left-hand side of this inequality is strictly larger than $\frac{\epsilon}{8},$ whereas $\epsilon'=\frac{\epsilon}{8}.$ This is a contradiction. 
 Thus \eqref{conversequation}  must be true. This completes the proof of the upper-bound on the $\eta$-outage transmission capacity.
\subsection{\text{Equality of the Bounds at the Points of Continuity of} \texorpdfstring{$C_{\eta}(P,W_{\mbf G})$}{TEXT}}
We will show that the bounds in \eqref{lowerboundoutagecapacity} and in \eqref{upperboundoutagecapacity} are tight except at the points of discontinuity of $C_{\eta}(P,W_\mbf G).$
Notice first that $u: \eta \rightarrow \sup \ \Big\{R: \underset{\mathbf{Q}\in\mc Q_{P}}{\inf }\mbb P\left[f(\mbf G,\mbf Q)<R \right] \leq \eta\Big\}$ is monotone non-decreasing. Therefore, the set $\mc D\subset [0,1)$ of $\eta,$ at which it is discontinuous, is at most countable. We will prove next the following lemma.
\begin{lemma}
The function
\begin{align}
    g_{\inf}: R\rightarrow\underset{\mathbf{Q}\in\mc Q_{P}}{\inf }\mbb P\left[f(\mbf G,\mbf Q)<R \right] \label{ginfR}
\end{align}
is  non-decreasing.
\label{nondecreasingginf}
\end{lemma}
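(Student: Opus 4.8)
The plan is to exploit the elementary fact that a pointwise infimum of a family of non-decreasing functions is again non-decreasing, thereby reducing the statement to the monotonicity of each individual term $R \mapsto \mbb P\left[f(\mbf G,\mbf Q)<R\right]$ for fixed $\mbf Q$.

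First I would fix an arbitrary $\mbf Q \in \mc Q_{P}$ and consider the function $h_{\mbf Q}(R)=\mbb P\left[f(\mbf G,\mbf Q)<R\right]$. For any two reals $R_1 \leq R_2$, the event $\{f(\mbf G,\mbf Q)<R_1\}$ is contained in the event $\{f(\mbf G,\mbf Q)<R_2\}$, so by the monotonicity of the probability measure $\mbb P$ we obtain $h_{\mbf Q}(R_1)\leq h_{\mbf Q}(R_2)$. Hence each $h_{\mbf Q}$ is non-decreasing; this is nothing more than the observation that $h_{\mbf Q}$ is, up to left-continuity, the cumulative distribution function of the real-valued random variable $f(\mbf G,\mbf Q)$.

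Next I would transfer this monotonicity through the infimum. Fix $R_1\leq R_2$. For every $\mbf Q \in \mc Q_{P}$ the definition of the infimum gives $g_{\inf}(R_1)\leq h_{\mbf Q}(R_1)$, and combining this with the monotonicity established above yields $g_{\inf}(R_1)\leq h_{\mbf Q}(R_1)\leq h_{\mbf Q}(R_2)$. Since the resulting bound $g_{\inf}(R_1)\leq h_{\mbf Q}(R_2)$ holds for \emph{every} $\mbf Q \in \mc Q_{P}$, the quantity $g_{\inf}(R_1)$ is a lower bound for the whole family $\{h_{\mbf Q}(R_2)\}_{\mbf Q \in \mc Q_{P}}$, and taking the infimum over $\mbf Q$ on the right-hand side gives $g_{\inf}(R_1)\leq g_{\inf}(R_2)$, which is exactly the claimed monotonicity.

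There is no genuine obstacle in this argument; the only point that requires a moment of care is the order of quantifiers in the final step. One must first bound $g_{\inf}(R_1)$ below the value $h_{\mbf Q}(R_2)$ uniformly in $\mbf Q$ and only afterwards pass to the infimum, rather than attempting an illegitimate term-by-term comparison of the two infima. Everything else is an immediate consequence of the monotonicity of $\mbb P$ on nested events.
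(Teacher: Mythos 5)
Your proposal is correct and follows essentially the same route as the paper's own proof: both establish that each fixed-$\mbf Q$ map $R\mapsto \mbb P\left[f(\mbf G,\mbf Q)<R\right]$ is non-decreasing via nested events, then bound $g_{\inf}(R_1)\leq \mbb P\left[f(\mbf G,\mbf Q)<R_2\right]$ uniformly in $\mbf Q$ before passing to the infimum on the right-hand side. Your explicit remark about the order of quantifiers in that last step is exactly the care the paper's argument implicitly takes.
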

\begin{proof}
Let $0\leq R_1\leq R_2.$ For any $\mbf Q \in \mc Q_{P},$ it holds that
\begin{align}
    \underset{\mathbf{Q}\in\mc Q_{P}}{\inf }\mbb P\left[f(\mbf G,\mbf Q)<R_1 \right] \leq \mbb P\left[f(\mbf G,\mbf Q)<R_1 \right]. \label{eq12}
\end{align}
 Clearly, the function $s_{\mbf Q}: R \rightarrow \mbb P\left[f(\mbf G,\mbf Q)<R \right]$ is non-decreasing for any $\mbf Q \in \mc Q_{P}.$  Therefore, it follows that for any $\mbf Q \in \mc Q_{P}$
\begin{align}
     \mbb P\left[f(\mbf G,\mbf Q)<R_1 \right]\leq \mbb P\left[f(\mbf G,\mbf Q)<R_2 \right]. \label{eq22}
\end{align}
It follows from \eqref{eq12} and \eqref{eq22}
that for all $\mbf Q \in \mc Q_{P}$
\begin{align}
  \underset{\mathbf{Q}\in\mc Q_{P}}{\inf }\mbb P\left[f(\mbf G,\mbf Q)<R_1 \right] \leq \mbb P\left[f(\mbf G,\mbf Q)<R_2 \right]. \nonumber
\end{align}
This yields
\begin{align}
     \underset{\mathbf{Q}\in\mc Q_{P}}{\inf }\mbb P\left[f(\mbf G,\mbf Q)<R_1 \right] \leq\underset{\mathbf{Q}\in\mc Q_{P}}{\inf } \mbb P\left[f(\mbf G,\mbf Q)<R_2 \right]. \nonumber
\end{align}
This proves that
\begin{align}
    g_{\inf}(R_1)\leq g_{\inf}(R_2). \nonumber
\end{align}
We deduce that the function in \eqref{ginfR} is non-decreasing.
\end{proof}
 Select now any $\eta^{\star} \in [0,1)\setminus \mc D$ and a strictly increasing sequence $(\eta^{(n)})_{n=1}^{\infty}$ in $[0,1)$ converging to $\eta^\star.$ One can show analogously to the proof of Lemma \ref{supremummaximum} below and using Lemma \ref{nondecreasingginf} that
 \begin{align}
 \sup \ \Big\{R: \underset{\mathbf{Q}\in\mc Q_{P}}{\inf }\mbb P\left[f(\mbf G,\mbf Q)<R \right] <\eta^\star\Big\} = \underset{n\rightarrow\infty}{\lim} \sup \ \Big\{R: \underset{\mathbf{Q}\in\mc Q_{P}}{\inf }\mbb P\left[f(\mbf G,\mbf Q)<R \right] \leq \eta^{(n)}\Big\}. \nonumber
 \end{align}
 It follows that
\begin{align}
   l(\eta^\star)&= \sup \ \Big\{R: \underset{\mathbf{Q}\in\mc Q_{P}}{\inf }\mbb P\left[f(\mbf G,\mbf Q)<R \right] <\eta^\star\Big\} \nonumber \\
    &= \underset{n\rightarrow\infty}{\lim} \sup \ \Big\{R: \underset{\mathbf{Q}\in\mc Q_{P}}{\inf }\mbb P\left[f(\mbf G,\mbf Q)<R \right] \leq \eta^{(n)}\Big\} \nonumber \\
    &= \underset{n\rightarrow\infty}{\lim} u(\eta^{(n)}) \nonumber\\
    &\overset{(a)}{=} u(\eta^\star),\nonumber
\end{align}
where $(a)$ follows because $u(\eta)$ is continuous non-decreasing at $\eta^\star.$
Sofar, we know that 
    $u$ has at most countably many points of discontinuity and that $l(\eta)$ and $u(\eta)$ coincide in points of continuity of $u(\eta)$, and in particular, they are equal to $C_{\eta}(P,W_\mbf G)$ in these points.
    
    Now assume that $l(\eta_0)\neq u(\eta_0)$ in some point $\eta_0$. We are going to show that $C_{\eta}(P,W_\mbf G)$ is not continuous at $\eta_0$.

By assumption, $l(\eta_0)<u(\eta_0)$. Let $(\eta_n^+)$ be a sequence of points of continuity of $u(\eta)$ converging to $\eta_0$ from above, and let $(\eta_n^-)$ be a sequence of points of continuity of $u(\eta)$ converging to $\eta_0$ from below. Then by Lemma 1,
\[
    l(\eta_n^-)=C_{\eta_n^-}(P,W_{\mbf G}),\quad u(\eta_n^+)=C_{\eta_n^+}(P,W_{\mbf G})
\]
for all $n$. In particular,
\[
    \limsup_{n\to\infty}C_{\eta_n^-}(P,W_{\mbf G})
    =\limsup_{n\to\infty}l(\eta_n^-)
    \leq l(\eta_0)
    <u(\eta_0)
    \leq\liminf_{n\to\infty}u(\eta_n^+)
    =\liminf_{n\to\infty}C_{\eta_n^+}(P,W_{\mbf G}).
\]
Hence $C_{\eta}(P,W_{\mbf G})$ is not continuous at $\eta_0$.

\subsection{\text{Equality of the Bounds in} \eqref{Retasupell} and \eqref{Retasupu} \text{when} $\mbf G$ \text{has a positive density except on a set with Lebesgue measure equal to zero}}
Let us first introduce and prove the following lemma:
\begin{lemma}
\label{strongmonotonicitiy}
When $\mbf G$ has a positive density except on a set with Lebesgue measure equal to zero, the function
\begin{align}
g_{\inf}:R\rightarrow \underset{\mathbf{Q}\in\mc Q_{P}}{\inf }\mbb P\left[f(\mbf G,\mbf Q)<R \right]. \label{ginf2}
\end{align}
is strictly monotone increasing.
\end{lemma}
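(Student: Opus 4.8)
The plan is to take the non-decreasing property of $g_{\inf}$ from Lemma~\ref{nondecreasingginf} and sharpen it to strict monotonicity on $[0,\infty)$ by exploiting the positivity of the density. Fix $0\le R_1<R_2$ and abbreviate $s_{\mbf Q}(R)=\mbb P\left[f(\mbf G,\mbf Q)<R\right]$. For each fixed $\mbf Q\in\mc Q_{P}$ the exact decomposition $\{f(\cdot,\mbf Q)<R_2\}=\{f(\cdot,\mbf Q)<R_1\}\sqcup\{R_1\le f(\cdot,\mbf Q)<R_2\}$ gives $s_{\mbf Q}(R_2)=s_{\mbf Q}(R_1)+\mbb P\left[R_1\le f(\mbf G,\mbf Q)<R_2\right]$, so the whole task is to exhibit a single covariance matrix $\mbf Q^\star$ that attains the infimum defining $g_{\inf}(R_2)$ and charges the strip $\{R_1\le f<R_2\}$ with positive probability. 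Concretely I would (i) show the infimum in $g_{\inf}(R_2)$ is attained at some $\mbf Q^\star\in\mc Q_{P}$; (ii) show $\mbf Q^\star\neq\mbf 0$; (iii) show $\mbb P\left[R_1\le f(\mbf G,\mbf Q^\star)<R_2\right]>0$; and (iv) chain $g_{\inf}(R_2)=s_{\mbf Q^\star}(R_2)\ge g_{\inf}(R_1)+\mbb P\left[R_1\le f(\mbf G,\mbf Q^\star)<R_2\right]>g_{\inf}(R_1)$, where the first inequality uses $s_{\mbf Q^\star}(R_1)\ge g_{\inf}(R_1)$.

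For step (i) I would first record that $\mc Q_{P}$ is compact, being the intersection of the closed cone of Hermitian positive semi-definite matrices with the closed trace set $\{\mathrm{tr}(\mbf Q)\le P\}$, which forces all eigenvalues into $[0,P]$ and hence boundedness. Then I would argue that $\mbf Q\mapsto s_{\mbf Q}(R)$ is lower semicontinuous: if $\mbf Q_n\to\mbf Q$, then for every $\mbf g$ with $f(\mbf g,\mbf Q)<R$ the joint continuity of $(\mbf g,\mbf Q)\mapsto f(\mbf g,\mbf Q)$ yields $f(\mbf g,\mbf Q_n)<R$ eventually, so $\liminf_n \mathbf 1_{\{f(\cdot,\mbf Q_n)<R\}}\ge \mathbf 1_{\{f(\cdot,\mbf Q)<R\}}$ pointwise, and Fatou's lemma gives $\liminf_n s_{\mbf Q_n}(R)\ge s_{\mbf Q}(R)$. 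A lower semicontinuous function on a compact set attains its infimum, producing $\mbf Q^\star$ with $g_{\inf}(R_2)=s_{\mbf Q^\star}(R_2)$. For step (ii), note that $f(\mbf g,\mbf 0)=0$ for all $\mbf g$, so $s_{\mbf 0}(R_2)=\mbb P\left[0<R_2\right]=1$ when $R_2>0$, whereas for any $\mbf Q\neq\mbf 0$ the set $\{\mbf g:f(\mbf g,\mbf Q)\ge R_2\}$ contains a nonempty open set (by the ray construction of step (iii)) and therefore has positive probability, so $s_{\mbf Q}(R_2)<1$; hence the minimizer cannot be $\mbf 0$.

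The heart of the argument is step (iii). Since $\mbf Q^\star\neq\mbf 0$ is positive semi-definite, I would pick a unit vector $u\in\mbb C^{N_T}$ with $u^H\mbf Q^\star u=\lambda>0$ and consider the family $\mbf g(t)=t\,e_1 u^H\in\mbb C^{N_R\times N_T}$ for $t\ge 0$, with $e_1\in\mbb C^{N_R}$ a standard basis vector. A direct computation gives $\mbf g(t)\mbf Q^\star\mbf g(t)^H=t^2\lambda\,e_1 e_1^H$, hence $f(\mbf g(t),\mbf Q^\star)=\log\bigl(1+t^2\lambda/\sigma^2\bigr)$, which is continuous in $t$, equals $0$ at $t=0$, and tends to $\infty$. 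By the intermediate value theorem there is $t_0$ with $R_1<f(\mbf g(t_0),\mbf Q^\star)<R_2$, and by continuity of $f(\cdot,\mbf Q^\star)$ this strict inequality persists on a nonempty open neighborhood $U\subset\mbb C^{N_R\times N_T}$ of $\mbf g(t_0)$. Such a $U$ has positive Lebesgue measure, so by the hypothesis that $\mbf G$ has a density positive outside a Lebesgue-null set, $\mbb P\left[\mbf G\in U\right]>0$, whence $\mbb P\left[R_1\le f(\mbf G,\mbf Q^\star)<R_2\right]\ge\mbb P\left[\mbf G\in U\right]>0$. Combining with step (iv) finishes the proof; the case $R_1<0$ is trivial since $f\ge 0$ forces $g_{\inf}\equiv 0$ on $(-\infty,0]$, reducing it to the situation already handled.

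The step I expect to be the main obstacle is guaranteeing that the infimum is \emph{attained} and at a \emph{nonzero} matrix. Without attainment I would be forced to work with a minimizing sequence $\mbf Q_n$ and to control the strip probabilities $\mbb P\left[R_1\le f(\mbf G,\mbf Q_n)<R_2\right]$ uniformly along it, which is delicate because these quantities could in principle degenerate to zero. The lower-semicontinuity-plus-compactness argument is precisely what removes this difficulty, while the positive-density hypothesis is exactly the ingredient that converts the topological statement ``$U$ is open and nonempty'' into the probabilistic statement ``$U$ carries positive probability.''
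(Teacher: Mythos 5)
Your proof is correct, and its skeleton is the same as the paper's: show the infimum defining $g_{\inf}$ is attained on the compact set $\mc Q_P$ (the paper's Claim \ref{claim1}), use the positive-density hypothesis to convert a non-empty open set of gain matrices into an event of positive probability (the paper's Claim \ref{claim2}), and chain the strict inequality through the minimizer. Two execution details differ, and both are improvements. First, for attainment you prove lower semicontinuity of $\mbf Q \mapsto \mbb P\left[f(\mbf G,\mbf Q)<R\right]$ and apply Fatou's lemma, whereas the paper passes to a convergent minimizing sequence and invokes dominated convergence based on almost-everywhere convergence of the indicators $\mbf I_{\{f(\mbf G,\mbf Q_n)\leq R\}}$; your one-sided argument avoids having to argue that the exceptional set $\{f(\cdot,\mbf Q_0)=R\}$ is Lebesgue-null, avoids the paper's silent switch between $\leq R$ and $<R$, and in fact needs no density hypothesis at all for this step. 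Second, the paper's Claim \ref{claim2} asserts that $s_{\mbf Q}$ is strictly increasing for \emph{every} $\mbf Q\in\mc Q_P$, resting on the assertion that $f_{\mbf Q}^{-1}(R_1,R_2)$ is automatically non-empty; that assertion fails for $\mbf Q=\mbf 0$ (where $f\equiv 0$), an edge case the paper glosses over. You instead establish strip positivity only at the minimizer, after explicitly showing via the ray $\mbf g(t)=t\,e_1u^H$ and the intermediate value theorem that the minimizer is nonzero and that the strip's preimage contains a non-empty open set; this is exactly what the final chaining in your step (iv) (and, implicitly, in the paper's last display) requires, and it closes the gap cleanly. One small caveat: your remark that the case $R_1<0$ "reduces to the situation already handled" is not quite right, since $g_{\inf}$ vanishes identically on $(-\infty,0]$ and so cannot be strictly increasing there; like the paper, your proof really establishes strict monotonicity on $[0,\infty)$, which is the only range in which the lemma is used.
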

\begin{proof}
We introduce and prove first the following claims:
\begin{claim} 
\textit{The infimum in} \eqref{ginf2} is a minimum. \label{claim1}
\end{claim}
\begin{claimproof}
Let $(\mbf Q_n)_{n=1}^{\infty}$ with each $\mbf Q_n \in \mc Q_P,$ such that
$\mbb P\left[f(\mbf G,\mbf Q_n)\leq R \right]$ converges to $\underset{\mbf Q \in \mc Q_{P}}{\inf} \ \mbb P\left[f(\mbf G,\mbf Q)\leq R \right].$
Since $\mc Q_P$ is a compact set, there exists a $\mbf Q_0 \in \mc Q_P$ such that 
$\underset{n\rightarrow\infty}{\lim} \mbf Q_n=\mbf Q_0.$ 
Notice that $\mbf I_{\{f(\mbf G,\mbf Q_n)\leq R \}}$ converges to  $\mbf I_{\{f(\mbf G,\mbf Q_0)\leq R \}}$ except on a set with Lebesgue measure equal to zero, where $\mbf I_{\{\cdot\}}$ refers to the indicator function.
Thus, it follows using the dominated  convergence theorem that
\begin{align}
    \mbb P \left[ f(\mbf G,\mbf Q_0)\leq R \right]&= \int \mbf I_{\{f(\mbf G,\mbf Q_0)\leq R\}} d\mbb P \nonumber \\
    &\overset{(a)}{=}\int \underset{n\rightarrow\infty}{\lim} \mbf I_{\{f(\mbf G,\mbf Q_n)\leq R\}} d\mbb P \nonumber \\
    &= \underset{n\rightarrow\infty}{\lim} \mbb P\left[ f(\mbf G,\mbf Q_n)\leq R \right] \nonumber \\
    &=  \underset{\mathbf{Q}\in\mc Q_{P}}{\inf }\mbb P\left[f(\mbf G,\mbf Q)<R \right], \nonumber
\end{align}
where $(a)$ follows from the absolute continuity of $r_{\mbf g}: \mbf Q \rightarrow f(\mbf g ,\mbf Q)$ in $\mc Q_P.$ 
Therefore, the infimum in \eqref{ginf2} is actually a minimum. This completes the proof of Claim \ref{claim1}.
\end{claimproof}
\begin{claim}
When $\mbf G$ has a positive density except on a set with Lebesgue measure equal to zero, the function $s_{\mbf Q}: R \rightarrow \mbb P\left[f(\mbf G,\mbf Q)<R \right]$ is strictly monotone increasing for any $\mbf Q \in \mc Q_{P}.$ \label{claim2}
\end{claim}
\begin{claimproof}
Notice first that for any $\mbf Q \in \mc Q_P,$ the function $f_{\mbf Q}: \mbf g \rightarrow f(\mbf g ,\mbf Q)$ is continuous. Let $0\leq R_1<R_2.$ Consider the open interval $(R_1,R_2).$ Denote the inverse image of $(R_1,R_2)$ under $f_{\mbf Q}$ by $f_{\mbf Q}^{-1}(R_1,R_2).$
From the continuity of $f_\mbf Q,$ it follows that $f_{\mbf Q}^{-1}(R_1,R_2)$
is an open and non-empty set.
This yields
\begin{align*}
    \mbb P\left[ R_1\leq f(\mbf G,\mbf Q)< R_2 \right] &\geq \mbb P \left[ \mbf G \in f_{\mbf Q}^{-1}(R_1,R_2)  \right] \nonumber \\
    &>0,
\end{align*}
where we used the fact that $\mbf G$ has a positive density except on a set with Lebesgue measure equal to zero. Therefore, the function $s_{\mbf Q}: R \rightarrow \mbb P\left[f(\mbf G,\mbf Q)<R \right]$ is strictly monotone increasing for any $\mbf Q \in \mc Q_{P}.$ This completes the proof of Claim \ref{claim2}.
\end{claimproof}
Now that we proved the two claims, we let $0\leq R_1< R_2.$ For any $\mbf Q \in \mc Q_{P},$ it holds that
\begin{align}
    \underset{\mathbf{Q}\in\mc Q_{P}}{\min }\mbb P\left[f(\mbf G,\mbf Q)<R_1 \right] \leq \mbb P\left[f(\mbf G,\mbf Q)<R_1 \right]. \label{eq1}
\end{align}

From Claim \ref{claim2}, we know that the function $s_{\mbf Q}: R \rightarrow \mbb P\left[f(\mbf G,\mbf Q)<R \right]$ is strictly monotone increasing in $R$ for any $\mbf Q \in \mc Q_{P}.$ This implies that for any $\mbf Q \in \mc Q_{P}$
\begin{align}
     \mbb P\left[f(\mbf G,\mbf Q)<R_1 \right]< \mbb P\left[f(\mbf G,\mbf Q)<R_2 \right]. \label{eq2}
\end{align}
It follows from \eqref{eq1} and \eqref{eq2}
that for all $\mbf Q \in \mc Q_{P}$
\begin{align}
  \underset{\mathbf{Q}\in\mc Q_{P}}{\min }\mbb P\left[f(\mbf G,\mbf Q)<R_1 \right] < \mbb P\left[f(\mbf G,\mbf Q)<R_2 \right]. \nonumber
\end{align}
This yields
\begin{align}
     \underset{\mathbf{Q}\in\mc Q_{P}}{\min }\mbb P\left[f(\mbf G,\mbf Q)<R_1 \right] <\underset{\mathbf{Q}\in\mc Q_{P}}{\min} \mbb P\left[f(\mbf G,\mbf Q)<R_2 \right]. \nonumber
\end{align}
It follows using Claim \ref{claim1} that
\begin{align}
    g_{\inf}(R_1)< g_{\inf}(R_2). \nonumber
\end{align}
We deduce that the function in \eqref{ginf2} is strictly monotone increasing. This completes the proof of Lemma \ref{strongmonotonicitiy}.
\end{proof}
Now that we proved Lemma \ref{strongmonotonicitiy}, suppose that $l(\eta)\neq u(\eta).$ Then, for any $l(\eta)<R<u(\eta),$ it follows from the strict monotonicity of $g_{\inf}$ that
\begin{align}
    g_{\inf}(l(\eta))<g_{\inf}(R)<g_{\inf}(u(\eta)), \nonumber
\end{align}
where $g_{\inf}(u(\eta))\leq \eta$ and since $R>l(\eta),$ it follows that $g_{\inf}(R)\geq \eta.$ Therefore, we have $g_{\inf}(R)<g_{\inf}(u(\eta))\leq g_{\inf}(R),$ which is a contradiction. Therefore, $l(\eta)$ and $u(\eta)$ must be equal.

\section{ Proof of Theorem \ref{outagecapacitySIMOSISO}}
\label{SIMOSISO}
\subsection{Proof of the outage transmission capacity for \texorpdfstring{$N_T=1$}{TEXT}}
\subsubsection{Direct Proof}
Under the assumption of the validity of Theorem \ref{capacitycompoundchannels}, which will be proved in Section \ref{proofcompoundcapacity}, we will show that for $N_T=1$
    \begin{equation}
       C_\eta(P,W_\mbf G)\geq R_{\eta,\sup}, \nonumber
    \end{equation}
   where 
    \begin{align}
   R_{\eta,\sup}=\sup \ \Big\{R: \mbb P\left[\log\det(\mathbf{I}_{N_{R}}+\frac{P}{\sigma^2}\mathbf{G}\mathbf{G}^{H})<R \right] \leq \eta\Big\}.   \label{supremumprob}
\end{align}
We first show that the supremum in \eqref{supremumprob} is actually a maximum.
 \begin{lemma}
 \label{supremummaximum}
 \begin{align}
   \mbb P\left[\log\det(\mathbf{I}_{N_{R}}+\frac{P}{\sigma^2}\mathbf{G}\mathbf{G}^{H})<R_{\eta,\sup}\right] \leq \eta \nonumber
 \end{align}
 so the supremum in \eqref{supremumprob} is actually a maximum.
 \end{lemma}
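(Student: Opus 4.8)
The plan is to treat $V := \log\det(\mbf I_{N_R}+\frac{P}{\sigma^2}\mbf G\mbf G^H)$ as a single real-valued random variable and to argue that the function $g(R):=\mbb P[V<R]$ is non-decreasing and \emph{left}-continuous in $R$. The claim then amounts to saying that the set $\{R:g(R)\leq\eta\}$ contains its own supremum $R_{\eta,\sup}$, which is exactly the content of left-continuity of $g$ combined with the fact that $g$ stays below $\eta$ on an approaching sequence.

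First I would record the elementary facts about $V$ that make $R_{\eta,\sup}$ a genuine finite number. Since $\mbf I_{N_R}+\frac{P}{\sigma^2}\mbf G\mbf G^H \succeq \mbf I_{N_R}$, its determinant is at least $1$, so $V\geq 0$ almost surely; in particular $g(R)=0$ for every $R\leq 0$, so $0$ belongs to the defining set and $R_{\eta,\sup}$ is well-defined. Because $\mbf G$ is almost surely finite, $V$ is almost surely finite, hence $g(R)\to 1$ as $R\to\infty$; since $\eta<1$, this forces the defining set to be bounded above, so $R_{\eta,\sup}<\infty$.

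The decisive step is continuity from below of the probability measure. By the definition of the supremum I can choose a strictly increasing sequence $(R_n)_{n=1}^\infty$ with $R_n<R_{\eta,\sup}$, $R_n\to R_{\eta,\sup}$, and $g(R_n)=\mbb P[V<R_n]\leq\eta$ for every $n$. The events $\{V<R_n\}$ are nested and increasing, and their union is exactly $\{V<R_{\eta,\sup}\}$, because $V<R_{\eta,\sup}$ holds if and only if $V<R_n$ for some $n$ (here I use $R_n\uparrow R_{\eta,\sup}$ strictly from below). Continuity from below then gives
\begin{align}
\mbb P[V<R_{\eta,\sup}]=\lim_{n\to\infty}\mbb P[V<R_n]\leq\eta, \nonumber
\end{align}
which is precisely the asserted inequality and shows that the supremum in \eqref{supremumprob} is attained, i.e. is a maximum.

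I do not expect a serious obstacle here: the only points requiring care are the union identity $\bigcup_n\{V<R_n\}=\{V<R_{\eta,\sup}\}$, which relies on $R_n$ approaching $R_{\eta,\sup}$ \emph{strictly} from below, and the implicit use that $\mbb P[V<R]$ (rather than $\mbb P[V\leq R]$) is the left-continuous version of the distribution function. The very same monotonicity-plus-left-continuity mechanism is what I would later reuse to equate $l(\eta^\star)$ with $u(\eta^\star)$ at points of continuity.
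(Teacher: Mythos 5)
Your proof is correct and follows essentially the same route as the paper's: both pick a sequence $R_n\nearrow R_{\eta,\sup}$ with $\mbb P[V<R_n]\leq\eta$, use the union identity $\{V<R_{\eta,\sup}\}=\bigcup_{n}\{V<R_n\}$, and conclude by continuity from below (sigma-continuity) of the probability measure. Your added remarks on finiteness of $R_{\eta,\sup}$ and the monotonicity of $g$ only make explicit what the paper leaves implicit.
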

	\begin{proof}
    Let $R_n\nearrow R_{\eta,\sup}$ be a sequence converging to $R_{\eta,\sup}$ from the left. Then 
    \[
        \{R\in \mbb R:R <R_{\eta,\sup}\}=\bigcup_{n=1}^\infty\{R\in \mbb R:R <R_n\}.
    \]
    From the sigma-continuity of probability measures, it follows that 
    \begin{align*}
        \mbb P[ \log\det(\mathbf{I}_{N_{R}}+\frac{P}{\sigma^2}\mathbf{G}\mathbf{G}^{H})<R_{\eta,\sup}]&=\underset{n\rightarrow \infty}{\lim}\mbb P[\log\det(\mathbf{I}_{N_{R}}+\frac{P}{\sigma^2}\mathbf{G}\mathbf{G}^{H})<R_n] \\
        &\leq\eta.
        \end{align*}
\end{proof}
	Now, consider the set
   $$\tilde{\mc G}_{a}=\{ \mbf g \in \mbb C^{N_{R}\times 1}:R_{\eta,\sup}\leq \log\det(\mathbf{I}_{N_{R}}+\frac{P}{\sigma^2}\mathbf{g}\mathbf{g}^{H})\  \text{and} \ \lVert \mbf g \rVert \leq a \}   $$
   for some $a>0$ chosen sufficiently large such that
 \begin{align}
 \Big\{\mbf g \in \mbb C^{N_R\times 1}: \lVert \mbf g
 \rVert=a\Big\}\subseteq \tilde{\mc G}_{a}.
  \nonumber
 \end{align}
 Such an $a>0$ exists because
 \begin{align}
       \underset{a\rightarrow \infty}{\lim}  \underset{ \substack{\mbf g \\{\lVert\mbf g \rVert=a}}}{\min}\log\det(\mathbf{I}_{N_{R}}+\frac{P}{\sigma^2}\mathbf{g}\mathbf{g}^{H})=\infty. \nonumber
    \end{align}
 Since the set $\Big\{ \mbf g \in \mbb C^{N_R\times 1}:R_{\eta,\sup} \leq \log\det(\mathbf{I}_{N_{R}}+\frac{P}{\sigma^2}\mathbf{g}\mathbf{g}^{H})\Big\}$ is closed, it follows that $\tilde{\mc G}_{a}$ is a closed subset of $\mc B_a=\{ \mbf g \in \mbb C^{N_{R}\times 1}: \lVert \mbf g \rVert \leq a\}.$ By applying Theorem \ref{capacitycompoundchannels} for $N_T=1$, it follows that the compound capacity of $\tilde{\mc C}=\{W_{\mbf g}: \mbf g\in \tilde{\mc G}_{a}\}$ is equal to 
\begin{align}
\underset{\mbf g\in\tilde{\mc G}_{a}}{\min} \log\det(\mathbf{I}_{N_{R}}+\frac{P}{\sigma^2}\mathbf{g}\mathbf{g}^{H}). \nonumber
\end{align} \color{black}
  
  Let $\theta,\delta>0$. One can now use the same argument as in the MIMO case to prove the existence of a block-length $n_0$ and of a
code sequence $(\Gamma_{n})_{n=1}^\infty,$ where each code $\Gamma_n$ of block-length $n$ has the same size and the same encoder as  the code $\Gamma_{\tilde{\mc G}_{a},n}$ of block-length $n$ and a decoder adjusted to the actual gain $\mbf g,$ such that
    \[
        \frac{\log\lVert\Gamma_{n}\rVert}{n}\geq R_{\eta,\sup}-\delta
    \]
    and such that 
 \begin{align}
    &\mbf g \in \tilde{\mc G}_{a}\cup \mc B_a^{c}  \implies e(\Gamma_{n},\mbf g) \leq \theta     \nonumber 
    \end{align}
    for $n\geq n_0.$

   We have for $n\geq n_0$
    \begin{align}
       \mbb P \left[e(\Gamma_{n},\mbf G) \leq \theta\right]& \geq \mbb P \left[ \mbf G \in \tilde{\mc G}_{a}\cup \mc B_a^{c}  \right] \nonumber\\
       &\geq \mbb P \left[ \log\det(\mathbf{I}_{N_{R}}+\frac{P}{\sigma^2}\mathbf{G}\mathbf{G}^{H})\geq R_{\eta,\sup} \right] \nonumber \\
       &\geq 1-\eta. \nonumber
    \end{align}
    This completes the direct proof of the $\eta$-outage transmission capacity for $N_T=1.$
\color{black}
\subsubsection{Converse Proof}
We are going to show that for $N_T=1,$
$$C_\eta(P,W_\mbf G)\leq \sup \ \Big\{R: \mbb P\left[\log\det(\mathbf{I}_{N_{R}}+\frac{P}{\sigma^2}\mathbf{G}\mathbf{G}^{H})<R \right] \leq \eta\Big\}.$$
For this purpose, we introduce and prove the following lemma:
\begin{lemma}
\label{intermediatelemma}
For $N_T=1,$ it holds that
\begin{align*}
    u(\eta)&=\sup \ \Big\{R: \underset{\mathbf{Q}\in\mc Q_{P}}{\inf }\mbb P\left[f(\mbf G,\mbf Q)<R \right] \leq \eta\Big\} \\
    &\leq  \sup \ \Big\{R: \mbb P\left[\log\det(\mathbf{I}_{N_{R}}+\frac{P}{\sigma^2}\mathbf{G}\mathbf{G}^{H})<R \right] \leq \eta\Big\}.
\end{align*}
\end{lemma}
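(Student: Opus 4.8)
The plan is to reduce the infimum over $\mc Q_P$ to a single evaluation at the maximal-power input, exploiting that $N_T=1$ collapses $\mc Q_P$ to a one-dimensional set. First I would observe that when $N_T=1$, every $\mbf Q \in \mc Q_P$ is a $1\times 1$ positive semi-definite Hermitian matrix, i.e. a scalar $q$ with $0\leq q\leq P$, while $\mbf g \in \mbb C^{N_R\times 1}$ is a column vector. Since $\mbf g\mbf Q\mbf g^H=q\,\mbf g\mbf g^H$ is rank one with single nonzero eigenvalue $q\lVert\mbf g\rVert^2$, the function admits the explicit scalar form $f(\mbf g,q)=\log(1+\tfrac{q}{\sigma^2}\lVert\mbf g\rVert^2)$, which is monotone non-decreasing in $q$. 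Hence for every $q\in[0,P]$ and every $\mbf g$ we have $f(\mbf g,q)\leq f(\mbf g,P)=\log\det(\mbf I_{N_R}+\tfrac{P}{\sigma^2}\mbf g\mbf g^H)$.

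The key step is to turn this pointwise bound into an inequality of probabilities that is uniform in $q$. Because $f(\mbf g,q)\leq f(\mbf g,P)$, the event $\{f(\mbf g,P)<R\}$ is contained in $\{f(\mbf g,q)<R\}$ for each $q\in[0,P]$; taking probabilities with respect to $\mbf G$ gives $\mbb P[f(\mbf G,q)<R]\geq \mbb P[\log\det(\mbf I_{N_R}+\tfrac{P}{\sigma^2}\mbf G\mbf G^H)<R]$ for every admissible $q$. Since the right-hand side does not depend on $q$, I can pass to the infimum and obtain, for all $R$, that $\inf_{\mbf Q\in\mc Q_P}\mbb P[f(\mbf G,\mbf Q)<R]\geq \mbb P[\log\det(\mbf I_{N_R}+\tfrac{P}{\sigma^2}\mbf G\mbf G^H)<R]$. (Equality in fact holds, attained at $q=P$, but only this direction is needed for the lemma.)

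I would finish by a set-inclusion argument on the two supremum sets. If $R$ satisfies $\inf_{\mbf Q\in\mc Q_P}\mbb P[f(\mbf G,\mbf Q)<R]\leq\eta$, then by the previous inequality $\mbb P[\log\det(\mbf I_{N_R}+\tfrac{P}{\sigma^2}\mbf G\mbf G^H)<R]\leq\eta$ as well; thus the set defining $u(\eta)$ is contained in $\{R:\mbb P[\log\det(\mbf I_{N_R}+\tfrac{P}{\sigma^2}\mbf G\mbf G^H)<R]\leq\eta\}$, and taking suprema yields the asserted inequality. I do not anticipate a genuine obstacle: the only point needing care is that the infimum over $\mc Q_P$ is governed by the single rank-one evaluation at full power, which is immediate once $N_T=1$ reduces $\mc Q_P$ to the scalar interval $[0,P]$ and $\log\det$ becomes a scalar logarithm. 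Notably, the argument requires no regularity of the distribution of $\mbf G$, consistent with the theorem's validity for arbitrary state distributions.
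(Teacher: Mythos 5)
Your proposal is correct and follows essentially the same route as the paper: the pointwise bound $f(\mbf g,\mbf Q)\leq\log\det(\mbf I_{N_R}+\frac{P}{\sigma^2}\mbf g\mbf g^H)$, the resulting uniform probability inequality, the inclusion of the set defining $u(\eta)$ in the set defining the right-hand supremum, and taking suprema. The only difference is cosmetic: you justify the pointwise bound by explicitly reducing $\mc Q_P$ to the scalar interval $[0,P]$ and using the rank-one eigenvalue, whereas the paper states the same inequality directly.
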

\begin{proof}
Notice first that for $N_T=1,$  it holds that for any $\mbf Q \in \mc Q_{P}$ and any $\mbf g \in \mbb C^{N_R\times 1}$
\begin{align*}
    f(\mbf g, \mbf Q)&=\log\det(\mathbf{I}_{N_{R}}+\frac{1}{\sigma^2}\mathbf{g}\mathbf{Q}\mathbf{g}^{H})\\&\leq \log\det(\mathbf{I}_{N_{R}}+\frac{P}{\sigma^2}\mathbf{g}\mathbf{g}^{H}).
\end{align*}
Therefore, for any $\mbf Q \in \mc Q_P$ and any $R \in \mbb R,$ we have 
\begin{align*}
     \mbb P\left[\log\det(\mathbf{I}_{N_{R}}+\frac{P}{\sigma^2}\mathbf{G}\mathbf{G}^{H})<R \right] \leq \mbb P \left[f(\mbf G, \mbf Q)<R\right].
\end{align*}
This implies that for any $\mbf Q \in \mc Q_P$ and any $R\in \mbb R$
\begin{align*}
     \mbb P\left[\log\det(\mathbf{I}_{N_{R}}+\frac{P}{\sigma^2}\mathbf{G}\mathbf{G}^{H})<R \right] \leq \underset{\mbf Q \in \mc Q_{P}}{\inf} \mbb P \left[f(\mbf G, \mbf Q)<R\right].
\end{align*}
It follows that 
\begin{align}
 \Big\{R:\ \underset{\mbf Q \in \mc Q_{P}}{\inf}\mbb P\left[f(\mbf G,\mbf Q)<R\right] \leq \eta\Big\} 
   &\subseteq   \Big\{R: \ \mbb P\left[\log\det(\mathbf{I}_{N_{R}}+\frac{P}{\sigma^2}\mathbf{G}\mathbf{G}^{H})<R\right]\leq \eta \Big\}  . \nonumber
\end{align}
This implies that
\begin{align}
u(\eta) &\leq \sup \ \Big\{R: \mbb P\left[\log\det(\mathbf{I}_{N_{R}}+\frac{P}{\sigma^2}\mathbf{G}\mathbf{G}^{H})<R \right] \leq \eta\Big\}.   \nonumber
\end{align}
\end{proof}
Now, from Theorem \ref{cetathmMIMO}, we know that
\begin{align}
    C_\eta(P,W_\mbf G)\leq u(\eta). \nonumber
\end{align}
By Lemma \ref{intermediatelemma}, it follows that for $N_T=1$
\begin{align}
    C_\eta(P,W_\mbf G) \leq \sup \ \Big\{R: \mbb P\left[\log\det(\mathbf{I}_{N_{R}}+\frac{P}{\sigma^2}\mathbf{G}\mathbf{G}^{H})<R \right] \leq \eta\Big\}. \nonumber
\end{align}
This completes the converse proof of the $\eta$-outage transmission capacity for $N_T=1.$
\color{black}    
	
\subsection{Alternative Proof of the outage transmission capacity for \texorpdfstring{$N_T=N_R=1$}{TEXT}}
\label{proofSISOcapacity}
In this section, we will show that the $\eta$-outage transmission capacity for the SISO case is equal to
 
 \begin{align}   
        C_\eta(P,W_\mbf G)=\log\left(1+\frac{P\gamma_0^2}{\sigma^2}\right),\nonumber
    \end{align}
    where 
     $$\gamma_0=\inf\{\gamma:\mbb P[\lvert \mbf G\rvert^2\geq \gamma]\leq1-\eta\}.$$
    
  Analogously to the proof of Lemma \ref{supremummaximum}, one can first show that for $N_T=N_R=1$
    \[
        \mbb P[|\mbf G|^2\geq \gamma_0]\leq 1-\eta.
    \]

\subsubsection{Direct Proof} We will show that for $N_T=N_R=1$
    \begin{equation}\label{eq:outage_geqSISO}
        C_\eta(P,W_\mbf G)\geq\log\left(1+\frac{P\gamma_0}{\sigma^2}\right).
    \end{equation}
  Let $s \in \mbb C $ such that $|s|^2=\gamma_0$ and let $\theta,\delta>0.$ It is well-known that there exists a code sequence $(\Gamma_{s,n})_{n=1}^\infty$ for the channel $W_s$ and a block-length $n_0$ such that for $n\geq n_0$, the rate of each code $\Gamma_{s,n}$ of block-length $n$ satisfies
    \[
        \frac{\log\lVert \Gamma_{s,n}\rVert}{n}\geq\log\left(1+\frac{P\gamma_0}{\sigma^2}\right)-\delta
    \]
    and such that
    \begin{align}
        e(\Gamma_{s,n},s)\leq \theta. \nonumber
    \end{align}
    
  For any $g$ with $|g|^2\geq \gamma_0,$ 
  the SISO Gaussian channel $W_s$ is degraded from the SISO Gaussian channel $W_{g}.$ Analogously to the MIMO case, it follows that there exists a code sequence $(\Gamma_{g,n})_{n=1}^{\infty}$ for $W_g$ such that  each code $\Gamma_{g,n}$ of block-length $n$ has the same encoder and the same size as $\Gamma_{s,n}$ but a different decoder adjusted to $g$ and such that for $n\geq n_0,$ $e(\Gamma_{g,n}, g)\leq \theta$. Here, we require channel state information at the receiver side (CSIR) so that the decoder can adjust its decoding strategy according to the channel state.
    So far, we have proved the existence of a
code sequence $(\Gamma_{n})_{n=1}^\infty$ and a block-length $n_0$ such that
    \[
        \frac{\log\lVert\Gamma_{n}\rVert}{n}\geq \log\left(1+\frac{P\gamma_0}{\sigma^2}\right)-\delta
    \]
    and such that 
 \begin{align}
    &\lvert g \rvert^2 \geq \gamma_0 \implies e(\Gamma_{n},g) \leq \theta     \nonumber 
    \end{align}
    for $n\geq n_0.$
    Now, for $n\geq n_0,$ we have
    \begin{align}
        \mbb P[e(\Gamma_n,\mbf G)\leq\theta]
        &\geq\mbb P[\lvert \mbf G\rvert^2\geq\gamma_0] \nonumber \\
        &\geq 1-\eta. \nonumber 
    \end{align}
    This implies \eqref{eq:outage_geqSISO} and completes the direct proof.
    \subsubsection{Converse proof}
      We will show that for $N_T=N_R=1$
    \begin{equation}\label{eq:outage_leq}
       C_\eta(P,W_\mbf G)\leq\log\left(1+\frac{P\gamma_0}{\sigma^2}\right). 
    \end{equation}
    Suppose this were not true. Then there exists an $\varepsilon>0$ such that for all $\theta,\delta>0$ there exists a code sequence $(\Gamma_n)_{n=1}^\infty$ satisfying 
    \begin{equation}\label{eq:larger}
        \frac{\log\lVert \Gamma_n\rVert}{n}\geq\log\left(1+\frac{P(\gamma_0+\varepsilon)}{\sigma^2}\right)-\delta
    \end{equation}
    and
    \begin{equation}\label{eq:converse_err}
        \mbb P[e(\Gamma_n,\mbf G)\leq\theta]\geq 1-\eta
    \end{equation}
    for sufficiently large $n$.  Since $\delta$ may be arbitrary, we may choose it in such a way that the right-hand side of \eqref{eq:larger} is strictly larger than $\log(1+(P\gamma_0)/\sigma^2)$. We define $\gamma_1$ to be the solution of the equation $$\log(1+(P\gamma_1)/\sigma^2)=\log\left(1+\frac{P(\gamma_0+\varepsilon)}{\sigma^2}\right)-\delta.$$  $\gamma_1$ is chosen such that the rate of the code sequence is greater than the capacity of the channel $W_g$ when $|g|^2<\gamma_1$. Therefore, even under the CSIR assumption, the strong converse for SISO Gaussian channels implies that for large $n,$ the error probability is greater than $\theta$ when  $|g|^2<\gamma_1.$ 
    It follows that 
       \begin{align}
        \mbb P[e(\Gamma_n,\mbf G)>\theta]
        &\geq\mbb P[\lvert \mbf G\rvert^2 < \gamma_1] \nonumber \\
        &>\eta, \nonumber
        \end{align}
        by the definition of $\gamma_0$, where we used that $\gamma_1>\gamma_0$ from the choice of $\delta$. This is a contradiction to \eqref{eq:converse_err}, and so \eqref{eq:outage_leq} must be true. This completes the converse proof.
\section{Proof of Theorem \ref{ccretathmMIMO}}
\label{proofoutagecrcapacity}
\subsection{Proof of the Lower Bound on the Outage CR Capacity}
\label{prooflowerboundcrcapacity}
\subsubsection{\text{If} \texorpdfstring{$l(\eta)=0$}{TEXT}} It is shown in \cite{part2} that when the terminals do not communicate over the channel, the CR capacity defined in \cite{part2} is equal to 
\begin{align}
    H_{0}=\underset{ \substack{U \\{\substack{U \circlearrow{X} \circlearrow{Y}\\ I(U;X)-I(U;Y) \leq 0}}}}{\max} I(U;X).  \nonumber
\end{align}
Hence, when the terminals do not communicate over the MIMO slow fading channel $W_\mbf G$, $H_0$ is also an achievable $\eta$-outage CR rate. Therefore, we have 
    \begin{align}
C_{\eta,CR}^{X,Y}(P,W_\mbf G) &\geq 
  \underset{ \substack{U \\{\substack{U \circlearrow{X} \circlearrow{Y}\\ I(U;X)-I(U;Y) \leq 0}}}}{\max} I(U;X)  \nonumber \\
  &=  \underset{ \substack{U \\{\substack{U \circlearrow{X} \circlearrow{Y}\\ I(U;X)-I(U;Y) \leq l(\eta)}}}}{\max} I(U;X).  \nonumber
\end{align}
\subsubsection{If \texorpdfstring{$l(\eta)>0$}{TEXT}} We extend the coding scheme provided in \cite{part2} to MIMO slow fading channels. By continuity, it suffices to show that 
$$ \underset{ \substack{U \\{\substack{U \circlearrow{X} \circlearrow{Y}\\ I(U;X)-I(U;Y) \leq R'}}}}{\max} I(U;X)  $$ is an achievable $\eta$-outage CR rate for every $R'<l(\eta).$
Let $U$ be a random variable satisfying $U \circlearrow{X} \circlearrow{Y}$ and $I(U;X)-I(U;Y) \leq R'$. Let the upper-bound  $0\leq \eta<1$ on the outage probability, from the CR generation perspective, be fixed arbitrarily. We are going to show that $H=I(U;X)$ is an achievable $\eta$-outage CR rate. Let $\alpha,\delta>0$. Without loss of generality, assume that the distribution of $U$ is a possible type for block-length $n$.
For any $\mu>0,$ we let
\begin{align}
N_{1}&=\lfloor 2^{n[I(U;X)-I(U;Y)+3\mu]} \rfloor \nonumber
\end{align}
and
\begin{align}
N_{2}&=\lfloor 2^{n[I(U;Y)-2\mu]}\rfloor. \nonumber
\end{align}For each pair $(i,j)$ with $1\leq i \leq N_{1}$ and $1\leq j \leq N_{2}$, we define a random sequence $\bs{U}_{i,j}\in\mathcal{U}^n$ of type $P_{U}$. Let $\mbf M=\bs{U}_{1,1},\hdots, \bs{U}_{N_{1},N_{2}}$  be the joint random variable of all $\bs{U}_{i,j}s.$ We define $\Phi_{\mbf M}$ as follows:
 Let $\Phi_{\mbf M}(X^n)=\bs{U}_{ij}$, if $\bs{U}_{ij}$ is jointly $UX$-typical with $X^n$ (either one if there are several). If no such $\bs{U}_{i,j}$ exists, then  $\Phi_{\mbf M}(X^n)$ is set to a constant sequence $\bs{u}_0$ different from all the  ${\bs{U}_{ij}}s$, jointly $UX$-typical with none of the realizations of $X^n$ and known to both terminals.
 
We further define the following two sets which depend on $\mbf M$:
\begin{align}
    S_{1}(\mbf M)&=\{(\bs{x},\bs{y}):(\Phi_{\mbf M}(\bs{x}),\bs{x},\bs{y}) \in \mathcal{T}_{U,X,Y}^{n}\} \nonumber
\end{align} and
\begin{align}
    S_{2}(\mbf M)=\{(\bs{x},\bs{y}):(\bs{x},\bs{y}) \in S_{1}(\mbf M) \ \text{s.t.} \ \exists \ \bs{U}_{i,\ell}\neq\bs{U}_{i,j}=\Phi(\bs{x}) \nonumber \\   \text{jointly} \ UY\text{-typical with} \ \bs{y} \ (\text{with the same first index} \ i)
\}.\nonumber
\end{align}
It is proved in \cite{part2} that 
\begin{align}
    \mathbb{E}_{\mbf M}\left[ \mbb P\left[(X^n,Y^n)\notin  S_{1}(\mbf M)\right]+\mbb P\left[(X^n,Y^n)\in  S_{2}(\mbf M)\right]\right]\leq \beta(n),
    \label{averagebeta}
\end{align}
where $\beta(n) \leq \frac{\alpha}{2}$ for sufficiently large $n$. 
We choose a realization $\mbf m=\bs{u}_{1,1},\hdots, \bs{u}_{N_1,N_2}$ satisfying:
\begin{align}
\mbb P\left[(X^n,Y^n)\notin  S_{1}(\mbf m)\right]+\mbb P\left[(X^n,Y^n)\in  S_{2}(\mbf m)\right]\leq \beta(n). \nonumber
\end{align} 
From \eqref{averagebeta}, we know that such a realization exists. We denote $\Phi_{\mbf m}$ by $\Phi.$
We assume that each $\bs{u}_{i,j}, i=1\hdots N_1, j=1\hdots N_2,$  is known to both terminals.  
This means that  $N_{1}$ codebooks $C_{i}, 1\leq i \leq N_{1}$, are known to both terminals, where each codebook contains $N_{2}$ sequences, $ \bs{u}_{i,j}, \ j=1,\hdots, N_2$. 

Let $\bs{x}$ be any realization of $X^n$ and $\bs{y}$ be any realization of $Y^n.$
 Let $f_1(\bs{x})=i$ if $\Phi(\bs{x})=\bs{u}_{i,j}$. Otherwise, if $\Phi(\bs{x})=\bs{u}_{0},$ then $f_1(\bs{x})=N_1+1.$
  Since $ C'<l(\eta)$, we choose $\mu$ to be sufficiently small such that
      \begin{align}
     \frac{\log \lVert f_1 \rVert}{n}&=\frac{\log(N_1+1)}{n} \nonumber \\
     &\leq l(\eta)-\mu',
     \label{inequalitylogfSISO}
      \end{align}
for some $\mu'>0,$
 The message $i^\star=f_1(\bs{x})$, with $i^\star\in\{1,\hdots,N_1+1\},$ is encoded to a sequence $\mbf t$ using a code sequence $(\Gamma^\star_n)_{n=1}^{\infty},$ where each code $\Gamma^\star_n$ of block-length $n$ is defined according to Definition \ref{defcode}, with rate $\frac{\log \lVert \Gamma^\star_n \rVert}{n}=\frac{\log \lVert f_1 \rVert}{n}$ satisfying \eqref{inequalitylogfSISO}
 and with error probability $e(\Gamma^\star_n,\mbf G)$ satisfying for sufficiently large $n$
 \begin{align}
     \mbb P\left[e(\Gamma^\star_n,\mbf G) \leq \theta \right] \geq 1-\eta,
     \label{erroroutage}
 \end{align}
where $\theta$ is  a positive constant satisfying $\theta\leq \frac{\alpha}{2}.$ 
  \color{black}
  Here, $\lVert f_1 \rVert$ refers to the cardinality of the set of messages $\{i^\star:i^\star=1,\hdots,N_1+1\}
$.
  Since $l(\eta)$ is an achievable $\eta$-outage transmission rate, we know that such a code sequence exists. The sequence $\mbf t$ is sent over the MIMO slow fading channel. Let $\mbf z$ be the corresponding channel output sequence. Terminal $B$ decodes the message $\tilde{i}^\star$ from the knowledge of $\mbf z.$
Let $\Psi(\bs{y},\mbf z)=\bs{u}_{\tilde{i}^\star,j}$ if $\bs{u}_{\tilde{i}^\star,j}$ and $\bs{y}$ are jointly $UY$-typical . If there is no such  $\bs{u}_{\tilde{i}^\star,j}$ or there are several, we set $\Psi(\bs{y},\mbf z)=\bs{u}_0$ (since $K$ and $L$ must have the same alphabet).
Now, we are going to show that the requirements in $\eqref{errorMIMOcorrelated},$  $\eqref{cardinalityMIMOcorrelated}$ and $\eqref{rateMIMOcorrelated}$ are satisfied.
Clearly, (\ref{cardinalityMIMOcorrelated}) is satisfied  for $c=H(X)+\mu+1$  because
{{\begin{align}
|\mathcal{K}|&=N_1 N_2+1 \nonumber \\
             &\leq  2^{n\left[I(U;X)+\mu\right]}+1 \nonumber \\
             &\leq2^{n\left[H(X)+\mu+1\right]}.\nonumber
\end{align}}}We define next for any $(i,j)\in \{1,\hdots,n\}\times\{1,\hdots,n\}$  the set
$$\mc S=\{ \bs{x}\in\mathcal{X}^{n} \ \text{s.t.} \ (\bs{u}_{i,j},\bs{x}) \ \text{jointly} \ UX\text{-typical}\}.$$
Then, it holds that 
\begin{align}
\mbb P[K=\bs{u}_{i,j}] &=\sum_{\bs{x}\in\mc S}\mbb P[K=\bs{u}_{i,j}|X^n=\bs{x}]P_{X}^n(\bs{x}) +\sum_{\bs{x}\in\mc S^c}\mbb P[K=\bs{u}_{i,j}|X^n=\bs{x}]P_{X}^n(\bs{x}) \nonumber \\
&\overset{(a)}{=}\sum_{\bs{x}\in\mc S}\mbb P[K=\bs{u}_{i,j}|X^n=\bs{x}]P_{X}^n(\bs{x}) \nonumber \\
&\leq \sum_{\bs{x}\in\mc S}P_{X}^n(\bs{x}) \nonumber \\
&=P_{X}^{n}(\{\bs{x}: (\bs{u}_{i,j},\bs{x}) \ \text{jointly} \ UX\text{-typical}\}) \nonumber\\
& = 2^{-nI(U;X)-\kappa(n)}, \nonumber
\end{align}
for some $\kappa(n)>0$ with $\underset{n\rightarrow \infty}{\lim} \frac{\kappa(n)}{n}=0$,
where $(a)$ follows because for  $(\bs{u}_{i,j},\mathbf{x})$ being not jointly $UX$-typical, we have $\mbb P[K=\bs{u}_{i,j}|X^n=\bs{x}]=0.$ This yields
{{\begin{align}
H(K)\geq nI(U;X)-\kappa'(n)
\nonumber \end{align}}}
for some $\kappa'(n)>0$ with $\underset{n\rightarrow \infty}{\lim} \frac{\kappa'(n)}{n}=0.$
Therefore, for sufficiently large $n,$ it holds that
\begin{align}
    \frac{H(K)}{n}>H-\delta. \nonumber
\end{align}
Thus, (\ref{rateMIMOcorrelated}) is satisfied.
\begin{remark}
It is to notice that for sufficiently large $n$ $$H(K)\approx \log\lvert \mc K \rvert \approx nI(U;X).$$
Therefore the random variable $K$ is nearly uniform for sufficiently large $n.$
It follows from Remark \ref{fastgleichentropyrate} that, for sufficiently large $n$, the random variable $L$ is also nearly uniform when the system is not in outage from the CR generation perspective. As result, when the system is not in outage and for sufficiently large $n$, $(K,L)$ is a pair of  nearly uniform random variables. This is the most convenient form of CR, as already mentioned in Remark \ref{remarkUCR}.
\end{remark} 
\quad Now, it remains to prove that \eqref{errorMIMOcorrelated} is satisfied. For this purpose, we define the following event:
\begin{align}
    \mathcal{D}_{\mbf m}= ``\Phi(X^n) \ \text{is equal to none of the} \  {\bs{u}_{i,j}}s". \nonumber
\end{align}
We denote its complement by $\mc D_{\mbf m}^{c}.$
We further define $I^\star=f_1(X^n)$ to be the random message generated by Terminal $A$ and  $\tilde{I}^\star$ to be the random message decoded by Terminal $B$. 
We have
\begin{align}
    \mbb P[K\neq L|\mbf G] \nonumber &=\mbb P[K\neq L|\mbf G,I^\star=\tilde{I}^\star]\mbb P[I^\star=\tilde{I}^\star|\mbf G] + \mbb P[K\neq L|\mbf G,I^\star\neq \tilde{I}^\star]\mbb P[I^\star\neq\tilde{I}^\star|\mbf G] \nonumber \\
        &\leq \mbb P[K\neq L|\mbf G,I^\star=\tilde{I}^\star]+ \mbb P[I^\star\neq\tilde{I}^\star|\mbf G].\nonumber
\end{align}
Here,
\begin{align}
    \mbb P[K\neq L|\mbf G,I^\star=\tilde{I}^\star] 
   &= \mbb P[K\neq L|\mbf G,I^\star=\tilde{I}^\star,\mathcal{D}_{\mbf m}]\mbb P[\mathcal{D}_{\mbf m}|\mbf G,I^\star=\tilde{I}^\star] + \mbb P[K\neq L|\mbf G,I^\star=\tilde{I}^\star,\mathcal{D}_{\mbf m}^c]\mbb P[\mathcal{D}_{\mbf m}^c|\mbf G,I^\star=\tilde{I}^\star] \nonumber \\
   &\overset{(a)}{=}\mbb P[K\neq L|\mbf G,I^\star=\tilde{I}^\star,\mathcal{D}_{\mbf m}^c]\mbb P[\mathcal{D}_{\mbf m}^c|\mbf G,I^\star=\tilde{I}^\star] \nonumber \\
   &\leq \mbb P[K\neq L|\mbf G,I^\star=\tilde{I}^\star,\mathcal{D}_{\mbf m}^c],\nonumber
\end{align}
where $(a)$ follows from $\mbb P[K\neq L|\mbf G,I^\star=\tilde{I}^\star,\mathcal{D}_{\mbf m}]=0,$ since conditioned on $\mbf G$, $I^\star=\tilde{I}^\star$ and $\mathcal{D}_{\mbf m}$, we know that $K$ and $L$ are both equal to $\bs{u}_0$.
It follows that
\begin{align}
    \mbb P[K\neq L|\mbf G] 
    &\leq \mbb P[K\neq L|\mbf G,I^\star=\tilde{I}^\star,\mathcal{D}_{\mbf m}^c]+ \mbb P[I^\star\neq\tilde{I}^\star|\mbf G] \nonumber \\
    &\leq \mbb P\left[(X^n,Y^n)\in  S_{1}^{c}(\mbf m)\cup S_{2}(\mbf m)\right]+\mbb P[I^\star\neq\tilde{I}^\star|\mbf G] \nonumber \\
    &\overset{(a)}{=}\mbb P\left[(X^n,Y^n)\notin  S_{1}(\mbf m)\right]+\mbb P\left[(X^n,Y^n)\in  S_{2}(\mbf m)\right] +\mbb P[I^\star\neq\tilde{I}^\star|\mbf G] \nonumber \\
    &\leq \beta(n)+ \mbb P[I^\star\neq\tilde{I}^\star|\mbf G],\nonumber
\end{align}
where $(a)$ follows because $S_{1}^{c}(\mbf m)$ and $S_{2}(\mbf m)$ are disjoint. 
It holds that
\begin{align}
    &\mbb P\left[I^\star\neq \tilde{I}^\star|\mbf G\right]\leq \theta \implies \mbb P[K\neq L|\mbf G] \leq \beta(n)+ \theta. \nonumber
\end{align}
Since, for sufficiently large $n,$ $\beta(n)+ \theta\leq \alpha$, it follows that
\begin{align}
    &\mbb P\left[I^\star\neq \tilde{I}^\star|\mbf G\right]\leq \theta  \implies \mbb P[K\neq L|\mbf G] \leq \alpha. \nonumber
\end{align}
From \eqref{erroroutage}, we know that
\begin{align}
   \mbb P\left[ \mbb P\left[I^\star\neq \tilde{I}^\star|\mbf G\right]\leq \theta\right] \geq 1-\eta. \nonumber
\end{align}
Thus
\begin{align}
    \mbb P\left[  \mbb P[K\neq L|\mbf G] \leq \alpha          \right] &\geq \mbb P\left[\mbb P\left[ I^\star\neq \tilde{I}^\star|\mbf G     \right]\leq \theta\right] \nonumber \\
    &\geq 1-\eta. \nonumber
\end{align} 
This completes the proof of the lower-bound on the $\eta$-outage CR capacity.
\subsection{Proof of the Upper Bound on the Outage CR Capacity}
Let $0\leq \eta <1.$ Let $H$ be any achievable $\eta$-outage CR rate. So, there exists a non-negative constant $c$ such that for every $\alpha>0$ and $\delta>0$ and for sufficiently large $n,$ there exists a permissible pair of random variables $(K,L)$ according to a fixed CR-generation protocol of block-length $n$ as introduced in Section \ref{systemmodel} such that
\begin{equation}
    \mbb P\left[\mbb P\left[K\neq L|\mbf G\right]\leq \alpha \right]\geq 1-\eta, 
    \label{errorMIMOcorrelatedrepeat}
\end{equation}
\begin{equation}
    |\mathcal{K}|\leq 2^{cn},
    \label{cardinalityMIMOcorrelatedrepeat}
\end{equation}
\begin{equation}
    \frac{1}{n}H(K)> H-\delta.
     \label{rateMIMOcorrelatedrepeat}
\end{equation}
We recall that the CR generation protocol consists of:
\begin{enumerate}
    \item A function $\Phi$ that maps $X^n$ into a random variable $K$ with alphabet $\mathcal{K}$ generated by Terminal $A.$
    \item A function $\Lambda$ that maps $X^n$ into the input sequence $\bs{T}^n \in \mbb C^{N_T\times n}$  satisfying the following power constraint
\begin{equation}
\frac{1}{n}\sum_{i=1}^{n}\bs{T}_{i}^H\bs{T}_{i}\leq P, \quad \text{almost surely}.   \nonumber
\end{equation}
    \item A function $\Psi$ that maps $Y^n$ and the  output sequence $\bs{Z}^n \in \mbb C^{N_R\times n}$ into a random variable $L$ with alphabet $\mathcal{K}$ generated by Terminal $B.$
\end{enumerate}
We are going to show that for any $\epsilon>0$
\begin{align}
    \frac{H(K)}{n} \leq \underset{ \substack{U \\{\substack{U \circlearrow{X} \circlearrow{Y}\\ I(U;X)-I(U;Y) \leq u(\eta)+\zeta(n,\alpha,\epsilon)}}}}{\max} I(U;X), \nonumber 
\end{align}
where $u(\eta)$ is defined in \eqref{Retasupu}
and where $\zeta(n,\alpha,\epsilon)=\frac{1}{n}+\alpha c+\epsilon.$ 
\color{black}
In our proof, we will use  the following lemma: 
\begin{lemma} (Lemma 17.12 in \cite{codingtheorems})
For arbitrary random variables $S$ and $R$ and sequences of random variables $X^{n}$ and $Y^{n}$, it holds that
\begin{align}
 I(S;X^{n}|R)-I(S;Y^{n}|R)  
 &=\sum_{i=1}^{n} I(S;X_{i}|X_{1},\dots, X_{i-1}, Y_{i+1},\dots, Y_{n},R) \nonumber \\ &\quad -\sum_{i=1}^{n} I(S;Y_{i}|X_{1},\dots, X_{i-1}, Y_{i+1},\dots, Y_{n},R) \nonumber \\
 &=n[I(S;X_{J}|V)-I(S;Y_{J}|V)],\nonumber
\end{align}
where $V=(X_{1},\dots, X_{J-1},Y_{J+1},\dots, Y_{n},R,J)$, with $J$ being a random variable independent of $R$,\ $S$, \ $X^{n}$ \ and $Y^{n}$ and uniformly distributed on $\{1 ,\dots, n \}$.
\label{lemma1}
\end{lemma}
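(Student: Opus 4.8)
The plan is to prove the two equalities separately. Throughout write $X^{i-1}=(X_1,\dots,X_{i-1})$ and $Y_{i+1}^{n}=(Y_{i+1},\dots,Y_n)$, with the conventions $X^{0}=\emptyset$ and $Y_{n+1}^{n}=\emptyset$. The first equality is a conditional form of the Csisz\'ar sum identity; I would establish it by a telescoping argument, which is cleaner than manipulating the sum identity directly. The second equality is the standard single-letterization step obtained by introducing a uniform time-sharing index.

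For the first equality, I would introduce the ``hybrid'' quantities $A_i = I(S;\,X^{i},Y_{i+1}^{n}\mid R)$ for $i=0,1,\dots,n$. By the boundary conventions $A_0 = I(S;Y^n\mid R)$ and $A_n = I(S;X^n\mid R)$, so that
\begin{align}
I(S;X^n\mid R)-I(S;Y^n\mid R) = \sum_{i=1}^{n}\bigl(A_i - A_{i-1}\bigr). \nonumber
\end{align}
Each increment is then evaluated by the chain rule: $A_i$ and $A_{i-1}$ share the common block $I(S;\,X^{i-1},Y_{i+1}^{n}\mid R)$ and differ only in the single extra symbol, namely $X_i$ in $A_i$ and $Y_i$ in $A_{i-1}$. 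Expanding $A_i = I(S;\,X^{i-1},Y_{i+1}^n\mid R)+I(S;\,X_i\mid X^{i-1},Y_{i+1}^n,R)$ and, analogously, $A_{i-1} = I(S;\,X^{i-1},Y_{i+1}^n\mid R)+I(S;\,Y_i\mid X^{i-1},Y_{i+1}^n,R)$ and subtracting cancels the common term, leaving exactly the summand $I(S;X_i\mid X^{i-1},Y_{i+1}^n,R)-I(S;Y_i\mid X^{i-1},Y_{i+1}^n,R)$ claimed in the lemma.

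For the second equality, I would introduce a random variable $J$ uniformly distributed on $\{1,\dots,n\}$ and independent of $(S,R,X^n,Y^n)$, and set $V=(X^{J-1},Y_{J+1}^n,R,J)$. Since $J$ is a component of $V$, conditioning on $V$ conditions on $J$, and on the event $\{J=i\}$ one has $X_J=X_i$ while $V$ reduces to $(X^{i-1},Y_{i+1}^n,R,i)$; the constant index $i$ does not affect the mutual information. Hence $I(S;X_J\mid V)=\sum_{i=1}^n \mathbb{P}[J=i]\,I(S;X_i\mid X^{i-1},Y_{i+1}^n,R)=\tfrac1n\sum_{i=1}^n I(S;X_i\mid X^{i-1},Y_{i+1}^n,R)$, and likewise for $Y_J$. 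Multiplying by $n$ and subtracting reproduces the sum obtained in the first equality.

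I do not anticipate a real obstacle: the single point requiring care is the choice of the telescoping sequence $A_i$, which is designed so that each increment collapses to one pair of single-letter terms; the remainder is the chain rule and the definition of the time-sharing variable. Note that the statement is quoted as Lemma 17.12 of \cite{codingtheorems}, so one could alternatively invoke that reference directly.
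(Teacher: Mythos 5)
Your proof is correct. There is nothing in the paper to compare it against: the paper does not prove this lemma at all, but simply quotes it as Lemma 17.12 of Csisz\'ar and K\"orner and uses it as a black box in the converse for Theorem \ref{ccretathmMIMO}. Your argument supplies the missing details in the standard way: the telescoping sequence $A_i=I(S;X^i,Y_{i+1}^n\mid R)$ with $A_0=I(S;Y^n\mid R)$, $A_n=I(S;X^n\mid R)$, and the chain-rule cancellation of the common block $I(S;X^{i-1},Y_{i+1}^n\mid R)$ is exactly the classical proof of the (conditional) Csisz\'ar sum identity. The one step that genuinely requires care in the second equality is the passage from $I(S;X_J\mid V,J=i)$ to $I(S;X_i\mid X^{i-1},Y_{i+1}^n,R)$, which is only valid because $J$ is independent of $(S,R,X^n,Y^n)$; you state this independence when introducing $J$, matching the lemma's hypothesis, so the argument is complete.
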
Let $J$ be a random variable uniformly distributed on $\{1,\dots, n\}$ and independent of $K$, $X^n$ and $Y^n$. We further define $U=(K,X_{1},\dots, X_{J-1},Y_{J+1},\dots, Y_{n},J).$ It holds that $U \circlearrow{X_J} \circlearrow{Y_J}.$ 

Notice  that
{{\begin{align}
H(K)&\overset{(a)}{=}H(K)-H(K|X^{n})\nonumber\\
&=I(K;X^{n}) \nonumber\\
&\overset{(b)}{=}\sum_{i=1}^{n} I(K;X_{i}|X_{1},\dots, X_{i-1}) \nonumber\\
&=n I(K;X_{J}|X_{1},\dots, X_{J-1},J) \nonumber\\
&\overset{(c)}{\leq }n I(U;X_{J}), \nonumber
\end{align}}}where $(a)$ follows because $K=\Phi(X^n)$ and $(b)$ and $(c)$ follow from the chain rule for mutual information.

We will show next that
\begin{align}
    I(U;X_J)-I(U;Y_J) \leq u(\eta)+\zeta(n,\alpha,\epsilon), \label{toshow}
\end{align}
where $\zeta(n,\alpha,\epsilon)=\frac{1}{n}+\alpha c+\epsilon$.
Applying Lemma \ref{lemma1} for $S=K$, $R=\varnothing$ with $V=(X_1,\hdots, X_{J-1},Y_{J+1},\hdots, Y_{n},J)$ yields
{{\begin{align}
I(K;X^{n})-I(K;Y^{n}) 
&=n[I(K;X_{J}|V)-I(K;Y_{J}|V)] \nonumber\\
&\overset{(a)}{=}n[I(KV;X_{J})-I(V;X_{J})-I(KV;Y_{J})+I(V;Y_{J})] \nonumber\\ 
&\overset{(b)}{=}n[I(U;X_{J})-I(U;Y_{J})], 
\label{UhilfsvariableMIMO1}
\end{align}}}where $(a)$ follows from the chain rule for mutual information and from the fact that $V$ is independent of $(X_{J},Y_{J})$ and $(b)$ follows from $U=(K,V)$. It results using (\ref{UhilfsvariableMIMO1}) that
{{\begin{align}
n[I(U;X_{J})-I(U;Y_{J})]
&=I(K;X^{n})-I(K;Y^{n}) \nonumber\\
&=H(K)-I(K;Y^{n})\nonumber \\ 
&=H(K|Y^n).
\label{star2MIMO2}
\end{align}}}

Next, to prove \eqref{toshow}, we will show that
\begin{align}
    \frac{H(K|Y^n)}{n}\leq u(\eta)+\zeta(n,\alpha,\epsilon). \nonumber
\end{align}

Let $\text{cov}(\bs{T}_i)=\mathbf{Q}_i$   for $i=1, \hdots,n,$ where $\bs{T}_{i}\in \mbb C^{N_T}, i=1, \hdots,n.$ We define
\begin{align}
   \mbf Q^\star=\frac{1}{n}\sum_{i=1}^{n} \mbf Q_{i}. \nonumber
\end{align}
By Lemma \ref{traceQ}, we know that $\text{tr}(\mbf Q^\star)\leq P$ and therefore $\mbf Q^\star \in \mc Q_{P}.$
Let
\begin{align}
&R(\mbf Q^\star)= \sup \Big\{R: \mbb P\left[f(\mbf G,\mbf Q^\star)< R\right]\leq \eta \Big\}. \nonumber
\end{align}
We recall that the function $f$ is defined in \eqref{fgQ}.
Since $\mbf Q^\star \in \mc Q_{P}$, Lemma \ref{supQ} implies that
\begin{align}
 R(\mbf Q^\star)\leq u(\eta).
 \label{comparerates}
 \end{align}
We consider for any $\epsilon>0$ the set
\begin{align}
    &\Omega=\Big\{\mathbf{g} \in \mathbb{C}^{N_{R}\times N_{T}}:   \mbb P\left[K\neq L|\mathbf{G}=\mathbf{g}\right]\leq \alpha \ \text{and} \ f(\mbf g,\mbf Q^\star) \leq   R(\mbf Q^\star)+\epsilon \Big\}, \nonumber
\end{align}
and define $\tilde{\mbf G}$ to be a  random matrix, independent of $X^n$,$Y^n$ and $\bs{\xi}^n$, with alphabet $\Omega$ such that for every Borel set $\seta \subseteq \mbb C^{N_R\times N_T},$ it holds that
\begin{align}
    \mbb P \left[ \tilde{\mbf G}\in \seta \right]=\mbb P \left[ \mbf G\in\seta|\mbf G\in\Omega\right]. \nonumber
\end{align}
In order to prove that such a $\tilde{\mbf G}$ is well-defined, it suffices show that $\mbb P \left[ \mbf G \in \Omega\right]>0.$ This is proved in what follows:
\begin{lemma}
\begin{align}
\mbb P\left[\mbf G \in \Omega\right]>0. \nonumber 
\end{align}
\label{probOmega}
\end{lemma}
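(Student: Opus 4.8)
The plan is to decompose $\Omega$ into the intersection of the two events
\[
\mc A=\{\mbf g\in\mbb C^{N_R\times N_T}:\ \mbb P[K\neq L|\mbf G=\mbf g]\leq\alpha\},\qquad \mc B=\{\mbf g\in\mbb C^{N_R\times N_T}:\ f(\mbf g,\mbf Q^\star)\leq R(\mbf Q^\star)+\epsilon\},
\]
so that $\mbb P[\mbf G\in\Omega]=\mbb P[\mbf G\in\mc A\cap\mc B]$, and then to show that $\mbf G$ lands in each of $\mc A$ and $\mc B$ with probability so large that the two events cannot be disjoint. This is exactly the mechanism already used to prove that $\mc G_\theta$ is non-empty in the converse for the transmission capacity.

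First I would record the two individual probability bounds. The outage hypothesis \eqref{errorMIMOcorrelatedrepeat} states precisely that $\mbb P[\mbf G\in\mc A]\geq 1-\eta$. For the second event I would invoke the definition $R(\mbf Q^\star)=\sup\{R:\mbb P[f(\mbf G,\mbf Q^\star)<R]\leq\eta\}$. Since $R\mapsto\mbb P[f(\mbf G,\mbf Q^\star)<R]$ is non-decreasing, this defining set is a lower interval whose supremum is $R(\mbf Q^\star)$, so the strictly larger value $R(\mbf Q^\star)+\epsilon$ cannot belong to it; hence $\mbb P[f(\mbf G,\mbf Q^\star)<R(\mbf Q^\star)+\epsilon]>\eta$. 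As $\{f(\mbf G,\mbf Q^\star)<R(\mbf Q^\star)+\epsilon\}\subseteq\mc B$, this yields $\mbb P[\mbf G\in\mc B]>\eta$.

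Finally I would combine the two bounds. Either directly, via
\[
\mbb P[\mbf G\in\mc A\cap\mc B]\geq\mbb P[\mbf G\in\mc A]+\mbb P[\mbf G\in\mc B]-1>(1-\eta)+\eta-1=0,
\]
or, mirroring the earlier $\mc G_\theta$ argument, by conditioning on $\mc A$: splitting $\eta<\mbb P[f(\mbf G,\mbf Q^\star)<R(\mbf Q^\star)+\epsilon]$ into its contributions over $\mc A$ and $\mc A^c$ and using $\mbb P[\mbf G\in\mc A^c]\leq\eta$ forces $\mbb P[f(\mbf G,\mbf Q^\star)<R(\mbf Q^\star)+\epsilon\mid\mbf G\in\mc A]>0$, which together with $\mbb P[\mbf G\in\mc A]\geq 1-\eta>0$ gives $\mbb P[\mbf G\in\mc A\cap\mc B]>0$. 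I expect no genuine obstacle here; the only point requiring care is the strict inequality $\mbb P[\mbf G\in\mc B]>\eta$, i.e.\ justifying that a value exceeding the supremum $R(\mbf Q^\star)$ is excluded from the defining set, which rests on the monotonicity already established in Lemma \ref{nondecreasingginf}.
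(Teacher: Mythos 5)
Your proposal is correct and follows essentially the same route as the paper: both rest on the two facts that $\mbb P\left[\mbb P\left[K\neq L|\mbf G\right]\leq\alpha\right]\geq 1-\eta$ by \eqref{errorMIMOcorrelatedrepeat} and that $\mbb P\left[f(\mbf G,\mbf Q^\star)<R(\mbf Q^\star)+\epsilon\right]>\eta$ because a value exceeding the supremum $R(\mbf Q^\star)$ cannot lie in its defining set, and the paper then concludes via exactly the conditioning computation you sketch as your second variant. Your inclusion--exclusion finish is a marginally cleaner way to combine the two bounds, and one small remark: excluding $R(\mbf Q^\star)+\epsilon$ from the defining set needs only the definition of a supremum, so the appeal to the monotonicity of $R\mapsto\mbb P\left[f(\mbf G,\mbf Q^\star)<R\right]$ (which is in any case the single-$\mbf Q$ statement inside the proof of Lemma \ref{nondecreasingginf}, not that lemma itself) is not actually required.
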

\begin{proof}
From the definition of $R(\mbf Q^\star)$, we have
\begin{align}
    \eta &< \mbb P \left[f(\mbf G,\mbf Q^\star)<R(\mbf Q^\star)+\epsilon\right] \nonumber \\
    &\leq \mbb P\left[f(\mbf G,\mbf Q^\star)\leq R(\mbf Q^\star)+\epsilon \right]. \nonumber
\end{align}

Then, it holds that
\begin{align}
    \mbb P\left[f(\mbf G,\mbf Q^\star)\leq R(\mbf Q^\star)+\epsilon\right] =\eta_1, \nonumber
\end{align}
where $0\leq \eta<\eta_1\leq 1.$

 It follows using \eqref{errorMIMOcorrelatedrepeat} that
\begin{align}
    1-\eta 
    & \leq \mbb P\left[\mbb P\left[K\neq L|\mbf G\right]\leq \alpha\right] \nonumber \\
    &=\mbb P\left[ \mbb P\left[K\neq L\bigm|\mbf G\right]\leq \alpha\bigm|f(\mbf G,\mbf Q^\star)\leq R(\mbf Q^\star)+\epsilon \right] \mbb P\left[f(\mbf G,\mbf Q^\star)\leq R(\mbf Q^\star)+\epsilon\right] \nonumber \\
    &\quad+\mbb P\left[ \mbb P\left[K\neq L\bigm|\mbf G\right]\leq \alpha\bigm|f(\mbf G,\mbf Q^\star)> R(\mbf Q^\star)+\epsilon \right] \mbb P\left[f(\mbf G,\mbf Q^\star)> R(\mbf Q^\star)+\epsilon\right]\nonumber \\
    &=\eta_1 \ \mbb P\left[ \mbb P\left[K\neq L\bigm|\mbf G\right]\leq \alpha\bigm|f(\mbf G,\mbf Q^\star)\leq R(\mbf Q^\star)+\epsilon \right]\nonumber \\
    &\quad+(1-\eta_1) \ \mbb P\left[ \mbb P\left[K\neq L\bigm|\mbf G\right]\leq \alpha\bigm|f(\mbf G,\mbf Q^\star)> R(\mbf Q^\star)+\epsilon \right]
    \nonumber \\
    &\leq \eta_1 \ \mbb P\left[ \mbb P\left[K\neq L\bigm|\mbf G\right]\leq \alpha\bigm|f(\mbf G,\mbf Q^\star)\leq R(\mbf Q^\star)+\epsilon \right]+(1-\eta_1) \nonumber \\
    &\leq \mbb P\left[ \mbb P\left[K\neq L\bigm|\mbf G\right]\leq \alpha\bigm|f(\mbf G,\mbf Q^\star)\leq R(\mbf Q^\star)+\epsilon \right]+(1-\eta_1)\nonumber \\
    &< \mbb P\left[ \mbb P\left[K\neq L\bigm|\mbf G\right]\leq \alpha\bigm|f(\mbf G,\mbf Q^\star)\leq R(\mbf Q^\star)+\epsilon \right]+(1-\eta), \nonumber
\end{align}
where we used that $1-\eta_1<1-\eta.$ This means that
\begin{align}
    \mbb P\left[ \mbb P\left[K\neq L\bigm|\mbf G\right]\leq \alpha\bigm|f(\mbf G,\mbf Q^\star)\leq R(\mbf Q^\star)+\epsilon\right]>0. \nonumber
\end{align}
In addition, since $\eta_1>0$, we have
\begin{align}
    \mbb P\left[ \mbb P\left[K\neq L\bigm|\mbf G\right]\leq \alpha,f(\mbf G,\mbf Q^\star)\leq R(\mbf Q^\star)+\epsilon \right]>0. \nonumber
\end{align}
Thus
$$\mbb P\left[\mbf G \in \Omega\right]>0.$$
\end{proof}
Next, we fix the CR generation protocol and change the state distribution of the slow fading channel. We obtain the following new MIMO channel:
\begin{align}
    \tilde{\bs Z}_{i}=\tilde{\mbf G}\bs{T}_i+\bs{\xi}_i \quad i=1, \hdots,n,  \nonumber
\end{align}
where $\tilde{\bs Z}^n$ is the new output sequence.
We further define $\tilde{L}$ such that
\begin{align}
    \tilde{L}=\Psi(Y^n,\tilde{\bs Z}^n). \nonumber
\end{align}
Clearly, it holds for any $\mbf g \in \Omega$ that
\begin{align}
     \mbb P\left[K\neq \tilde{L}|\tilde{\mbf G}= \mbf g\right]\leq \alpha   \label{newerrorinequality}
\end{align}
and that
\begin{align}
    \log\det(\mathbf{I}_{N_{R}}+\frac{1}{\sigma^2}\mathbf{g}\mbf Q^\star\mathbf{g}^{H}) \leq R(\mbf Q^\star)+\epsilon.
    \label{absolutevalue}
\end{align}
Furthermore, since $\bs{\xi}_i \sim \mathcal{N}_{\mathbb{C}}(\bs{0}_{N_R},\sigma^2 \mathbf{I}_{N_R}), i=1, \hdots,n$, it follows for $i=1, \hdots,n$ that
\begin{align}
    I(\bs{T}_{i};\tilde{\bs Z}_{i}|\tilde{\mbf G}= \mbf g)&\leq \log\det(\mathbf{I}_{N_{R}}+\frac{1}{\sigma^2}\mathbf{g}\mathbf{Q}_{i} \mathbf{g}^{H})\quad \forall \mbf g\in\Omega.
    \label{mutualinfmax}
\end{align}
We recall that the goal is to prove that 
\begin{align}
    \frac{H(K|Y^n)}{n}\leq u(\eta)+\zeta(n,\alpha,\epsilon). \label{recall}
\end{align}

Now, we have
\begin{align}
    \frac{1}{n}H(K|Y^n)&=\frac{1}{n}H(K|\tilde{\mbf G},Y^n) \nonumber \\
            &=\frac{1}{n}H(K|\tilde{\mbf G},Y^n,\tilde{\bs Z}^n)+\frac{1}{n}I(K;\tilde{\bs Z}^n|\tilde{\mbf G},Y^n), \nonumber
\end{align}
where we used that $\tilde{\mbf G}$ is independent of $(K,Y^n).$ 
On the one hand, we have
\begin{align}
                                \frac{1}{n}H\left(K|\tilde{\bs Z}^n,\tilde{\mbf G},Y^n\right)  &\overset{(a)}{\leq } \frac{1}{n} H\left(K|\tilde{L},\tilde{\mbf G}\right) \nonumber \\    &\overset{(b)}{\leq } \frac{1}{n}\mbb E\left[1+\log|\mathcal{K}|\mbb P[K\neq \tilde{L}|\tilde{\mbf G}]\right]  \nonumber \\                          &=\frac{1}{n}+\frac{1}{n} \log|\mathcal{K}|\mbb E \left[ P[K\neq \tilde{L}|\tilde{\mbf G}]\right]  \nonumber \\
                                &\overset{(c)}{\leq } \frac{1}{n}+\frac{1}{n}\alpha \log|\mathcal{K}|  \nonumber \\                               &\overset{(d)}{\leq } \frac{1}{n}+\alpha \ c,  \nonumber
\end{align}
where (a) follows from $\tilde{L}=\Psi(Y^n,\tilde{\bs Z}^n)$, (b) follows from Fano's Inequality, (c) follows from \eqref{newerrorinequality}   and (d) follows from $\log|\mathcal{K}|\leq cn$ in \eqref{cardinalityMIMOcorrelatedrepeat}.
On the other hand, we have
 \begin{align} 
\frac{1}{n}I(K;\tilde{\bs Z}^n|\tilde{\mbf G},Y^{n}) &\leq \frac{1}{n} I(X^{n},K;\tilde{\bs Z}^n|\tilde{\mbf G},Y^{n}) \nonumber\\
& \overset{(a)}{\leq }\frac{1}{n} I(\bs{T}^n;\tilde{\bs Z}^n|\tilde{\mbf G},Y^{n})  \nonumber \\
& = \frac{1}{n} \left[h(\tilde{\bs Z}^n|\tilde{\mbf G},Y^{n})- h(\tilde{\bs Z}^n|\bs{T}^n,\tilde{\mbf G},Y^{n}) \right]\nonumber \\
& \overset{(b)}{=} \frac{1}{n} \left[h(\tilde{\bs Z}^n|\tilde{\mbf G},Y^{n})- h(\tilde{\bs Z}^n|\tilde{\mbf G},\bs{T}^n)\right] \nonumber \\
& \overset{(c)}{\leq }  \frac{1}{n} \left[h(\tilde{\bs Z}^n|\tilde{\mbf G})- h(\tilde{\bs Z}^n|\tilde{\mbf G},\bs{T}^n)\right] \nonumber \\
& =\frac{1}{n} I(\bs{T}^n;\tilde{\bs Z}^n|\tilde{\mbf G})  \nonumber \\
& \overset{(d)}{=} \frac{1}{n}\sum_{i=1}^{n} I(\tilde{\bs Z}_{i};\bs{T}^n|\tilde{\mbf G},\tilde{\bs Z}^{i-1}) \nonumber \\
& = \frac{1}{n}\sum_{i=1}^{n} h(\tilde{\bs Z}_{i}|\tilde{\mbf G},\tilde{\bs Z}^{i-1})-h(\tilde{\bs Z}_{i}|\tilde{\mbf G},\bs{T}^n,\tilde{\bs Z}^{i-1}) \nonumber \\
& \overset{(e)}{=} \frac{1}{n}\sum_{i=1}^{n} h(\tilde{\bs Z}_{i}|\tilde{\mbf G},\tilde{\bs Z}^{i-1})-h(\tilde{\bs Z}_{i}|\tilde{\mbf G},\bs{T}_{i}) \nonumber \\
& \overset{(f)}{\leq} \frac{1}{n} \sum_{i=1}^{n} h(\tilde{\bs Z}_{i}|\tilde{\mbf G})-h(\tilde{\bs Z}_{i}|\tilde{\mbf G},\bs{T}_{i}) \nonumber \\
& = \frac{1}{n}\sum_{i=1}^{n} I(\bs{T}_{i};\tilde{\bs Z}_{i}|\tilde{\mbf G}) \nonumber \\
&\overset{(g)}{\leq}\frac{1}{n} \sum_{i=1}^{n}\mbb E \left[\log\det(\mathbf{I}_{N_{R}}+\frac{1}{\sigma^2}\tilde{\mathbf{G}}\mathbf{Q}_{i} \tilde{\mathbf{G}}^{H})\right]\nonumber \\
&=\mbb E\left[\frac{1}{n}\sum_{i=1}^{n}\log\det(\mathbf{I}_{N_{R}}+\frac{1}{\sigma^2}\tilde{\mathbf{G}}\mathbf{Q}_{i} \tilde{\mathbf{G}}^{H})    \right] \nonumber \\
&\overset{(h)}{\leq} \mbb E\left[\log\det\left(\frac{1}{n}\sum_{i=1}^{n} \left[\mathbf{I}_{N_{R}}+\frac{1}{\sigma^2}\tilde{\mathbf{G}}\mathbf{Q}_{i} \tilde{\mathbf{G}}^{H}\right]\right)\right] \nonumber \\
&=\mbb E\left[\log\det\left(\mathbf{I}_{N_{R}}+\frac{1}{\sigma^2}\tilde{\mathbf{G}}\left(\frac{1}{n}\sum_{i=1}^{n}\mathbf{Q}_{i}\right)\tilde{\mathbf{G}}^{H}\right)\right] \nonumber \\
&=\mbb E\left[\log\det\left(\mathbf{I}_{N_{R}}+\frac{1}{\sigma^2}\tilde{\mathbf{G}}\mbf Q^\star\tilde{\mathbf{G}}^{H}\right)\right] \nonumber \\
& \overset{(i)}{\leq}  R(\mbf Q^\star)+\epsilon \nonumber \\
&\overset{(j)}{\leq} u(\eta)+\epsilon,\nonumber
  \end{align}where $(a)$ follows from the Data Processing Inequality because $Y^{n}\circlearrow{X^{n}K}\circlearrow{\tilde{\mbf G}\bs{T}^n}\circlearrow{\tilde{\bs Z}^{n}}$ forms a Markov chain, $(b)$ follows because $Y^{n}\circlearrow{X^{n}K}\circlearrow{\tilde{\mbf G}\bs{T}^n}\circlearrow{\tilde{\bs Z}^{n}}$ forms a Markov chain, $(c)(f)$ follow because conditioning does not increase entropy, $(d)$ follows from the chain rule for mutual information, $(e)$ follows because $\bs{T}_{1},\dots, \bs{T}_{i-1},\bs{T}_{i+1},\dots, \bs{T}_{n},\tilde{\bs Z}^{i-1} \circlearrow{\tilde{\mbf G},\bs{T}_{i}}\circlearrow{\tilde{\bs Z}_{i}}$ forms a Markov chain, $(g)$ follows from \eqref{mutualinfmax},
 $(h)$ follows from Jensen's Inequality since the function $\log\circ\det$ is concave on the set of Hermitian positive semi-definite matrices and since $\mbf I_{N_{R}}+\frac{1}{\sigma^2}\tilde{\mbf G}\mbf Q_{i}\tilde{\mbf G}^H$ is Hermitian positive semi-definite for $i=1, \hdots,n$ and $(i)$ follows from \eqref{absolutevalue} and $(j)$ follows from \eqref{comparerates}. This proves that for $0\leq \eta<1,$ \eqref{recall} is satisfied for 
 $\zeta(n,\alpha,\epsilon)=\frac{1}{n}+\alpha c+\epsilon >0.$ 
 
From (\ref{star2MIMO2}) and (\ref{recall}), we deduce that for $0\leq\eta<1$
{{\begin{align}
&I(U;X_{J})-I(U;Y_{J})  \leq u(\eta) +\zeta(n,\alpha,\epsilon), \nonumber
\end{align}}}where $U \circlearrow{X_{J}} \circlearrow{Y_{J}}.$ 

\color{black}Since the joint distribution of $X_{J}$ and $Y_{J}$ is equal to $P_{XY}$, $\frac{H(K)}{n}$ is upper-bounded by $I(U;X)$ subject to $I(U;X)-I(U;Y) \leq u(\eta) + \zeta(n,\alpha,\epsilon)$ with $U$ satisfying $U \circlearrow{X} \circlearrow{Y}$. As a result, it holds using \eqref{rateMIMOcorrelatedrepeat} that for sufficiently large $n$ and for every $\alpha,\delta,\epsilon>0,$ any achievable $\eta$-outage CR rate $H$ satisfies
\begin{align}
    H <\underset{ \substack{U \\{\substack{U \circlearrow{X} \circlearrow{Y}\\ I(U;X)-I(U;Y) \leq u(\eta)+\zeta(n,\alpha,\epsilon)}}}}{\max} I(U;X)+\delta.
    \nonumber
\end{align}
It follows that
\begin{align}
    H &\leq\underset{\alpha,\delta,\epsilon>0}{\inf} \ \underset{n\rightarrow\infty}{\lim}\left[\underset{ \substack{U \\{\substack{U \circlearrow{X} \circlearrow{Y}\\ I(U;X)-I(U;Y) \leq u(\eta)+\zeta(n,\alpha,\epsilon)}}}}{\max} I(U;X)+\delta\right] \nonumber \\
    &=\underset{ \substack{U \\{\substack{U \circlearrow{X} \circlearrow{Y}\\ I(U;X)-I(U;Y) \leq u(\eta)}}}}{\max} I(U;X). \nonumber
\end{align}
This completes the proof of the upper-bound on the $\eta$-outage CR capacity.
\section{Proof of Theorem \ref{capacitycompoundchannels}}
\label{proofcompoundcapacity}
   The goal is to prove that the capacity of  $\mathcal{C}=\{ W_{\mbf g}: \mbf g\in \mc G_{a}\}$ is  $$\underset{\mbf Q \in \mc Q_{P}}{\max}\underset{\mbf g \in \mc G_{a}}{\min} \log\det(\mbf I_{N_{R}}+\frac{1}{\sigma^2}\mbf g \mbf Q \mbf g^H).$$ In our proof, we follow the strategy of \cite{discretetimegaussian}\footnote{In\cite{discretetimegaussian}, the focus was on compound real Gaussian channels with square channel matrix whose operator norm is upper-bounded by $a$ and with noise covariance matrix satisfying further conditions.}.
\subsection{\text{Direct Proof of Theorem} \ref{capacitycompoundchannels} for finite \texorpdfstring{$\mc G_{a}$}{TEXT}}
We prove first the direct part of Theorem \ref{capacitycompoundchannels} for finite $\mc G_a.$ This result will be later extended for infinite $\mc G_a$ using an approximation inequality.
\begin{theorem}
\label{capacityfinitestate}
Let $\mc G_{a}$ be any finite subset of $\mc B_{a}.$
We define the compound channel 
$$ \mathcal{C}'=\{ W_{\mbf g}: \mbf g\in \mc G_{a}\}.$$
An achievable rate for $\mathcal{C}'$  is 
$$ \underset{\mbf Q \in \mc Q_{P}}{\max}\underset{\mbf g\in\mc G_{a}}{\min} \log\det(\mathbf{I}_{N_{R}}+\frac{1}{\sigma^2}\mathbf{g}\mathbf{Q}\mathbf{g}^{H}).  $$
\end{theorem}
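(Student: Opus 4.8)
The plan is to establish achievability by a random-coding argument with a single, state-independent Gaussian codebook, exploiting the CSIR assumption so that the \emph{decoder} may be tailored to each state $\mbf g \in \mc G_{a}$ while the codewords are common to all states. First I would fix a covariance matrix $\mbf Q^\star \in \mc Q_{P}$ attaining the outer maximum, that is, $\min_{\mbf g \in \mc G_{a}} f(\mbf g, \mbf Q^\star) = \max_{\mbf Q \in \mc Q_{P}}\min_{\mbf g \in \mc G_{a}} f(\mbf g, \mbf Q)$, which I denote by $R^\star$; such a maximizer exists because $\mc Q_{P}$ is compact and $\mbf Q \mapsto \min_{\mbf g \in \mc G_{a}} f(\mbf g, \mbf Q)$ is continuous (a minimum of finitely many continuous functions). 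It then suffices to show that every $R < R^\star$ is an achievable rate for $\mc C'$. The central observation is that, feeding the channel $W_{\mbf g}$ with a circularly-symmetric Gaussian input $\mbf t \sim \mathcal{N}_{\mathbb{C}}(\mbf 0, \mbf Q^\star)$, the resulting mutual information equals $f(\mbf g, \mbf Q^\star) \geq R^\star > R$ \emph{simultaneously} for every $\mbf g \in \mc G_{a}$.

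Next I would construct the codebook. Fix a small $\gamma > 0$ and draw $\lceil 2^{nR} \rceil$ codewords i.i.d.\ from the $n$-fold product of $\mathcal{N}_{\mathbb{C}}(\mbf 0, (1-\gamma)\mbf Q^\star)$. Because the power constraint \eqref{energyconstraintMIMOCorrelated} must hold almost surely and not merely in expectation, I would delete every codeword whose empirical power $\frac{1}{n}\sum_{i=1}^{n} \mbf t_{\ell,i}^{H}\mbf t_{\ell,i}$ exceeds $P$; by the law of large numbers this empirical power concentrates about $(1-\gamma)\,\mathrm{tr}(\mbf Q^\star) \leq (1-\gamma)P < P$, so only a vanishing fraction of codewords is removed and the rate loss is negligible. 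For decoding I would use the CSIR: for the known realization $\mbf g$, Terminal $B$ applies a joint-typicality decoder adapted to the law induced by $W_{\mbf g}$, declaring $\hat\ell$ whenever $(\mbf t_{\hat\ell}, \mbf z)$ is the unique jointly typical pair.

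The per-state error analysis is then standard: for a fixed $\mbf g$, the codebook-averaged error is bounded by the probability that the transmitted pair is atypical plus roughly $2^{nR}$ times the false-typicality probability, which is of exponential order $2^{-n f(\mbf g, (1-\gamma)\mbf Q^\star)}$; this yields a bound of order $2^{-n(f(\mbf g, (1-\gamma)\mbf Q^\star) - R)} \to 0$, and by continuity of $f$ in $\mbf Q$ one may choose $\gamma$ small enough that $f(\mbf g, (1-\gamma)\mbf Q^\star) > R$ for all (finitely many) $\mbf g$. Here the finiteness of $\mc G_{a}$ is crucial: averaging $\sum_{\mbf g \in \mc G_{a}} \bar e(\Gamma_n, \mbf g)$ over the random codebook gives a bound of order $|\mc G_{a}| \cdot 2^{-n(R^\star - R - o(1))} \to 0$, so there exists one deterministic codebook that is simultaneously good for all states.

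Finally I would convert average error into maximal error. For the selected codebook, assign to each message $\ell$ the combined quantity $\tilde e_\ell = \sum_{\mbf g \in \mc G_{a}} W_{\mbf g}((\setd_\ell^{(\mbf g)})^c \mid \mbf t_\ell)$, whose average over $\ell$ equals $\sum_{\mbf g \in \mc G_{a}} \bar e(\Gamma_n, \mbf g)$ and is therefore small; by Markov's inequality at least half of the messages satisfy $\tilde e_\ell \leq 2 \sum_{\mbf g} \bar e(\Gamma_n, \mbf g) \leq \theta$, and discarding the remaining half costs at most one bit of rate. For every surviving $\ell$ this forces $W_{\mbf g}((\setd_\ell^{(\mbf g)})^c \mid \mbf t_\ell) \leq \theta$ for \emph{all} $\mbf g \in \mc G_{a}$ at once, i.e.\ $e(\Gamma_n, \mbf g) \leq \theta$ for every $\mbf g \in \mc G_{a}$, which is exactly the requirement of Definition \ref{defcompundrate}. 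The main obstacle I anticipate is the careful bookkeeping where the almost-sure power constraint meets the Gaussian ensemble (the truncation/expurgation of high-power codewords) together with the simultaneous maximal-error expurgation over all states; both are tractable precisely because $\mc G_{a}$ is finite, which is why the theorem must be separately extended to infinite $\mc G_{a}$ by the approximation inequality mentioned in the text.
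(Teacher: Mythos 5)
Your proposal is correct in outline, but it follows a genuinely different route from the paper's proof. You use the textbook random-coding recipe: fix a max-min-achieving $\mbf Q^\star$ (compactness of $\mc Q_{P}$), draw an i.i.d.\ Gaussian codebook with slightly shrunk covariance $(1-\gamma)\mbf Q^\star$, expurgate power-violating codewords to meet the almost-sure constraint, decode by per-state joint typicality (state-dependent decoding regions $\setd_\ell^{(\mbf g)}$ are explicitly permitted by Definition~\ref{defcode}, so invoking CSIR here is legitimate), union-bound the ensemble-average error over the finitely many states, and finally convert average to maximal error by Markov expurgation applied to the per-message sum $\tilde e_\ell=\sum_{\mbf g}W_{\mbf g}((\setd_\ell^{(\mbf g)})^c\mid\mbf t_\ell)$. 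The paper instead proves the finite case via a maximal-code construction: Feinstein's lemma with input constraint, extended to a finite family of channels in the Blackwell--Breiman--Thomasian style (Lemma~\ref{existenceerror}), combined with an explicit Chernoff bound on the information density $i_{\mbf g}(\bs T^n,\bs Z^n)$ deviating below its mean (Lemma~\ref{abweichungmean}) and a Chernoff bound controlling the probability of violating the power constraint (Lemma~\ref{upperboundprobsum}); Lemma~\ref{existence} plays the role of your $(1-\gamma)$ shrinkage, producing a nonsingular $\mbf Q_1$ with $\mathrm{tr}(\mbf Q_1)<P$. What each buys: your argument is conceptually lighter and needs no single-shot coding lemma, but it hands the max-error conversion and the power truncation to separate expurgation steps, and it relies on state-adapted decoders; the paper's construction yields one decoder that works for all states without CSIR, avoids expurgation entirely (Feinstein's lemma is already a max-error statement), and—crucially for what comes next in the paper—produces fully explicit exponential error bounds whose constants are uniform in $\mbf g$, which is exactly what the extension to infinite $\mc G_a$ via the approximation inequality (Lemma~\ref{approximation}) reuses in \eqref{epsilonstrich}. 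Two small points you should tighten if you flesh this out: the error analysis must be carried out before (or jointly with) the deletion of power-violating codewords, e.g.\ by treating the power violation as an additional error event, since deletion destroys the i.i.d.\ structure the ensemble average relies on; and ``joint typicality'' over the continuous alphabets should be formalized as an information-density threshold test so that the claimed $2^{-nf(\mbf g,(1-\gamma)\mbf Q^\star)}$ false-alarm exponent is actually available.
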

\subsubsection{Auxiliary Lemmas}
In order to prove Theorem \ref{capacityfinitestate}, we introduce the following lemmas first.
\begin{lemma}{(Feinstein’s Lemma with Input Constraint\color{black})}\\~\\
For any channel  $W$ with input set $\mc T$ and  output set $\mc Z,$ with random input $T$ distributed according to $p(t)$ and with corresponding  random channel output $Z$ distributed according to $q(z)$ and for any integer $\tau \geq 1$, real number $\alpha>0$, and measurable subset $E$ of $\mc T$, there exists a code with size $\tau$, maximum error probability $\epsilon$ and block-length $n=1$, whose codewords are contained in the set $E,$ where $\epsilon$ satisfies
$$\epsilon=\tau2^{-\alpha}+\mbb P\left[ i(T,Z)\leq \alpha\right]+\mbb P\left[T \notin E \right],        $$
where
$$i(T,Z)=\log\frac{W(Z|T)}{q(Z)}.$$ 

\end{lemma}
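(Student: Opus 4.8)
The plan is to establish this by the classical maximal-coding (greedy) argument of Feinstein, modified so that codewords are drawn only from the constraint set $E$. First I would attach to every input symbol $t\in\mc T$ the threshold decoding set
\[
 B_t=\{z\in\mc Z: i(t,z)>\alpha\}=\{z\in\mc Z: W(z|t)>2^{\alpha}q(z)\},
\]
so that the event $i(T,Z)>\alpha$ is exactly the event $Z\in B_{T}$. The strategy is to build a code of block-length one by choosing codewords in $E$ and assigning to each a decoding region cut out of its threshold set, while keeping all regions pairwise disjoint.

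Next I would consider the family of \emph{admissible} codes, i.e.\ finite lists $\{(t_\ell,D_\ell)\}_{\ell=1}^{m}$ with $t_\ell\in E$, the $D_\ell$ pairwise disjoint, $D_\ell\subseteq B_{t_\ell}$, and $W(D_\ell|t_\ell)\geq 1-\epsilon$ for every $\ell$ (so that the maximum error probability is at most $\epsilon$). Let $m^\star$ be the largest size attained by an admissible code. If $m^\star\geq\tau$, I simply keep $\tau$ of the pairs and the lemma is proved. The real work is the case $m^\star\leq\tau-1$, in which a size-$m^\star$ admissible code is necessarily \emph{maximal}: no further codeword from $E$ can be appended.

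The crux is to convert maximality into the asserted value of $\epsilon$. Writing $D=\bigcup_{\ell=1}^{m^\star}D_\ell$, maximality forces $W(B_t\setminus D\mid t)<1-\epsilon$ for every $t\in E$, since otherwise $t$ with region $B_t\setminus D$ could be appended. Starting from $W(B_t|t)\leq W(D|t)+W(B_t\setminus D|t)$ and integrating against $p$ over $E$, the left-hand side is bounded below using $\int_{\mc T}W(B_t|t)\,p(t)\,dt=\mbb P[i(T,Z)>\alpha]$, so restricting to $E$ costs at most $\mbb P[T\notin E]$ and yields the lower bound $1-\mbb P[i(T,Z)\leq\alpha]-\mbb P[T\notin E]$. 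On the right-hand side, $\int_{\mc T}W(D|t)\,p(t)\,dt=q(D)$, and the threshold property $q(z)<2^{-\alpha}W(z|t_\ell)$ on each $D_\ell\subseteq B_{t_\ell}$ gives $q(D)=\sum_\ell q(D_\ell)<m^\star 2^{-\alpha}\leq(\tau-1)2^{-\alpha}$. Combining the two estimates with $\mbb P[T\in E]\leq 1$ produces $\epsilon<\tau2^{-\alpha}+\mbb P[i(T,Z)\leq\alpha]+\mbb P[T\notin E]=\epsilon$, a contradiction; hence $m^\star\geq\tau$ after all.

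I expect the main obstacle to be not the inequalities, which are routine, but the measure-theoretic care required for a general continuous output alphabet: I must check that the sets $B_t$ and $D$ are measurable, that the decoding regions may be chosen measurably, and above all that the supremum $m^\star$ is actually \emph{attained}. The latter I would settle by noting that in the relevant case the admissible sizes are integers bounded by $\tau-1$, so a largest admissible code exists and, being largest, cannot be extended and is therefore maximal, which is precisely the property invoked in the contradiction step.
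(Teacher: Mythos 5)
Your proposal is correct and is essentially the same argument the paper relies on: the paper gives no proof of its own but cites Thomasian's Theorem 2 and Ash's Lemma 8.2.1, both of which are exactly this maximal-code (greedy) construction with threshold sets $B_t=\{z: i(t,z)>\alpha\}$, the bound $q(D_\ell)\leq 2^{-\alpha}W(D_\ell|t_\ell)$, and a contradiction derived from non-extendability of a maximal code. Your reconstruction, including restricting codewords to $E$ (which is what produces the $\mbb P\left[T\notin E\right]$ term) and the integer-boundedness argument guaranteeing that a maximal admissible code exists, matches that classical proof.
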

\begin{proof}
As stated in \cite{discretetimegaussian}, the proof is the same as the one for Theorem 2 in \cite{errorbound} or Lemma 8.2.1 in \cite{informationbook}. 
\end{proof}
For any $\mbf g \in \mc G_a,$ we assume that the random input sequence $\bs{T}^n$ of $W_\mbf g$ is distributed according to $p(\bs{t}^n)$ and that the corresponding random channel output sequence $\bs{Z}^n$ is distributed according  to $q(\bs{z}^n).$
We define for any $\bs{t}^n\in \mbb C^{N_T\times n},$
any $\bs{z}^n \in \mbb C^{N_R\times n}$ and any $\mbf g \in \mc G_a$
$$i_{\mbf g}(\bs{t}^n, \bs{z}^n)=\log \frac{W_{\mbf g}(\bs{z}^n| \bs{t}^n) }{q( \bs{z}^n)}.$$
\begin{lemma}
\label{existenceerror}
For any real numbers $\alpha>0$, $\delta>0$, and any integer $\tau\geq 1$, there exists a code $\Gamma_n$ for $\mathcal{C}'$ with size $\lVert \Gamma_n\rVert=\tau$, block-length $n$ and with codewords contained in $E_n=\{ \bs{t}^n=(\bs{t}_1,\hdots,\bs{t}_n)\in \mbb C^{N_T\times n}: \frac{1}{n}\sum_{i=1}^{n}\lVert \bs{t}_i \rVert^2\leq P\}$ such that for all $\mbf g \in \mc G'_{a}$
\begin{align}
    e(\Gamma_n,\mbf g) \leq |\mc G_{a}|\tau2^{-\alpha}+|\mc G_{a}|^22^{-\delta}+|\mc G_{a}|\mbb P\left[\bs{T}^n\notin E_n\right]+\sum_{\mbf g \in \mc G_{a}}\mbb P\left[i_{\mbf g}(\bs{T}^n,\bs{Z}^n)\leq \alpha+\delta\right]. \nonumber
\end{align}

\end{lemma}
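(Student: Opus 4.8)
The plan is to deduce the compound estimate from the single-state Feinstein lemma by applying that lemma to the \emph{uniform mixture} of the $N:=|\mc G_a|$ channels and then transferring the resulting error bound to each individual state. Concretely, I would introduce the mixture channel $\hat W(\bs z^n\mid\bs t^n)=\frac1N\sum_{\mbf g\in\mc G_a}W_{\mbf g}(\bs z^n\mid\bs t^n)$ and take the common reference output law $q$ appearing in the definition of $i_{\mbf g}$ to be the output marginal of $\hat W$ under the input law $p$, namely $q(\bs z^n)=\int p(\bs t^n)\hat W(\bs z^n\mid\bs t^n)\,d\bs t^n$. Treating the whole length-$n$ block as a single super-symbol and restricting inputs to $E_n$, Feinstein's lemma applied to $\hat W$ with threshold $\alpha$ and size $\tau$ then yields codewords in $E_n$ and decoding regions $\setd_\ell$ whose maximum error measured against $\hat W$ is at most $\tau 2^{-\alpha}+\mbb P[\hat i(\bs T^n,\bs Z^n)\le\alpha]+\mbb P[\bs T^n\notin E_n]$, where $\hat i=\log(\hat W/q)$ and $(\bs T^n,\bs Z^n)\sim p\,\hat W$. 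This single step absorbs the block length $n$.

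Next I would transfer this to every state. Since $\hat W\ge\frac1N W_{\mbf g}$ pointwise, we have $W_{\mbf g}\le N\hat W$, so the $W_{\mbf g}$-measure of any decoding-region complement is at most $N$ times its $\hat W$-measure; hence the \emph{same} code satisfies $e(\Gamma_n,\mbf g)\le N\tau 2^{-\alpha}+N\,\mbb P[\hat i\le\alpha]+N\,\mbb P[\bs T^n\notin E_n]$ simultaneously for all $\mbf g\in\mc G_a$. This already delivers the first and third terms of the claimed bound. It then remains to rewrite $N\,\mbb P[\hat i\le\alpha]$: because the output law under $p\,\hat W$ is the mixture of the per-state marginals, one has $\mbb P[\hat i\le\alpha]=\frac1N\sum_{\mbf g}\mbb P_{\mbf g}[\hat i\le\alpha]$, where $\mbb P_{\mbf g}$ is the law under $(\bs T^n,\bs Z^n)\sim p\,W_{\mbf g}$, so that $N\,\mbb P[\hat i\le\alpha]=\sum_{\mbf g}\mbb P_{\mbf g}[\hat i\le\alpha]$.

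The heart of the argument, and the step I expect to be the main obstacle, is converting each mixture information density $\hat i$ back into the per-state density $i_{\mbf g}$ at the cost of the margin $\delta$. Writing $D_{\mbf g}=i_{\mbf g}-\hat i=\log(W_{\mbf g}/\hat W)$, one has the inclusion $\{\hat i\le\alpha\}\subseteq\{i_{\mbf g}\le\alpha+\delta\}\cup\{D_{\mbf g}>\delta\}$, whence $\mbb P_{\mbf g}[\hat i\le\alpha]\le\mbb P_{\mbf g}[i_{\mbf g}\le\alpha+\delta]+\mbb P_{\mbf g}[D_{\mbf g}>\delta]$. The deviation term is controlled by a change-of-measure (Markov) estimate: on the set $\{W_{\mbf g}>2^{\delta}\hat W\}$ one has $\hat W<2^{-\delta}W_{\mbf g}$, so integrating $W_{\mbf g}$ over that set and using $W_{\mbf g}\le N\hat W$ gives $\mbb P_{\mbf g}[D_{\mbf g}>\delta]\le N2^{-\delta}$. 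Summing over the $N$ states turns this contribution into $N^2 2^{-\delta}$ and the remaining part into $\sum_{\mbf g}\mbb P_{\mbf g}[i_{\mbf g}\le\alpha+\delta]$, which assembles into exactly the stated inequality.

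The routine ingredients are thus the reduction to a single super-symbol channel and the pointwise domination $W_{\mbf g}\le N\hat W$; the only genuinely delicate point is obtaining a bound valid for \emph{all} states at once, which is resolved cleanly here because the mixture reference $q$ is shared by every $i_{\mbf g}$, and because the free parameter $\delta$ decouples the typicality threshold from the $N$-fold loss incurred by the union over the finitely many states. I would finally remark that the decoding regions produced by Feinstein's lemma are state-independent and therefore are, a fortiori, admissible in the sense of Definition \ref{defcode}, where state-dependent decoders are permitted under CSIR.
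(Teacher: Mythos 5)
Your overall strategy---Feinstein's lemma applied to the uniform mixture $\hat W=\frac{1}{N}\sum_{\mbf g}W_{\mbf g}$ of the $N=|\mc G_a|$ channels, transfer to each state via the pointwise bound $W_{\mbf g}\le N\hat W$, and a change-of-measure step costing $N^2 2^{-\delta}$---is precisely the Blackwell--Breiman--Thomasian argument that the paper itself invokes (its proof is a one-line citation of Lemma 3 in \cite{capacityofclassofchannels}). However, there is a genuine gap in your final conversion step. In the paper, the density $i_{\mbf g}(\bs t^n,\bs z^n)=\log\bigl(W_{\mbf g}(\bs z^n|\bs t^n)/q(\bs z^n)\bigr)$ is defined with $q$ equal to the output law $q_{\mbf g}$ of $W_{\mbf g}$ itself under the input law $p$ (the reference is $\mbf g$-dependent, despite the notation), whereas you redefine $q$ to be the output marginal $\hat q$ of the mixture. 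Your case split compares \emph{transition densities}, $D_{\mbf g}=\log(W_{\mbf g}/\hat W)$, and therefore can only ever produce densities referenced to $\hat q$; the per-state reference never appears, so what you assemble at the end is not the stated inequality but a variant with different random variables inside the probabilities. This is not cosmetic: in the proof of Theorem \ref{capacityfinitestate}, the term $\mbb P[i_{\mbf g}(\bs T^n,\bs Z^n)\le\alpha+\delta]$ is controlled by Lemma \ref{abweichungmean}, whose explicit evaluation of $\mbb E[i_{\mbf g}]=n\log\det(\mbf I_{N_R}+\frac{1}{\sigma^2}\mbf g\mbf Q\mbf g^H)$ and of the moment generating function relies on the Gaussian reference $\mc N_{\mbb C}(\bs 0,\mbf g\mbf Q\mbf g^H+\sigma^2\mbf I_{N_R})$; a mixture-of-Gaussians reference breaks that computation, and since $\log(q_{\mbf g}/\hat q)$ has no uniform lower bound there is no pointwise translation between your densities and the stated ones (a probabilistic translation exists, but it requires yet another change-of-measure step of exactly the kind at issue).

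The repair is small and is the actual content of the BBT step: split on the \emph{output marginals} instead of the transition densities. On the event $\{\hat i\le\alpha\}$ one has $W_{\mbf g}\le N\hat W\le N2^{\alpha}\hat q$; if in addition $q_{\mbf g}\ge N2^{-\delta}\hat q$, then $W_{\mbf g}\le 2^{\alpha+\delta}q_{\mbf g}$, i.e.\ $i_{\mbf g}\le\alpha+\delta$ with the per-state reference; the complementary event $\{q_{\mbf g}< N2^{-\delta}\hat q\}$ depends only on $\bs z^n$, whose law under $p\,W_{\mbf g}$ is $q_{\mbf g}$, so its probability is at most $N2^{-\delta}$. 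Summing over the $N$ states then gives exactly the lemma. Everything else in your proposal---the single-super-symbol application of Feinstein with constraint set $E_n$, the transfer $e(\Gamma_n,\mbf g)\le N\cdot(\text{Feinstein bound})$, the identity $N\,\mbb P_{\hat W}[\hat i\le\alpha]=\sum_{\mbf g}\mbb P_{\mbf g}[\hat i\le\alpha]$, and the observation that the (state-independent) decoders are admissible under Definition \ref{defcode}---is correct.
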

\begin{proof}
The proof is a simple modification of that of Lemma 3 in \cite{capacityofclassofchannels}. It is based on an application of Feinstein's lemma.
\end{proof}

\begin{lemma}
\label{abweichungmean} 
Let $W_{\mbf g}$ be a fixed channel with $\mbf g \in \mathbb{C}^{N_R\times N_T}$. Let $\bs{T}^n \in \mbb C^{N_T\times n}$ and $\bs{Z}^n \in \mbb C^{N_R\times n}$ be the random input and output sequence, respectively. We further assume that the $\bs{T}_is$ are i.i.d., where each $\bs{T}_i \in \mbb C^{N_T}$ is Gaussian distributed with mean $\bs{0}_{N_T}$  and with a non-singular covariance matrix $\mbf Q.$
Then for any $\delta>0$
\begin{align} 
    \mbb P\left[i_{\mbf g}(\bs{T}^n,\bs{Z}^n)\leq \mbb E\left[i_{\mbf g}(\bs{T}^n,\bs{Z}^n)\right]-n\delta\right] \leq 2^{\left\{-\frac{n N_R}{2\ln(2)}\left[\left(1+\frac{(\ln(2)\delta)^2}{N_{R}^2}\right)^\frac{1}{2}-1\right] \right\}}.  \nonumber 
\end{align}
\end{lemma}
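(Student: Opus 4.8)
The plan is to reduce the statement to a Chernoff (exponential Markov) bound for a sum of independent random variables and to compute the relevant moment generating function in closed form. First I would observe that since the inputs $\bs{T}_1,\dots,\bs{T}_n$ and the noise are i.i.d., the outputs $\bs{Z}_1,\dots,\bs{Z}_n$ are i.i.d.\ circularly-symmetric complex Gaussian with covariance $\mbf K=\mbf g\mbf Q\mbf g^H+\sigma^2\mbf I_{N_R}$ (which is invertible since $\mbf K\succeq\sigma^2\mbf I_{N_R}$), so the output law factorizes as $q(\bs{z}^n)=\prod_i q(\bs{z}_i)$ and the information density splits into an i.i.d.\ sum $i_{\mbf g}(\bs{T}^n,\bs{Z}^n)=\sum_{i=1}^n i_{\mbf g}(\bs{T}_i,\bs{Z}_i)$. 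Writing the Gaussian densities explicitly and using $\bs{Z}_i-\mbf g\bs{T}_i=\bs{\xi}_i$, the centered single-letter term (in nats) equals $V_i:=\bs{Z}_i^H\mbf K^{-1}\bs{Z}_i-\sigma^{-2}\lVert\bs{\xi}_i\rVert^2$, since $\log\det(\mbf I_{N_R}+\sigma^{-2}\mbf g\mbf Q\mbf g^H)$ is the mean and both random terms have mean $N_R$. Putting $\delta_0=\ln(2)\,\delta$, the event in the lemma is exactly $\{\sum_i V_i\le -n\delta_0\}$, and for every $s\ge 0$ the Chernoff bound gives $\mbb P[\sum_i V_i\le -n\delta_0]\le e^{-n s\delta_0}\,(\mbb E[e^{-sV_1}])^n$.

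The second step is to evaluate $\mbb E[e^{-sV_1}]$. Diagonalizing $\mbf g\mbf Q\mbf g^H=\mbf U\,\mathrm{diag}(\lambda_1,\dots,\lambda_{N_R})\,\mbf U^H$ and rotating the output by $\mbf U^H$ turns the channel into $N_R$ parallel independent scalar complex Gaussian sub-channels, so that $V_1=\sum_{j=1}^{N_R}V_1^{(j)}$ with $V_1^{(j)}=\frac{|\tilde{\bs{Z}}_j|^2}{\lambda_j+\sigma^2}-\frac{|\tilde{\bs{\xi}}_j|^2}{\sigma^2}$ independent across $j$. Each $V_1^{(j)}$ is a Hermitian quadratic form $\bs{v}^H\mbf A_j\bs{v}$ in the two-dimensional complex Gaussian $\bs{v}$ consisting of the $j$-th signal and noise components, so by the standard determinant formula $\mbb E[e^{-sV_1^{(j)}}]=\det(\mbf I+s\,\mbf\Sigma_j\mbf A_j)^{-1}$; a short $2\times2$ computation shows the determinant collapses to $1-s^2\rho_j$, hence $\mbb E[e^{-sV_1^{(j)}}]=(1-s^2\rho_j)^{-1}$ with $\rho_j=\lambda_j/(\lambda_j+\sigma^2)\in[0,1)$, valid for $0\le s<1$. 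Multiplying over $j$ and using $\rho_j\le 1$ yields the $\mbf g$- and $\mbf Q$-free bound $\mbb E[e^{-sV_1}]=\prod_j(1-s^2\rho_j)^{-1}\le(1-s^2)^{-N_R}$.

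Finally I would optimize the resulting estimate $\mbb P[\cdots]\le e^{-n s\delta_0}(1-s^2)^{-nN_R}$ over $s\in[0,1)$. The stationarity condition $2N_R s=\delta_0(1-s^2)$ gives $s^\star=(\sqrt{N_R^2+\delta_0^2}-N_R)/\delta_0\in(0,1)$, with $s^\star\delta_0=\sqrt{N_R^2+\delta_0^2}-N_R$ and $1-(s^\star)^2=2N_R s^\star/\delta_0$. The exponent is $-s^\star\delta_0-N_R\ln(1-(s^\star)^2)$; applying the elementary inequality $-\ln(1-x)\le x/(1-x)$ at $x=(s^\star)^2$ and then inserting the stationarity relation collapses the logarithmic term to exactly $\tfrac12 s^\star\delta_0$, so the exponent is at most $-\tfrac12 s^\star\delta_0=-\tfrac12(\sqrt{N_R^2+\delta_0^2}-N_R)$. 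Substituting $\delta_0=\ln(2)\,\delta$, rewriting $e^{-\frac n2(\sqrt{N_R^2+\delta_0^2}-N_R)}$ as $2^{-\frac{n}{2\ln(2)}(\sqrt{N_R^2+(\ln(2)\delta)^2}-N_R)}$, and factoring $N_R$ out of the square root reproduces the claimed bound.

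I expect the main obstacle to be the closed-form moment generating function in the second step: verifying that the $2\times2$ determinant reduces exactly to $1-s^2\rho_j$ and that the $\rho_j\le 1$ relaxation is legitimate and uniform in $\mbf g,\mbf Q$. The last step is also slightly delicate, since the clean square-root form does not come from naively evaluating the optimized exponent (which carries a stray logarithm) but only emerges after combining the Chernoff-optimal $s^\star$ with the bound $-\ln(1-x)\le x/(1-x)$, which halves the linear term.
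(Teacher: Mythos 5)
Your proof is correct, and its skeleton coincides with the paper's: both center the information density into an i.i.d.\ sum, apply the exponential Markov (Chernoff) bound, arrive at the same per-letter MGF product $\prod_j (1-s^2\rho_j)^{-1}$ bounded uniformly by $(1-s^2)^{-N_R}$, choose the same optimizer $s^\star=\bigl(\sqrt{N_R^2+\delta_0^2}-N_R\bigr)/\delta_0$ with $\delta_0=\ln(2)\delta$ (this is exactly the paper's $\beta$), and even your closing inequality $-\ln(1-x)\le x/(1-x)$ is the paper's $(1+\tfrac{x}{2})e^{-x}\le e^{-x/2}$ in disguise: with the stationarity relation one has $1+\tfrac{x}{2}=(1-\beta^2)^{-1}$, under which the two elementary bounds become the same statement. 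Where you genuinely diverge is in evaluating the MGF. The paper works with the joint $(N_T+N_R)$-dimensional Gaussian $\bs{W}=[\bs{T},\bs{Z}]^T$: it writes the centered density as a difference of quadratic forms in $\bs{W}$, inverts the joint covariance $\mbf O$ by block inversion (this is precisely where the hypothesis that $\mbf Q$ is non-singular enters), checks that $\mbf M(\beta)=\mbf O^{-1}+\beta(\mbf\Lambda-\mbf\Phi)$ is positive definite for $\beta\in[0,1]$, and extracts $\det(\mbf M(\beta)\mbf O)=\prod_i\bigl(1-\beta^2\lambda_i/(1+\lambda_i)\bigr)$ via block-determinant identities. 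You instead diagonalize $\mbf g\mbf Q\mbf g^H$, rotate the output, and split the problem into $N_R$ independent two-dimensional quadratic forms, each resolved by a $2\times 2$ determinant; I verified that this determinant does collapse to $1-s^2\rho_j$ and that the formula is valid on $0\le s<1$ since the eigenvalues of $\mbf\Sigma_j\mbf A_j$ are $\pm\sqrt{\rho_j}$. Your route is more modular, avoids the $(N_T+N_R)$-dimensional block algebra and the positive-definiteness check, and never actually uses invertibility of $\mbf Q$, so it would cover singular input covariances as well. The one point you should spell out is the independence across $j$ of the rotated pairs of signal and noise components: it holds because the single unitary rotation simultaneously diagonalizes the output covariance, the noise covariance, and the output--noise cross-covariance, so the blocks of the jointly Gaussian vector corresponding to distinct $j$ are uncorrelated and hence independent. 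With that made explicit, your argument is complete.
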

\begin{proof}
Since $(\bs{T}_i,\bs{Z}_i), i=1, \hdots,n,$ are i.i.d., we introduce $(\bs{T},\bs{Z})$ such that $(\bs{T},\bs{Z})$ has the same joint distribution as each of the $(\bs{T}_i,\bs{Z}_i).$
Now 
\begin{align}
    \mbb E[i_{\mbf g}(\bs{T}^n, \bs{Z}^n)]&=n \mbb E\left[i_{\mbf g}(\bs{T},\bs{Z})\right] \nonumber \\
    &=n \log\det\left(\mbf I_{N_{R}}+\frac{1}{\sigma^2}\mbf g\mbf Q \mbf g^H\right). \nonumber
\end{align}
Let
\begin{align}
    \mbf \Theta=\mbf g \mbf Q \mbf g^{H}+\sigma^2\mbf I_{N_R} \nonumber
\end{align}
be the covariance matrix of $\bs{Z}.$
Here, $\mbf \Theta$ is positive definite and therefore non-singular. 
We further define
\begin{align}
   \phi_i= -\frac{1}{\sigma^2}\left(\bs{Z}_i-\mbf g \bs{T}_i\right)^H\left(\bs{Z}_i-\mbf g \bs{T}_i\right)+\bs{Z}_i^H\mbf \Theta^{-1}\bs{Z}_i.
    \nonumber 
\end{align}
Since the $\bs{\phi}_is$ are i.i.d., we define $\bs{\phi}$ to be a random variable with the same distribution as each of the $\bs{\phi}_i$ as follows:
\begin{align}
    \phi=-\frac{1}{\sigma^2}\left(\bs{Z}-\mbf g \bs{T}\right)^H\left(\bs{Z}-\mbf g \bs{T}\right)+\bs{Z}^H\mbf \Theta^{-1}\bs{Z}. \nonumber
\end{align}
Since  $ i_{\mbf g}(\bs{T}_i,\bs{Z}_i)=\log\det\left(\mbf I_{N_{R}}+\frac{1}{\sigma^2}\mbf g\mbf Q \mbf g^H\right)+\frac{\bs{\phi}_i}{\ln(2)}, i=1,\hdots,n,$ it follows that
\begin{align}
    \mbb P\left[i_{\mbf g}(\bs{T}^n,\bs{Z}^n)\leq \mbb E\left[i_{\mbf g}(\bs{T}^n,\bs{Z}^n )\right]-n\delta\right] 
   &=\mbb P\left[ \sum_{i=1}^{n}\frac{\phi_i}{\ln(2)} \leq -n\delta\right] \nonumber \\
   &=\mbb P\left[ -(\ln(2)n\delta+\sum_{i=1}^{n}\phi_i)\geq 0 \right] \nonumber \\
   &\leq \mbb E\left[\exp(-\beta(\ln(2)n\delta+\sum_{i=1}^{n}\bs{\phi}_i))\right] \nonumber \\
   &=\exp(-\beta n \ln(2) \delta)\mbb E\left[\exp(-n\beta\bs{\phi})\right]\quad \forall \beta\geq 0. \nonumber
\end{align}

Let $\zeta(\beta)=\mbb  E\left[ \exp(-\beta\phi)   \right]$ so that
\begin{align}
    \mbb P\left[i_{\mbf g}(\bs{T}^n,\bs{Z}^n)\leq \mbb E\left[i_{\mbf g}(\bs{T}^n,\bs{Z}^n)\right]-n\delta\right]\leq \left(\exp(-\ln(2)\beta\delta)\zeta(\beta)\right)^n.
    \label{probj}
\end{align}
In order to compute $\zeta(\beta),$ we introduce the Gaussian random vector $\bs{W}=[\bs{T},\bs{Z}]^T$ of dimension $N_{T}+N_{R}.$ Since $\bs{T}$ and $\bs{Z}$ have mean zero, $\bs{W}$ has also mean zero and its covariance matrix $\mbf O $ can be written as:
\begin{align}\mbf O=
\begin{pmatrix}
 \begin{matrix}
  \mbf Q & \mbf Q \mbf g^{H} \\
  \mbf g \mbf Q & \mbf \Theta
  \end{matrix}
  \end{pmatrix}. \nonumber
\end{align}
We further define:
\begin{align}
    \mbf \Lambda=
    \begin{pmatrix}
    \begin{matrix}
    \mbf 0 & \mbf 0 \\
  \mbf 0 & \mbf \Theta^{-1}
    \end{matrix}
    \end{pmatrix} \nonumber
\end{align}
and 
\begin{align}
    \mbf \Phi=
    \begin{pmatrix}
    \begin{matrix}
      \frac{1}{\sigma^2}\mbf g^{H}\mbf g & -\frac{1}{\sigma^2}\mbf g^{H} \\
  -\frac{1}{\sigma^2}\mbf g & \frac{1}{\sigma^2}\mbf I_{N_{R}}
    \end{matrix}
    \end{pmatrix}. \nonumber
\end{align}
We can then write
\begin{align}
   \bs{\phi}=\bs{W}^{H}\bs{\Lambda}\bs{W}-\bs{W}^{H}\mbf \Phi\bs{W}. \nonumber
\end{align}
Indeed
\begin{align}
    \bs{W}^{H}\mbf \Lambda\bs{W}&=(\bs{T}^H\bs{Z}^H)      \begin{pmatrix}
    \begin{matrix}
    \mbf 0 & \mbf 0 \\
  \mbf 0 & \mbf \Theta^{-1}
    \end{matrix}
    \end{pmatrix}\begin{pmatrix}\begin{matrix}
      \bs{T} \\
  \bs{Z}
    \end{matrix}\end{pmatrix} \nonumber \\
    &=(\mbf 0 \ \bs{Z}^H\mbf \Theta^{-1})\begin{pmatrix}\begin{matrix}
      \bs{T} \\
  \bs{Z}
    \end{matrix}\end{pmatrix} \nonumber \\
    &=\bs{Z}^H\mbf \Theta^{-1}\bs{Z}, \nonumber
\end{align}
and
\begin{align}
    \bs{W}^{H}\mbf \Phi\bs{W}=
    &(\bs{T}^H\bs{Z}^H)        \begin{pmatrix}
    \begin{matrix}
      \frac{1}{\sigma^2}\mbf g^{H}\mbf g & -\frac{1}{\sigma^2}\mbf g^{H} \\
  -\frac{1}{\sigma^2}\mbf g & \frac{1}{\sigma^2}\mbf I_{N_{R}}
    \end{matrix}
    \end{pmatrix}\begin{pmatrix}\begin{matrix}
      \bs{T} \\
  \bs{Z}
    \end{matrix}\end{pmatrix}  \nonumber \\
    &=(\frac{1}{\sigma^2} \bs{T}^{H} \mbf g^{H} \mbf g -\frac{1}{\sigma^2}\bs{Z}^{H}\mbf g \ , \ -\frac{1}{\sigma^2}\bs{T}^{H}\mbf g^{H}+\frac{1}{\sigma^2}\bs{Z}^{H})\begin{pmatrix}\begin{matrix}
      \bs{T} \\
  \bs{Z}
    \end{matrix}\end{pmatrix} \nonumber \\
    &=\frac{1}{\sigma^2}\left[\bs{T}^{H}\mbf g^{H} \mbf g \bs{T}-\bs{Z}^{H}\mbf g \bs{T}-\bs{T}^{H}\mbf g^{H} \bs{Z}+\bs{Z}^{H}\bs{Z} \right] \nonumber \\
    &=\frac{1}{\sigma^2}(\bs{Z}-\mbf g \bs{T})^{H}(\bs{Z}-\mbf g \bs{T}). \nonumber
\end{align}
Since $\mbf Q$ is non-singular and  the matrix $\mbf\Theta-\mbf g \mbf Q \mbf Q^{-1}\mbf Q\mbf g^{H}=\sigma^2\mbf I_{N_R}$ is non-singular, it follows by applying the inversion rule for the block-matrix $\mbf O$ \cite{matrixinversion}  that 
\begin{align}\mbf O^{-1}=
\begin{pmatrix}
 \begin{matrix}
  \mbf Q^{-1}+\frac{1}{\sigma^2}\mbf g^{H}\mbf g & -\frac{1}{\sigma^2}\mbf g^{H} \\
  -\frac{1}{\sigma^2}\mbf g  & \frac{1}{\sigma^2}\mbf I_{N_{R}}
  \end{matrix}
  \end{pmatrix}. \nonumber
\end{align}

Now, let $\mbf  M(\beta)=\mbf O^{-1}+\beta(\mbf \Lambda-\mbf \Phi)  \in \mbb C^{(N_{T}+N_{R})\times (N_{T}+N_{R})}.$ It follows that
\begin{align}
    \zeta(\beta)&= \mbb E\left[\exp(-\beta\bs{\phi})  \right]\nonumber \\
    &=\frac{\int\exp(-\bs{w}^H\mbf O^{-1}\bs{w})\times
    \exp\left[-\beta(\bs{w}^H\bs{\Lambda}\bs{w}-\bs{w}^H\bs{\Phi}\bs{w})\right]d\bs{w}}{\pi^{N_{T}+N_{R}}\det(\mbf O)}  \nonumber \\
    &=\frac{1}{\pi^{N_{T}+N_{R}}\det(\mbf O)}\int \exp(-\bs{w}^H\mbf M(\beta) \bs{w}) d\bs{w} \nonumber \\
    &=\det(\mbf M(\beta) \mbf O)^{-1}, \nonumber
\end{align}
where the integral is a $(N_{T}+N_{R})$-fold integral over $\mbb C^{N_{T}+N_{R}}.$
 Here, $\mbf M(\beta)$ is positive definite  for $0\leq\beta<\beta_0$ for some $\beta_0\geq 1.$
Indeed, it holds that
\begin{align}\mbf M(\beta)=
\begin{pmatrix}
 \begin{matrix}
  \mbf Q^{-1}+\frac{1}{\sigma^2}(1-\beta)\mbf g^{H}\mbf g & -\frac{1}{\sigma^2}(1-\beta)\mbf g^{H} \\
  -\frac{1}{\sigma^2}(1-\beta)\mbf g  & \beta \mbf \Theta^{-1}+\frac{1}{\sigma^2}(1-\beta)\mbf I_{N_{R}}
  \end{matrix}
  \end{pmatrix}, \quad \beta\geq 0.
  \nonumber \end{align}
  Notice that 
  $$ \mbf M(\beta)=\mbf \beta M(1)+(1-\beta)\mbf M(0) $$
  and that $\mbf M(0)$ and $\mbf M(1)$ are both positive definite. From the convexity of the set of positive definite Hermitian matrices, it follows for all $\beta\in(0,1)$ that
  $\beta \mbf M(1)+(1-\beta)\mbf M(0)$ is positive definite. This proves that $\mbf M(\beta)$ is positive definite for $0\leq \beta\leq 1.$

By substituting $\mbf \Lambda,$ $\mbf \Phi$ and $\mbf O$, and by using the fact that
$\mbf \Theta=\mbf g \mbf Q \mbf g^H+\sigma^2 \mbf I_{N_{R}},$ we obtain
\begin{align}
        \mbf M(\beta)\mbf O
        &= \mbf I_{N_{T}+ N_{R}}+\beta(\mbf \Lambda-\mbf \Phi)\mbf O \nonumber \\
        &= \mbf I_{N_{T}+ N_{R}}+\beta  \begin{pmatrix}
    \begin{matrix}
       \frac{-1}{\sigma^2}\mbf g^{H} \mbf g& \frac{1}{\sigma^2}\mbf g^{H} \\
  \frac{1}{\sigma^2}\mbf g&\mbf \Theta^{-1}-\frac{1}{\sigma^2}\mbf I_{N_{R}}
    \end{matrix}
    \end{pmatrix}\begin{pmatrix}
 \begin{matrix}
  \mbf Q & \mbf Q \mbf g^{H} \\
  \mbf g \mbf Q & \mbf \Theta
  \end{matrix}
  \end{pmatrix}
  \nonumber \\
  &=\mbf I_{N_{T}+ N_{R}} +\beta \begin{pmatrix}
    \begin{matrix}
       \frac{-1}{\sigma^2}\mbf g^{H} \mbf g \mbf Q+\frac{1}{\sigma^2}\mbf g^{H}\mbf g \mbf Q& -\frac{1}{\sigma^2}\mbf g^{H}\mbf g \mbf Q \mbf g^{H}+\frac{1}{\sigma^2}\mbf g^{H}\mbf \Theta \\
  \frac{1}{\sigma^2}\mbf g\mbf Q+\mbf \Theta^{-1}\mbf g\mbf Q-\frac{1}{\sigma^2}\mbf g \bf Q&\frac{1}{\sigma^2}\mbf g \mbf Q \mbf g^{H}+\mbf I_{N_{R}}-\frac{1}{\sigma^2}\mbf \Theta
    \end{matrix}
    \end{pmatrix} \nonumber \\
    &=\begin{pmatrix}
    \begin{matrix}
      \mbf I_{N_{T}} & \beta \mbf g^{H} \\
  \beta \mbf \Theta^{-1} \mbf g \mbf Q & \mbf I_{N_{R}}
    \end{matrix}
    \end{pmatrix}. \nonumber
\end{align}
As a result, we obtain using the determinant rule for block-matrices
\begin{align}
    \det(\mbf M(\beta) \mbf O)
    &=  \det(\mbf I_{N_{R}}-\beta^2 \mbf \Theta^{-1}\mbf g \mbf Q \mbf g^{H}) \nonumber \\
    &=\det(\mbf \Theta^{-1})\det(\mbf \Theta-\beta^2 \mbf g \mbf Q \mbf g^{H}) \nonumber \\
    &=\det(\mbf \Theta^{-1})\det(\sigma^2\mbf I_{N_{R}}+(1-\beta^2) \mbf g \mbf Q \mbf g^{H}) \nonumber \\
    &=\sigma^{2N_{R}}\frac{\det(\mbf I_{N_{R}}+(1-\beta^2) \frac{1}{\sigma^2}\mbf g \mbf Q \mbf g^{H})}{\det(\mbf \Theta)}, \nonumber 
\end{align}
where
\begin{align}
    \det(\mbf \Theta)=\sigma^{2N_{R}}\det(\mbf I_{N_{R}}+ \frac{1}{\sigma^2}\mbf g \mbf Q \mbf g^{H}). \nonumber
\end{align}
We define $\lambda_1,\hdots, \lambda_{N_{R}}$ to be the eigenvalues of the positive semi-definite matrix $\frac{1}{\sigma^2}\mbf g \mbf Q \mbf g^{H}.$
Then it holds that
\begin{align}
    \det(\mbf I_{N_{R}}+ \frac{1}{\sigma^2}\mbf g \mbf Q \mbf g^{H})=\prod_{i=1}^{N_{R}} (1+\lambda_i) \nonumber
\end{align}
and
\begin{align}
    \det(\mbf I_{N_{R}}+(1-\beta^2) \frac{1}{\sigma^2}\mbf g \mbf Q \mbf g^{H}) =\prod_{i=1}^{N_{R}} (1+(1-\beta^2)\lambda_i). \nonumber
\end{align}
This yields
\begin{align}
    \det(\mbf M(\beta)\mbf O)=\prod_{i=1}^{N_{R}} \frac{1+(1-\beta^2)\lambda_i}{1+\lambda_i}
    &=\prod_{i=1}^{N_{R}}\left(1-\beta^2\frac{\lambda_i}{1+\lambda_i}\right) \nonumber
\end{align}
such that 
\begin{align}
    \zeta(\beta)= \prod_{i=1}^{N_{R}}\left(1-\beta^2\frac{\lambda_i}{1+\lambda_i}\right)^{-1}, \quad 0\leq \beta <\beta_0. \nonumber
\end{align}
Then, we have
\begin{align}
    \zeta(\beta)\leq \frac{1}{(1-\beta^2)^{N_{R}}}, \quad 0 \leq \beta <\beta_0 \nonumber
\end{align}
and hence
\begin{align}
    \left(\exp(-\ln(2)\delta\beta)\zeta(\beta)\right)^{\frac{1}{N_R}} \leq \frac{\exp(-\frac{\ln(2)\delta\beta}{N_R})}{1-\beta^2}, \quad 0\leq \beta <\beta_0. \nonumber
\end{align}
Now if we put
\begin{align}
    \beta=\frac{N_R}{\ln(2)\delta}\left[ -1+\left(1+\frac{(\ln(2)\delta)^2}{N_R^2} \right)^{\frac{1}{2}}\right], \nonumber
\end{align}
it follows that $0<\beta<1$ and it holds that
\begin{align}
    \exp(-\frac{\ln(2)\delta\beta}{N_{R}})=\exp\left(-´\left[-1+\left(1+\frac{(\ln(2)\delta)^2}{N_R^2}     \right)^{1/2}     \right] \right)   \nonumber
\end{align}
and that
\begin{align}
\frac{1}{1-\beta^2}&=\frac{1}{1-\left(\frac{N_R}{\ln(2)\delta}\right)^2\left[1-2\sqrt{1+\left(\frac{\ln(2)\delta}{N_R}\right)^2}+1+\left(\frac{\ln(2)\delta}{N_R}\right)^2\right]} \nonumber \\
&=\frac{1}{2}\frac{\left(\frac{\ln(2)\delta}{N_R}\right)^2}{\sqrt{1+\left(\frac{\ln(2)\delta}{N_R}\right)^2}-1} \nonumber \\
&=\frac{1}{2}\left(\sqrt{1+\left(\frac{\ln(2)\delta}{N_{R}}\right)^2}+1\right) \nonumber \\
&=\left(1+\frac{1}{2}\left[-1+\left(1+\frac{(  \ln(2)\delta)^2}{N_R^2}     \right)^{1/2}     \right]        \right). \nonumber
\end{align}
This implies that
\begin{align}
    (1-\beta^2)^{-1} \exp\left(-\frac{\ln(2)\delta\beta}{N_R}\right)=\left(1+\frac{1}{2}\left[-1+\left(1+\frac{(\ln(2)\delta)^2}{N_{R}^2}     \right)^{1/2}     \right]        \right) \exp\left(-\left[-1+\left(1+\frac{(\ln(2)\delta)^2}{N_{R}^2}     \right)^{1/2}     \right]  \right).    \nonumber 
\end{align}
Since $(1+\frac{1}{2}x)\exp(-x)\leq \exp(-\frac{x}{2}) \ \text{for} \ x\geq 0,$ we have
\begin{align}
\exp(-\ln(2)\delta\beta) \zeta(\beta) \leq \exp\left(-\frac{N_R}{2}\left[\left(1+\frac{(\ln(2)\delta)^2}{N_R^2}     \right)^{1/2}-1     \right]\right). \nonumber
\end{align}
It follows from \eqref{probj} that
\begin{align}
   \mbb P\left[i_{\mbf g}(\bs{T}^n,\bs{Z}^n)\leq \mbb E\left[i_{\mbf g}(\bs{T}^n,\bs{Z}^n)\right]-n\delta\right] &\leq \exp\left(-\frac{nN_R}{2}\left[\left(1+\frac{(\ln(2)\delta)^2}{N_R^2}     \right)^{1/2}-1     \right]\right) \nonumber \\
   &= 2^{\left\{-\frac{n N_R}{2\ln(2)}\left[\left(1+\frac{(\ln(2)\delta)^2}{N_{R}^2}\right)^\frac{1}{2}-1\right] \right\}}. \nonumber 
\end{align}
This completes the proof of the lemma.
\end{proof}
\begin{lemma}
\label{upperboundprobsum}
Let $\bs{X}_i,$ $i=1, \hdots,n$ be i.i.d. $N$-dimensional complex Gaussian random vectors with mean $\bs{0}_N$ and covariance matrix $\mbf O$ whose trace is smaller than or equal to $M$.
Then, for any $\delta>0$
\begin{align}
    \mbb P\left[\sum_{i=1}^{n} \lVert \bs{X}_i\rVert^2\geq n(M+\delta)\right]\leq \left[(1+\frac{\delta}{M})2^{-\frac{\delta}{\ln(2)M}}\right]^{n}, \nonumber
\end{align}
where 
\begin{align}
    \lVert \bs{X}_i \rVert^2=\sum_{j=1}^{N} |X_i^j|^2 \nonumber
\end{align}
and
\begin{align}
    \bs{X}_i=(X_i^1,\hdots,X_i^N)^{T}. \nonumber
\end{align}
\end{lemma}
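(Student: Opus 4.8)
The plan is to establish this large-deviation estimate by a standard Chernoff (exponential Markov) argument that exploits the i.i.d.\ structure of the summands. First I would fix a parameter $\beta\ge 0$ (to be optimized at the end) and apply Markov's inequality to the nonnegative random variable $\exp\!\big(\beta\sum_{i=1}^n\lVert\bs X_i\rVert^2\big)$, obtaining
\[
\mbb P\Big[\textstyle\sum_{i=1}^n\lVert\bs X_i\rVert^2\ge n(M+\delta)\Big]\le e^{-\beta n(M+\delta)}\,\mbb E\Big[\exp\big(\beta\textstyle\sum_{i=1}^n\lVert\bs X_i\rVert^2\big)\Big]=e^{-\beta n(M+\delta)}\big(\mbb E[\exp(\beta\lVert\bs X\rVert^2)]\big)^n,
\]
where the last equality uses independence together with the fact that all $\bs X_i$ share the law of a single vector $\bs X\sim\mathcal{N}_{\mathbb{C}}(\bs 0_N,\mbf O)$. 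Thus the whole problem reduces to controlling the moment generating function of one $\lVert\bs X\rVert^2$.

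To evaluate that MGF I would diagonalize $\mbf O=\mbf U\mbf D\mbf U^H$ with $\mbf D=\mathrm{diag}(\mu_1,\dots,\mu_N)$, $\mu_j\ge 0$, and $\mbf U$ unitary. Since $\mbf U^H\bs X\sim\mathcal{N}_{\mathbb{C}}(\bs 0_N,\mbf D)$ and the Euclidean norm is unitarily invariant, $\lVert\bs X\rVert^2$ has the same distribution as $\sum_{j=1}^N\mu_j|Z_j|^2$ with the $Z_j$ i.i.d.\ standard complex Gaussians. Each $|Z_j|^2$ is a unit-mean exponential variable, so $\mbb E[\exp(s|Z_j|^2)]=(1-s)^{-1}$ for $s<1$, and hence for $0\le\beta<1/\mu_{\max}$ one gets
\[
\mbb E[\exp(\beta\lVert\bs X\rVert^2)]=\prod_{j=1}^N\frac{1}{1-\beta\mu_j}.
\]
Equivalently this is $\det(\mbf I_{N}-\beta\mbf O)^{-1}$ via the complex Gaussian integral, but the eigenvalue form is what the next step needs and it also covers a singular $\mbf O$.

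Next I would eliminate the dependence on the individual eigenvalues, keeping only the trace constraint. Since $\sum_{j=1}^N\mu_j=\mathrm{tr}(\mbf O)\le M$, in particular $\mu_{\max}\le M$, so every $\beta<1/M$ is admissible. The Weierstrass product inequality $\prod_j(1-x_j)\ge 1-\sum_j x_j$, valid here because each $\beta\mu_j\in[0,1)$, yields $\prod_j(1-\beta\mu_j)\ge 1-\beta M$, whence $\mbb E[\exp(\beta\lVert\bs X\rVert^2)]\le(1-\beta M)^{-1}$. Combining with the first display gives
\[
\mbb P\Big[\textstyle\sum_{i=1}^n\lVert\bs X_i\rVert^2\ge n(M+\delta)\Big]\le\Big(\frac{e^{-\beta(M+\delta)}}{1-\beta M}\Big)^n,\qquad 0\le\beta<\tfrac{1}{M}.
\]

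Finally I would optimize the base over $\beta$. Minimizing $-\beta(M+\delta)-\ln(1-\beta M)$ gives $\beta^\star=\delta/(M(M+\delta))$, which indeed lies in $(0,1/M)$; substituting yields $1-\beta^\star M=M/(M+\delta)$ and $\beta^\star(M+\delta)=\delta/M$, so the base equals $(1+\delta/M)e^{-\delta/M}$, and rewriting $e^{-\delta/M}=2^{-\delta/(\ln(2)M)}$ reproduces exactly the claimed bound. The only genuinely delicate steps are the reduction of the full-covariance MGF to the scalar product $\prod_j(1-\beta\mu_j)^{-1}$ and the passage from the eigenvalues to the single trace bound; once those are in place, the final optimization is a one-variable calculus exercise that, pleasantly, saturates the target expression exactly.
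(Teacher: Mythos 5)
Your proposal is correct and follows essentially the same route as the paper's proof: an exponential Markov (Chernoff) bound reducing to the single-vector MGF, evaluation of $\mbb E[\exp(\beta\lVert\bs X\rVert^2)]=\prod_{j}(1-\beta\mu_j)^{-1}$ via diagonalization (the paper does this through a rank factorization and the chi-square MGF, you through unitary invariance and the exponential law of $|Z_j|^2$ — the same computation), the product inequality $\prod_j(1-\beta\mu_j)\geq 1-\beta M$, and the identical choice $\beta=\delta/(M(M+\delta))$. The only cosmetic difference is that you observe this $\beta$ is the actual minimizer, whereas the paper simply substitutes it.
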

\begin{proof}
Let $\bs{X}$ be a random vector with the same distribution as each of the $\bs{X}_i$. Then
\begin{align}
\mbb P\left[ \sum_{i=1}^{n} \lVert \bs{X}_i\rVert^2 \geq n(M+\delta)      \right] 
&=\mbb P\left[ \sum_{i=1}^{n}\lVert \bs{X}_i^2\rVert-n(M+\delta)\geq 0       \right] \nonumber \\
    &\leq \mbb E \left[\exp\left(\beta\left(\sum_{i=1}^{n}\lVert \bs{X}_{i}\rVert^2-n(M+\delta\right)    \right)            \right]\nonumber \\
    &=\left[\exp(-[M+\delta]\beta)\mbb E\left[ \exp\left(\beta \lVert \bs{X} \rVert^2\right)\right]\right]^n,
    \label{eqprob}
\end{align}
where we used that the $\bs{X}_is$ are i.i.d..
By a standard calculation which follows below, one can show that
\begin{align}
    \mbb E\left[ \exp(\beta \lVert \bs{X}\rVert^2    \right]&=\mbb E \left[\exp(\beta\bs{X}^H\bs{X})           \right] \nonumber \\
    &=\prod_{j=1}^{N}(1-\beta\mu_j)^{-1} \quad \beta<\beta_0, \label{standardcalculation}
\end{align}
where $\mu_1,\hdots,\mu_{N}$ are the eigenvalues of $\mbf O$, and for $\beta_0=\frac{1}{M}\leq \frac{1}{\mu_1+\hdots+\mu_N}<\underset{j\in \{1,\hdots,N\}}{\min}\frac{1}{\mu_j}$  so that all the factors are positive, whether $\mbf O$ is non-singular or singular.
To prove \eqref{standardcalculation}, we let $r$ be the rank of $\mbf O.$ It holds that $r\leq N$. We make use of the spectral decomposition theorem to express $\mbf O$ as $\mbf S_{\mbf O}^{\star} \Lambda^{\star}{S_{\mbf O}^{\star}}^{H} $, where $\Lambda^{\star}$ is a diagonal matrix whose first $r$ diagonal elements are positive and where the remaining diagonal elements are equal to zero.
Next, we let $\mbf V^{\star}=\mbf S_{\mbf O}^{\star} {\Lambda^{\star}}^{\frac{1}{2}}$ and remove the $N-r$ last columns of $\mbf V^{\star}$, which are null vectors to obtain the matrix $\mbf V.$ Then, it can be verified that $\mbf O=\mbf V \mbf V^{H}.$
We can write $\bs{X}=\mbf  V \bs{U}^\star$
where $\bs{U}^\star \sim\mathcal{N}_{\mbb C}(\bs{0},\mbf I_{r}).$
As a result:
\begin{align}
    \bs{X}^H\bs{X}={(\bs{U}^\star)}^{H}\mbf V^{H}\mbf V \bs{U}^\star. \nonumber
\end{align}
Let $\mbf S$ be a unitary matrix which diagonalizes $\mbf V^{H} \mbf V$ such that $\mbf S^{H} \mbf V^{H} \mbf V \mbf S= \text{Diag}(\mu_1,\hdots,\mu_r)$ with $\mu_1,\hdots,\mu_r$ being the positive eigenvalues of $\mbf O=\mbf V \mbf V^{H}$ in decreasing order, as mentioned above.
One defines $\bs{U}=\mbf S^{H} \bs{U}^{\star}.$ We have
\begin{align}
    \text{cov}(\bs{U})&=\mbf S^{H} \text{cov}(\bs{U}^{\star}) \mbf S \nonumber \\
    &=\mbf S^{H} \mbf S \nonumber \\
    &=\mbf I_{r}. \nonumber
\end{align}
Therefore, it holds that $\bs{U}=(U_1,\hdots,U_r)^{T} \sim \mathcal{N}(\bs{0},\mbf I_r).$
Since $\mbf S$ is unitary, it follows that
\begin{align}
    \bs{X}^{H}\bs{X}&=\left((\mbf S^{H})^{-1}\bs{U}\right)^{H} \mbf V^{H} \mbf V (\mbf S^{H})^{-1} \bs{U} \nonumber \\
    &=\bs{U}^{H}\mbf S^{H} \mbf V^{H} \mbf V \mbf S \bs{U} \nonumber \\
    &=\bs{U}^{H}\text{Diag}(\mu_1,\hdots,\mu_r)\bs{U} \nonumber \\
    &=\sum_{j=1}^{r} \mu_j |U_{j}|^{2}.\nonumber
\end{align}
Then, we have
\begin{align}
    \mbb E\left[ \exp(\beta \lVert \bs{X}\rVert^2)    \right] &= \mbb E \left[\prod_{j=1}^{r}\exp(\frac{1}{2}\beta\mu_j 2|U_j|^2  )             \right] \nonumber \\
    \nonumber \\
    &=\prod_{j=1}^{r} \mbb E \left[ \exp(\frac{1}{2}\beta\mu_j 2|U_j|^2  )       \right] \nonumber \\
    &=\prod_{j=1}^{N} (1-\beta\mu_j)^{-1},             \nonumber 
\end{align}
where we used that all the $U_j$s are independent, that $ \forall j \in \{1,\hdots,r\},  2|U_{j}|^2$ is chi-square distributed with $k=2$ degrees of freedom and with moment generating function equal to $\mbb E\left[\exp(2t|U_j|^2)\right]=(1-2t)^{-k/2}$ for $t<\frac{1}{2}$ and that $\forall j \in \{1,\hdots,r\}$ and for $\beta<\beta_0,$ $\frac{1}{2}\beta\mu_j<\frac{1}{2}$. This completes the proof of \eqref{standardcalculation}.

Now, it holds that
$$ \prod_{i=1}^{N}(1-\beta\mu_i)\geq 1-\beta(\mu_1+\hdots+\mu_{N})\geq 1-\beta M.         $$ This yields
\begin{align}
    &\exp(-(M+\delta)\beta)\mbb E\left[ \exp(\beta \lVert \bs{X}\rVert^2    \right] \leq \frac{\exp(-(M+\delta)\beta)}{1-\beta M}, \nonumber
\end{align}
where $0<\beta<\frac{1}{M}=\beta_0.$
Putting $\beta=\frac{\delta}{M(\delta+M)}<\frac{1}{M}$ yields
\begin{align}
    \exp(-(M+\delta)\beta)\mbb E\left[ \exp(\beta \lVert \bs{X}\rVert^2    \right]&\leq (1+\frac{\delta}{M})\exp(-\frac{\delta}{M})\nonumber \\
    &=(1+\frac{\delta}{M})2^{(-\frac{\delta}{\ln(2)M})},
\nonumber \end{align}
which combined with \eqref{eqprob} proves the lemma.
\end{proof}
\begin{lemma}
\label{existence} 
Let $\epsilon>0$ be fixed arbitrarily. Let $\mc S_a$ be any closed subset of $\mc B_a.$ Then, there exists a non-singular $\mbf Q \in \mc Q_{P}$ such that
\begin{enumerate}
    \item $\mathrm{tr}(\mbf Q)<P$
    \item$\log\det(\mbf I_{N_R}+\mbf g \mbf Q \mbf g^H)
    \geq \underset{\mbf Q \in \mc Q_{P}}{\max}\underset{\mbf g \in \mc S_{a}}{\min}\log\det(\mbf I_{N_{R}}+\frac{1}{\sigma^2}\mbf g \mbf Q \mbf g^H)-\epsilon \quad \text{for all} \ \mbf g \in \mc S_a.$
\end{enumerate}
\end{lemma}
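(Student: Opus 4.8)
The plan is to take an optimal input covariance matrix for the max-min problem and perturb it slightly so that it becomes positive definite with trace strictly below $P$, while paying only an arbitrarily small penalty in the objective. Throughout I write $f(\mbf g,\mbf Q)=\log\det(\mbf I_{N_R}+\frac{1}{\sigma^2}\mbf g\mbf Q\mbf g^H)$ as in \eqref{fgQ}. First I would record the relevant compactness facts: $\mc B_a=\{\mbf g:\lVert\mbf g\rVert\le a\}$ is closed and bounded, hence compact, so its closed subset $\mc S_a$ is compact; and $\mc Q_P$ is a closed, bounded (thus compact) subset of the Hermitian matrices. Since $(\mbf g,\mbf Q)\mapsto f(\mbf g,\mbf Q)$ is jointly continuous, the inner minimum $\min_{\mbf g\in\mc S_a}f(\mbf g,\mbf Q)$ is attained and, by Berge's maximum theorem, is a continuous function of $\mbf Q$; hence the outer maximum over the compact set $\mc Q_P$ is attained at some $\mbf Q^\star\in\mc Q_P$. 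Denote $C^\star=\min_{\mbf g\in\mc S_a}f(\mbf g,\mbf Q^\star)=\max_{\mbf Q\in\mc Q_P}\min_{\mbf g\in\mc S_a}f(\mbf g,\mbf Q)$, so that $f(\mbf g,\mbf Q^\star)\ge C^\star$ for every $\mbf g\in\mc S_a$.

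Next I would introduce the one-parameter family $\mbf Q_t=(1-t)\mbf Q^\star+t\frac{P}{2N_T}\mbf I_{N_T}$ for $t\in(0,1]$ and verify the two structural requirements directly. Non-singularity is immediate: $\mbf Q^\star\succeq 0$ gives $\mbf Q_t\succeq t\frac{P}{2N_T}\mbf I_{N_T}\succ 0$, so $\mbf Q_t$ is positive definite and therefore non-singular. For the trace, $\mathrm{tr}(\mbf Q_t)=(1-t)\,\mathrm{tr}(\mbf Q^\star)+t\frac{P}{2}\le (1-t)P+t\frac{P}{2}=P-t\frac{P}{2}<P$, using $\mbf Q^\star\in\mc Q_P$; this simultaneously shows $\mbf Q_t\in\mc Q_P$ and gives the strict inequality in item~1.

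Finally I would close the gap in the objective by continuity. Restricting $f$ to the compact set $\mc S_a\times\{\mbf Q_t:t\in[0,1]\}$, it is uniformly continuous there, and $\mbf Q_t\to\mbf Q^\star$ as $t\to 0$; hence $\sup_{\mbf g\in\mc S_a}\lvert f(\mbf g,\mbf Q_t)-f(\mbf g,\mbf Q^\star)\rvert\to 0$. Choosing $t>0$ small enough that this supremum is at most $\epsilon$ yields $f(\mbf g,\mbf Q_t)\ge f(\mbf g,\mbf Q^\star)-\epsilon\ge C^\star-\epsilon$ for all $\mbf g\in\mc S_a$, which is precisely item~2 with $\mbf Q=\mbf Q_t$ (note that the bound holds irrespective of whether the perturbation raises or lowers $f$). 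This is essentially a routine perturbation argument, so the only point that needs genuine care is the continuity of the inner-minimum value as $\mbf Q$ varies, equivalently the uniform continuity of $f$ on the relevant compact set; that is what guarantees the penalty incurred by the perturbation can be controlled by $\epsilon$ uniformly over $\mc S_a$, while everything else reduces to direct verification.
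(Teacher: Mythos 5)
Your proof is correct and follows essentially the same route as the paper: both arguments rest on the compactness of $\mc S_a\times\mc Q_P$, the (uniform) continuity of $f(\mbf g,\mbf Q)=\log\det(\mbf I_{N_R}+\frac{1}{\sigma^2}\mbf g\mbf Q\mbf g^H)$ there, and a small perturbation of a (near-)optimal covariance matrix to enforce non-singularity and $\mathrm{tr}(\mbf Q)<P$ at an $\epsilon$-cost. The only difference is cosmetic: where the paper perturbs in two unspecified steps (first to a non-singular $\mbf Q_0$ within $\epsilon/2$, then, if $\mathrm{tr}(\mbf Q_0)=P$, to a nearby $\mbf Q_1$ with smaller trace), you give a single explicit perturbation $\mbf Q_t=(1-t)\mbf Q^\star+t\frac{P}{2N_T}\mbf I_{N_T}$ that achieves both properties at once, which makes the argument slightly more self-contained.
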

\begin{proof}
Notice first that the set  $\mathcal{L}=\mc S_{a} \times \mc Q_{P}$ is a  compact set, because the conditions on the matrices $\mbf g \in \mc S_{a}\subset \mc B_a$ and on the positive semi-definite matrices $\mbf Q\in \mc Q_{P}$ guarantee that $\mc L$ is bounded and closed in $\mbb C^{N_{R}\times N_{T}} \times \mbb C^{N_{T}\times N_{T}}$. Now the function $f(\mbf g,\mbf Q)=\log\det(\mbf I_{N_{R}}+\frac{1}{\sigma^2}\mbf g \mbf Q \mbf  g^H)$ is uniformly continuous on $\mathcal{L}$ \color{black}. One can find a non-singular $\mbf Q_{0} \in \mc Q_{P}$ such that
\begin{align}
& \log\det(I_{N_R}+\frac{1}{\sigma^2}\mbf g \mbf Q_{0} \mbf g^H) \geq \underset{\mbf Q\in \mc Q_{P}}{\max}\underset{\mbf g \in \mc S_{a}}{\min}  \log\det(I_{N_R}+\frac{1}{\sigma^2}\mbf g \mbf Q \mbf g^H) -\frac{\epsilon}{2} \quad \forall \mbf g \in \mc S_{a}.\nonumber
\end{align}
If $\text{tr}(\mbf Q_0)<P$, the proof is complete. If $\text{tr}(\mbf Q_0)=P$, we can find, by the uniform continuity of $f$ on $\mathcal{L}$, a number $\delta>0$ such that
$\lvert f(\mbf g,\mbf Q)-f(\mbf g,\mbf Q_0)\rvert\leq \frac{\epsilon}{2}$ for all $\mbf g$ if $\lVert  \mbf Q- \mbf Q_0 \rVert \leq \delta$. We can then change $\mbf Q_0$ into a non-singular $\mbf Q_1$ in such a way that $\lVert \mbf Q_{1}-\mbf Q_{0} \rVert \leq \delta$ and
$\text{tr}(\mbf Q_{1})<\text{tr}(\mbf Q_{0})=P$.
 $\mbf Q_{1}$ satisfies the conditions of the lemma. This completes the proof of the lemma.
\end{proof}
\subsubsection{Proof of Theorem \ref{capacityfinitestate}}
\begin{proof}
Now that we proved the lemmas, we fix $R$ to be any positive number strictly less than $\underset{\mbf Q \in \mc Q_{P}}{\max}\underset{\mbf g\in\mc G_{a}}{\min} \log\det(\mathbf{I}_{N_{R}}+\frac{1}{\sigma^2}\mathbf{g}\mathbf{Q}\mathbf{g}^{H})$ and put $2\theta=\underset{\mbf Q \in \mc Q_{P}}{\max}\underset{\mbf g\in\mc G_{a}}{\min} \log\det(\mathbf{I}_{N_{R}}+\frac{1}{\sigma^2}\mathbf{g}\mathbf{Q}\mathbf{g}^{H})-R.$\\~\\
By Lemma \ref{existence}, \color{black} one can find a non-singular $\mbf Q_1 \in \mc Q_{P}$ \color{black} such that $\text{tr}(\mbf Q_{1})=P-\beta, \quad \beta>0$ and
\begin{align}
\mbb E[i_{\mbf g}(\bs{T},\bs{Z})]=\log\det(\mbf I_{N_{R}}+\frac{1}{\sigma^2}\mbf g \mbf Q_{1} \mbf g^{H})\geq R+\frac{\theta}{2} \quad \forall \mbf g \in \mc G_{a},
\label{meanJ}
\end{align}
where $\bs{T}$ and $\bs{Z}$ represent the random input and output of $W_\mbf g,$ respectively, and where $\bs{T}\sim\mc N_{\mbb C}(\mbf 0_{N_T},\mbf Q_1).$
 Let $E_n$ be the set of all input sequences $\bs{t}^n$ satisfying $\frac{1}{n}\sum_{i=1}^{n}\lVert \bs{t}_i\rVert^2\leq P.$   $ \text{For any} \ \mbf g \in \mc G_{a},$  we define 
$\bs{T}_i,i=1, \hdots,n,$ to be the i.i.d. random inputs of $ W_\mbf g,$ each normally distributed with mean $\mbf 0_{N_T}$ and covariance matrix $\mbf Q_1$. Let $\hat{P}=P-\beta$ and $\hat{\beta}=\frac{\beta}{\ln(2)\hat{P}}-\log(1+\frac{\beta}{\hat{P}})>0$.
Then, by Lemma \ref{upperboundprobsum}, it holds that
\begin{align}
    \mbb P \left[\sum_{i=1}^{n}\lVert \bs{T}_i\rVert^2 \geq n(\hat{P}+\beta)  \right] &\leq \left[(1+\frac{\beta}{\hat{P}})2^{(-\frac{\beta}{\ln(2)\hat{P}})}\right]^{n} \nonumber \\
    &=2^{\left(-n\frac{\beta}{\ln(2)\hat{P}}+n\log(1+\frac{\beta}{\hat{P}})\right)} \nonumber \\
    &= 2^{-n\hat{\beta}}.\nonumber
\end{align}

As a result, we have
\begin{align}
    \mbb P\left[ \bs{T}^n \notin E_n\right]&= \mbb P \left[\sum_{i=1}^{n}\lVert \bs{T}_i\rVert^2 > nP \right]\nonumber \\
    &\leq  \mbb P \left[\sum_{i=1}^{n}\lVert \bs{T}_i\rVert^2 \geq n(\hat{P}+\beta)  \right] \nonumber \\
    &\leq 2^{-n\hat{\beta}}.\nonumber
\end{align}

Now define $\tau=\lfloor2^{nR}\rfloor$, $\alpha=n(R+\frac{\theta}{8})$ and $\delta=\frac{n\theta}{8}.$
It follows from Lemma \ref{existenceerror} that there exists a code $\Gamma_n$  for $\mathcal{C}'$ with size $\lVert \Gamma_n \rVert=\tau$ and block-length $n$ such that for all $\mbf g \in \mc G_{a}$ 
\begin{align}
    e(\Gamma_n,\mbf g)
    &\leq |\mc G_{a}|2^{nR}2^{-n(R+\frac{\theta}{8})}+|\mc G_{a}|^22^{-n\frac{\theta}{8}}+|\mc G_{a}|2^{-n\hat{\beta}}+\sum_{\mbf g\in\mc G_{a}} \mbb P\left[i_{\mbf g}(\bs{T}^n,\bs{Z}^n)\leq n(R+\frac{\theta}{4})\right].
    \label{epsilonn}
\end{align}
Since $\mbb E\left[ i_{\mbf g}(\bs{T}^n,\bs{Z}^n)\right]=n\mbb E\left[i_{\mbf g}(\bs{T},\bs{Z})\right], $ it follows from $\eqref{meanJ}$ using Lemma \ref{abweichungmean} that
\begin{align}
    \mbb P\left[i_{\mbf g}(\bs{T}^n,\bs{Z}^n)\leq n(R+\frac{\theta}{4})\right] 
    &=\mbb P\left[i_{\mbf g}(\bs{T}^n,\bs{Z}^n)\leq n(R+\frac{\theta}{2})-n\frac{\theta}{4}\right] \nonumber \\
    &\leq \mbb P\left[i_{\mbf g}(\bs{T}^n,\bs{Z}^n)\leq \mbb E\left[ i_{\mbf g}(\bs{T}^n,\bs{Z}^n)\right]-n\frac{\theta}{4}\right] \nonumber \\
    &\leq 2^{-\frac{nN_{R}}{2\ln(2)}\left[\left(1+\frac{(\ln(2)\theta)^2}{(4N_{R})^2}\right)^{\frac{1}{2}}-1\right]}. \nonumber
\end{align}
Then, it follows using \eqref{epsilonn} that for all $\mbf g \in \mc G_{a}$
\begin{align}
      e(\Gamma_n,\mbf g) &\leq (\lvert \mc G_{a}\rvert+\lvert \mc G_{a} \rvert^2)2^{-\frac{n\theta}{8}}+\lvert \mc G_{a}\rvert 2^{-n\hat{\beta}}  +\lvert \mc G_{a} \rvert 2^{-\frac{nN_{R}}{2\ln(2)}\left[\left(1+\frac{(\ln(2)\theta)^2}{(4N_{R})^2}\right)^{\frac{1}{2}}-1\right]}.\nonumber
\end{align}
The limit of the last upper-bound is equal to 0 as $n\rightarrow \infty.$ Since $R$ is any number strictly less than $\underset{\mbf Q \in \mc Q_{P}}{\max}\underset{\mbf g\in\mc G_{a}}{\min} \log\det(\mathbf{I}_{N_{R}}+\frac{1}{\sigma^2}\mathbf{g}\mathbf{Q}\mathbf{g}^{H}),$  Theorem \ref{capacityfinitestate} is proved.
\end{proof} 
\subsection{\text{Direct Proof of Theorem} \ref{capacitycompoundchannels} for infinite \texorpdfstring{$\mc G_{a}$}{TEXT}}
Now, we proceed with the proof of the direct part of Theorem \ref{capacitycompoundchannels} for infinite $\mc G_a.$
In the proof, we will make use of  Theorem \ref{capacityfinitestate}. We will additionally establish an approximation inequality and a probabilistic bound on the output power. This is done in the following auxiliary lemmas.
\subsubsection{Auxiliary Lemmas}
\begin{lemma}
\label{approximation}
Let $W_{\mbf g}$ and $W_{\hat{\mbf g}} $ be two channels such that $\mbf g,\hat{\mbf g} \in \mc G_{a}$ and let $\bs{t}^n$ be an input
$n$-sequence of vectors $\bs{t}_i$ such that $\frac{1}{n}\sum_{i=1}^{n}\lVert \bs{t}_i\rVert^2\leq P$ and let $\bs{z}^n$ be an output $n$-sequence of vectors $\bs{z}_i$ such that $\frac{1}{n} \sum_{i=1}^{n}\lVert \bs{z}_{i} \rVert^2\leq \rho, \  \rho>0.$
Then, it holds that 
\begin{align}
\frac{W_{\mbf g}(\bs{z}^n|\bs{t}^n)}{ W_{\hat{\mbf g}}(\bs{z}^n|\bs{t}^n)}\leq 2^{\frac{2n}{\ln(2)\sigma^2}\left[\sqrt{P\rho}+aP\right]\lVert \mbf{g}-\hat{\mbf g}\rVert}.\nonumber
\end{align}
\begin{proof}
$\forall i \in \{1,\hdots,n\},$ we have
\begin{align}
&\frac{W_{\mbf g}(\bs{z}_i|\bs{t}_i)}{ W_{\hat{\mbf g}}(\bs{z}_i|\bs{t}_i)} =\exp\left(-\frac{1}{\sigma^2}\left[(\bs{z}_i-\mbf g \bs{t}_i)^{H}(\bs{z}_i-\mbf g \bs{t}_i)-(\bs{z}_i- \hat{\mbf g} \bs{t}_i)^{H}(\bs{z}_i-\hat{\mbf g} \bs{t}_i)\right]\right), \nonumber
\end{align}
where
\begin{align}
   & -´\frac{1}{\sigma^2}\left[ \ \left(\bs{z}_i-\mbf g \bs{t}_i\right)^{H}\left(\bs{z}_i-\mbf g \bs{t}_i\right)-\left(\bs{z}_i- \hat{\mbf g} \bs{t}_i\right)^{H}\left(\bs{z}_i-\hat{\mbf g} \bs{t}_i\right) \right] \nonumber \\
    & \leq´\frac{1}{\sigma^2}\left| \left(\bs{z}_i-\mbf g \bs{t}_i\right)^{H}\left(\bs{z}_i-\mbf g \bs{t}_i\right)-\left(\bs{z}_i- \hat{\mbf g} \bs{t}_i\right)^{H}\left(\bs{z}_i-\hat{\mbf g} \bs{t}_i\right)\right| \nonumber \\
    &=\frac{1}{\sigma^2}\left| \bs{z}_i^{H}\left(\hat{\mbf g}-\mbf g\right)\bs{t}_i+\left(\bs{z}_i^{H}\left(\hat{\mbf g}-\mbf g\right)\bs{t}_i\right)^{H}      +\lVert \mbf g \bs{t}_i \rVert^2-\lVert \hat{\mbf g} \bs{t}_i \rVert^2 \right| \nonumber \\
    &\leq \frac{1}{\sigma^2}  \left|  \bs{z}_i^{H}\left(\hat{\mbf g}-\mbf g\right)\bs{t}_i+\left(\bs{z}_i^{H}\left(\hat{\mbf g}-\mbf g\right)\bs{t}_i\right)^{H} \right| +\frac{1}{\sigma^2} \left| \ \lVert \mbf g \bs{t}_i \rVert^2-\lVert \hat{\mbf g} \bs{t}_i \rVert^2 \right|             \nonumber\\
    &\leq \frac{1}{\sigma^2} \left[ 2\lVert \hat{\mbf g}-\mbf g \rVert \lVert \bs{t}_i\rVert  \lVert \bs{z}_i \rVert  +\left| \lVert \mbf g \bs{t}_i \rVert^2-\lVert \hat{\mbf g} \bs{t}_i \rVert^2  \right|   \right] \nonumber \\
    &=  \frac{2}{\sigma^2} \lVert \hat{\mbf g}-\mbf g \rVert \lVert \bs{t}_i\rVert  \lVert \bs{z}_i \rVert  +\frac{1}{\sigma^2} \left|\lVert \mbf g \bs{t}_i \rVert-\lVert \hat{\mbf g} \bs{t}_i \rVert  \right|\left(\lVert \mbf g \bs{t}_i \rVert+\lVert \hat{\mbf g} \bs{t}_i \rVert \right) \nonumber \\
    &\leq\frac{1}{\sigma^2}\left[ 2\lVert \hat{\mbf g}-\mbf g \rVert  \lVert \bs{t}_i\rVert  \lVert \bs{z}_i \rVert+\lVert\left(\mbf g - \mbf {\hat{g}}\right)\bs{t}_i\rVert  \left(\lVert \mbf g \bs{t}_i \rVert+\lVert \hat{\mbf g} \bs{t}_i \rVert\right)  \right]               \nonumber \\
    &\leq \frac{1}{\sigma^2}\left[ 2\lVert \hat{\mbf g}-\mbf g \rVert \lVert \bs{t}_i\rVert  \lVert \bs{z}_i \rVert+\lVert\left(\mbf g - \mbf {\hat{g}}\right)\bs{t}_i\rVert  \left(\lVert \mbf g \rVert \lVert \bs{t}_i \rVert+\lVert \hat{\mbf g} \rVert \lVert \bs{t}_i \rVert\right)  \right]               \nonumber \\
    & \leq  \frac{1}{\sigma^2} \left[  2\lVert \hat{\mbf g}-\mbf g \rVert  \lVert \bs{t}_i\rVert  \lVert \bs{z}_i \rVert +2 a \lVert \bs{t}_i \rVert  \lVert \hat{\mbf g} -\mbf g  \rVert  \lVert \bs{t}_i \rVert \right] \nonumber \\
    &=\frac{2}{\sigma^2}\lVert \hat{\mbf g} - \mbf g  \rVert \left[ \lVert \bs{t}_i\rVert \lVert \bs{z}_i \rVert + a\lVert \bs{t}_i \rVert^2   \right],\nonumber
\end{align}
where we used that $\lVert \mbf g \rVert \leq a$ and $\lVert \hat{\mbf g} \rVert \leq a$  for $\mbf g,\hat{\mbf g} \in \mc G_{a}\subset\mc B_a.$ 

Now
\begin{align}
    \frac{W_{\mbf g}(\bs{z}^n|\bs{t}^n)}{ W_{\hat{\mbf g}}(\bs{z}^n|\bs{t}^n)}&\overset{(a)}{=}\prod_{i=1}^{n}\frac{W_{\mbf g}(\bs{z}_i|\bs{t}_i)}{ W_{\hat{\mbf g}}(\bs{z}_i|\bs{t}_i)} \nonumber\\
    &\leq\exp\left(\sum_{i=1}^{n}\frac{2}{\sigma^2}\lVert \hat{\mbf g} - \mbf g  \rVert\left[  \lVert \bs{t}_i\rVert \lVert \bs{z}_i \rVert + a\lVert \bs{t}_i \rVert^2        \right]\right) \nonumber \\
    &=\exp\left(\frac{2}{\sigma^2}\lVert \hat{\mbf g} - \mbf g  \rVert\left[ \sum_{i=1}^{n} \lVert \bs{t}_i\rVert \lVert \bs{z}_i \rVert + a \sum_{i=1}^{n}\lVert \bs{t}_i \rVert^2        \right]\right) \nonumber \\
    &\overset{(b)}{\leq} \exp\left(\frac{2}{\sigma^2}\lVert \hat{\mbf g} - \mbf g  \rVert\left[ \sqrt{\sum_{i=1}^{n} \lVert \bs{t}_i\rVert^2} \sqrt{\sum_{i=1}^{n}\lVert \bs{z}_i \rVert^2} + a \sum_{i=1}^{n}\lVert \bs{t}_i \rVert^2        \right]          \right) \nonumber \\
    &\overset{(c)}{\leq} \exp\left(\frac{2n}{\sigma^2}\left[\sqrt{P\rho}+aP\right]\lVert \mbf{g}-\hat{\mbf g}\rVert\right),\nonumber
    & \nonumber \\&= 2^{\frac{2n}{\ln(2)\sigma^2}\left[\sqrt{P\rho}+aP\right]\lVert \mbf{g}-\hat{\mbf g}\rVert},
\nonumber \end{align}
where $(a)$ follows because the channels $W_\mbf g$ and $W_{\hat{\mbf g}}$ are memoryless, $(b)$ follows from Cauchy-Schwarz's inequality and $(c)$ follows because we require that $\frac{1}{n}\sum_{i=1}^{n}\lVert \bs{t}_i\rVert^2\leq P$ and that $\frac{1}{n}\sum_{i=1}^{n}\lVert \bs{z}_i\rVert^2\leq \rho.$ This completes the proof of the lemma.
\end{proof}
\end{lemma}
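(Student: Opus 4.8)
The plan is to exploit the explicit complex-Gaussian form of the two channels and reduce the likelihood ratio to a controllable sum of per-symbol exponents. Since $\bs{\xi}_i \sim \mathcal{N}_{\mathbb{C}}(\bs{0}_{N_R},\sigma^2\mbf I_{N_R})$, the single-symbol transition density of $W_{\mbf g}$ is proportional to $\exp(-\tfrac{1}{\sigma^2}\lVert \bs{z}_i-\mbf g\bs{t}_i\rVert^2)$ with normalizing constant $(\pi\sigma^2)^{-N_R}$ that is independent of the gain. Hence in the per-symbol ratio $W_{\mbf g}(\bs{z}_i|\bs{t}_i)/W_{\hat{\mbf g}}(\bs{z}_i|\bs{t}_i)$ the constants cancel and the ratio equals $\exp\!\big(-\tfrac{1}{\sigma^2}[\lVert \bs{z}_i-\mbf g\bs{t}_i\rVert^2-\lVert \bs{z}_i-\hat{\mbf g}\bs{t}_i\rVert^2]\big)$. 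First I would expand this difference of squared norms: the $\lVert\bs{z}_i\rVert^2$ contributions cancel, leaving a cross term $\bs{z}_i^H(\hat{\mbf g}-\mbf g)\bs{t}_i$ together with its Hermitian conjugate, plus the power term $\lVert \mbf g\bs{t}_i\rVert^2-\lVert \hat{\mbf g}\bs{t}_i\rVert^2$.

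Next I would bound the exponent above by its absolute value and estimate the two pieces separately. For the cross term, being a scalar plus its conjugate, I would apply the operator-norm inequality and Cauchy--Schwarz to obtain $\big|\bs{z}_i^H(\hat{\mbf g}-\mbf g)\bs{t}_i+(\bs{z}_i^H(\hat{\mbf g}-\mbf g)\bs{t}_i)^H\big|\le 2\lVert \mbf g-\hat{\mbf g}\rVert\,\lVert \bs{t}_i\rVert\,\lVert \bs{z}_i\rVert$. For the power term I would factor the difference of squares as $(\lVert \mbf g\bs{t}_i\rVert-\lVert \hat{\mbf g}\bs{t}_i\rVert)(\lVert \mbf g\bs{t}_i\rVert+\lVert \hat{\mbf g}\bs{t}_i\rVert)$, bound the first factor by $\lVert(\mbf g-\hat{\mbf g})\bs{t}_i\rVert\le \lVert \mbf g-\hat{\mbf g}\rVert\,\lVert \bs{t}_i\rVert$ via the reverse triangle inequality, and bound the second by $2a\lVert \bs{t}_i\rVert$ using $\lVert \mbf g\rVert,\lVert \hat{\mbf g}\rVert\le a$, which holds because $\mbf g,\hat{\mbf g}\in\mc G_a\subset\mc B_a$. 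Combining gives the per-symbol exponent bound $\tfrac{2}{\sigma^2}\lVert \mbf g-\hat{\mbf g}\rVert\big(\lVert \bs{t}_i\rVert\,\lVert \bs{z}_i\rVert+a\lVert \bs{t}_i\rVert^2\big)$.

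Finally I would use the memorylessness of both channels to write the $n$-block ratio as the product of the per-symbol ratios, so that the exponents simply add; then I would apply Cauchy--Schwarz once more, now in $\mathbb{R}^n$ across the block: $\sum_i \lVert \bs{t}_i\rVert\,\lVert \bs{z}_i\rVert\le\big(\sum_i\lVert \bs{t}_i\rVert^2\big)^{1/2}\big(\sum_i\lVert \bs{z}_i\rVert^2\big)^{1/2}\le n\sqrt{P\rho}$ and $\sum_i\lVert \bs{t}_i\rVert^2\le nP$, the latter two inequalities coming directly from the hypotheses $\tfrac{1}{n}\sum_i\lVert \bs{t}_i\rVert^2\le P$ and $\tfrac{1}{n}\sum_i\lVert \bs{z}_i\rVert^2\le \rho$. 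This yields a total exponent of $\tfrac{2n}{\sigma^2}[\sqrt{P\rho}+aP]\lVert \mbf g-\hat{\mbf g}\rVert$, and rewriting $\exp(x)=2^{x/\ln 2}$ gives exactly the stated base-$2$ bound.

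The calculation is elementary and entirely deterministic: there is no probabilistic averaging, so the bound holds for every fixed pair $(\bs{t}^n,\bs{z}^n)$ meeting the two power constraints. The only point requiring genuine care is the chain of inequalities for the single-symbol exponent, in particular factoring $\lVert \mbf g\bs{t}_i\rVert^2-\lVert \hat{\mbf g}\bs{t}_i\rVert^2$ correctly and keeping the operator-norm bound $a$ attached to the second factor, combined with the two-stage use of Cauchy--Schwarz (once within each symbol for the cross term, once across the $n$ symbols at the end).
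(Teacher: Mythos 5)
Your proposal is correct and follows essentially the same route as the paper's own proof: the same expansion of the per-symbol Gaussian log-likelihood ratio into the cross term plus the difference of squared norms, the same operator-norm/Cauchy--Schwarz and reverse-triangle-inequality bounds on those two pieces, and the same final Cauchy--Schwarz across the block combined with the two power constraints before converting $\exp$ to base $2$. There is nothing to add or correct.
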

\begin{lemma} 
\label{lemmadelta}
Let $\mbf g \in \mc G_{a}.$ Let $\bs{t}^n=(\bs{t}_1,\hdots,\bs{t}_n)$ be any $n$-input sequence of $W_\mbf g$ satisfying $\frac{1}{n}\sum_{i=1}^{n} \lVert \bs{t}_i \rVert^2 \leq P.$ Let $\bs{z}^n=(\bs{z}_1,\hdots,\bs{z}_n)$ be the $n$-output sequence. It holds that
\begin{align}
W_{\mbf g}\left( \sum_{i=1}^{n}\lVert \bs{z}_i\rVert^2 \geq n(2a^2P+2N_{R}\sigma^2+2)|\bs{t}^n\right) \leq\left[\left(1+\frac{1}{\sigma^2 N_{R}}\right) 2^{-\frac{1}{\ln(2)\sigma^2N_{R}}}\right]^{n}.\nonumber
\end{align}
\begin{proof}
We have
\begin{align}
\sum_{i=1}^{n} \lVert \bs{z}_i \rVert^2 
    &=\sum_{i=1}^{n} \lVert \mbf g \bs{t}_i+\bs{\xi}_i \rVert^2 \nonumber \\
    &\leq 2\sum_{i=1}^{n} \left( \lVert \bs{\xi}_i \rVert^2+\lVert \mbf g \bs{t}_i \rVert^2         \right) \nonumber \\
    &\leq 2\sum_{i=1}^{n} \left( \lVert \bs{\xi}_i \rVert^2+\lVert \mbf g \rVert^2 \lVert \bs{t}_i \rVert^2         \right) \nonumber \\
    &\leq 2 \sum_{i=1}^{n}\lVert \bs{\xi}_i \rVert^2 +2a^2 n P.\nonumber
\end{align}
Hence,
\begin{align}
    W_{\mbf g}\left( \sum_{i=1}^{n} \lVert \bs{z}_i \rVert^2 \geq n(2a^2P+2N_{R}\sigma^2+2)|\bs{t}^n\right)&\leq \mbb P \left[ 2  \sum_{i=1}^{n} \lVert \bs{\xi}_i \rVert^2 +2a^2 n P \geq n(2a^2P+2N_{R}\sigma^2+2)\right]\nonumber \\
    &= \mbb P\left[ \sum_{i=1}^{n}\lVert \bs{\xi}_i \rVert^2 \geq n(N_{R}\sigma^2+1)\right]
    \nonumber \\
    &= \mbb P\left[ \sum_{i=1}^{n}\lVert \bs{\xi}_i \rVert^2 \geq n(\text{tr}(\sigma^2\mbf{I}_{N_{R}})+1)\right] \nonumber \\
    &\leq \left[\left(1+\frac{1}{\sigma^2 N_{R}}\right) 2^{-\frac{1}{\ln(2)\sigma^2N_{R}}}\right]^{n},
\nonumber \end{align}
where we used Lemma \ref{upperboundprobsum} in the last step. This completes the proof of the lemma.
\end{proof}
\end{lemma}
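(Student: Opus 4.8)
The plan is to reduce the output-power deviation to a purely noise-driven deviation and then invoke the concentration bound already established in Lemma \ref{upperboundprobsum}. First I would expand the output norm pathwise using $\bs{z}_i=\mbf g\bs{t}_i+\bs{\xi}_i$ together with the elementary inequality $\lVert \bs{u}+\bs{v}\rVert^2\leq 2\lVert \bs{u}\rVert^2+2\lVert \bs{v}\rVert^2$, which gives $\lVert \bs{z}_i\rVert^2\leq 2\lVert \mbf g\bs{t}_i\rVert^2+2\lVert \bs{\xi}_i\rVert^2$. Since $\mbf g\in\mc G_a\subset\mc B_a$ we have $\lVert \mbf g\rVert\leq a$, whence $\lVert \mbf g\bs{t}_i\rVert^2\leq a^2\lVert \bs{t}_i\rVert^2$; summing over $i$ and applying the input power constraint $\frac{1}{n}\sum_{i=1}^n\lVert \bs{t}_i\rVert^2\leq P$ yields the deterministic bound $\sum_{i=1}^n\lVert \bs{z}_i\rVert^2\leq 2\sum_{i=1}^n\lVert \bs{\xi}_i\rVert^2+2a^2nP$.

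Next I would use this pathwise bound to contain the target event inside a noise-only event. Subtracting the deterministic contribution $2a^2nP$, the event $\{\sum_{i=1}^n\lVert \bs{z}_i\rVert^2\geq n(2a^2P+2N_R\sigma^2+2)\}$ is contained in $\{2\sum_{i=1}^n\lVert \bs{\xi}_i\rVert^2+2a^2nP\geq n(2a^2P+2N_R\sigma^2+2)\}$, which simplifies to $\{\sum_{i=1}^n\lVert \bs{\xi}_i\rVert^2\geq n(N_R\sigma^2+1)\}$. Hence the conditional channel probability $W_{\mbf g}(\,\cdot\mid\bs{t}^n)$ of the original event is bounded above by the probability of this noise event, which no longer depends on $\mbf g$ or on the particular input $\bs{t}^n$ and is governed solely by the law of $\bs{\xi}^n$.

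Finally I would recognize the resulting event as exactly the setting of Lemma \ref{upperboundprobsum}. The noise vectors $\bs{\xi}_i$ are i.i.d. $N_R$-dimensional complex Gaussians with mean $\bs{0}_{N_R}$ and covariance $\sigma^2\mbf I_{N_R}$, whose trace is $N_R\sigma^2$; writing $N_R\sigma^2+1=\mathrm{tr}(\sigma^2\mbf I_{N_R})+1$ identifies the parameters $M=N_R\sigma^2$ and $\delta=1$ of that lemma. Applying Lemma \ref{upperboundprobsum} then delivers the claimed bound $\left[\left(1+\frac{1}{\sigma^2N_R}\right)2^{-\frac{1}{\ln(2)\sigma^2N_R}}\right]^n$ directly.

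The argument is a short chain of deterministic inequalities feeding into a previously proven concentration lemma, so there is no genuine obstacle. The only points requiring care are the correct use of the factor-of-two inequality $\lVert \bs{u}+\bs{v}\rVert^2\leq 2\lVert \bs{u}\rVert^2+2\lVert \bs{v}\rVert^2$ and the exact matching of the trace $\mathrm{tr}(\sigma^2\mbf I_{N_R})=N_R\sigma^2$ to the parameter $M$, so that the numerical constants align precisely with the stated right-hand side.
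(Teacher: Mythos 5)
Your proposal is correct and follows essentially the same route as the paper's own proof: the pathwise bound $\lVert \mbf g\bs{t}_i+\bs{\xi}_i\rVert^2\leq 2\lVert \mbf g\rVert^2\lVert\bs{t}_i\rVert^2+2\lVert\bs{\xi}_i\rVert^2$ combined with $\lVert\mbf g\rVert\leq a$ and the input power constraint, reduction of the output-power event to the noise-only event $\{\sum_{i=1}^{n}\lVert\bs{\xi}_i\rVert^2\geq n(N_R\sigma^2+1)\}$, and an application of Lemma \ref{upperboundprobsum} with $M=\mathrm{tr}(\sigma^2\mbf I_{N_R})=N_R\sigma^2$ and $\delta=1$. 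No gaps; the constants match exactly.
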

\subsubsection{Direct Proof of Theorem \ref{capacitycompoundchannels}}
Now that we proved the lemmas, we fix $R$ to be any positive number strictly less than $\underset{\mbf Q \in \mc Q_{P}}{\max}\underset{\mbf g \in \mc G_{a}}{\min} \log\det(\mbf I_{N_{R}}+\frac{1}{\sigma^2}\mbf g \mbf Q \mbf g^H)$ and put $2\theta=\underset{\mbf Q \in \mc Q_{P}}{\max}\underset{\mbf g \in \mc G_{a}}{\min} \log\det(\mbf I_{N_{R}}+\frac{1}{\sigma^2}\mbf g \mbf Q \mbf g^H)-R$.
By Lemma \ref{existence}, one can find a non-singular $\mbf Q_1 \in \mc Q_{P}$ such that $\text{tr}(\mbf Q_1)=P-\beta,\ \beta>0$, and 
\begin{align}
 \mbb E\left[i_{\mbf g} \left(\bs{T},\bs{Z}\right)\right]=\log\det(\mbf I_{N_{R}}+\frac{1}{\sigma^2}\mbf g \mbf Q_{1} \mbf g) \geq R+\theta \quad \forall \mbf g \in \mc G_{a},
 \label{inequalityrate}
\end{align} 
with $\bs{T}$ and $\bs{Z}$ being the random input and output of $W_\mbf g,$ respectively, where $\bs{T} \sim \mathcal{N}_{\mbb C}\left(\bs{0}_{N_{T}},\mbf Q_1\right).$
We now pick a finite subset $\mc G_{a}^{\prime}$ of $\mc G_{a}$ such that for every $\mbf g \in \mc G_{a}$, there is a $\hat{\mbf g} \in \mc G_{a}^{\prime}$ satisfying $\lVert \mbf g - \hat{\mbf g}\rVert \leq \nu.$
This can be done because the set $\mc G_{a}$ is compact. By inequality \eqref{inequalityrate} and since 
\begin{align}
\underset{\mbf Q \in \mc Q_{P}}{\max}\underset{\mbf g \in \mc G_{a}^{\prime}}{\min} \log\det(\mbf I_{N_{R}}+\frac{1}{\sigma^2}\mbf g \mbf Q \mbf g^H) \geq \underset{\mbf Q \in \mc Q_{P}}{\max}\underset{\mbf g \in \mc G_{a}}{\min} \log\det(\mbf I_{N_{R}}+\frac{1}{\sigma^2}\mbf g \mbf Q \mbf g^H), \nonumber
\end{align}
it follows that
$$\underset{\mbf Q \in \mc Q_{P}}{\max}\underset{\mbf g \in \mc G_{a}^{\prime}}{\min} \log\det(\mbf I_{N_{R}}+\frac{1}{\sigma^2}\mbf g \mbf Q \mbf g^H)\geq R+\theta.
$$
Hence, the calculations of Theorem \ref{capacityfinitestate}  imply that there exists a code $\Gamma_n$ for $\mathcal{C}'$ with block-length $n$, size $\lVert \Gamma_n \rVert=\lfloor 2^{nR} \rfloor$ such that the codewords $\bs{t}^n=(\bs{t}_1,\hdots,\bs{t}_n)$ satisfy $\frac{1}{n}\sum_{i=1}^{n}\lVert \bs{t}_i \rVert^2\leq P$ and such that for all $\hat{\mbf g} \in \mc G'_{a}$
    \begin{align}
     e(\Gamma_n,\hat{\mbf g}) & \leq (|\mc G_{a}^{\prime}|+|\mc G_{a}^{\prime}|^2)2^{-\frac{n\theta}{8}}+|\mc G_{a}^{\prime}|2^{-n\hat{\beta}} +\lvert \mc G_{a}^{\prime} \rvert 2^{-\frac{nN_{R}}{2\ln(2)}\left[\left(1+\frac{(\ln(2)\theta)^2}{(4N_{R})^2}\right)^{\frac{1}{2}}-1\right]},        \label{epsilonstrich}
    \end{align}
    where $\hat{\beta}=\frac{\beta}{\ln(2)(P-\beta)}-\log(1+\frac{\beta}{P-\beta})$ and where $\beta$ is independent of $n$.

We now consider the use of codewords and decoding sets belonging to the  code $\Gamma_n$ for $\mathcal{C}'$ with the larger compound channel $\mathcal{C}$.
Let $\mbf g \in \mc G_{a}$ and $\hat{\mbf g} \in \mc G_{a}^{\prime}$ such that $\lVert \mbf g -\hat{\mbf g} \rVert \leq \nu.$
Let $\bs{t}^n$ be \textit{any} codeword of $\Gamma_n$ and $B$ the corresponding decoding set. Let $F=\{\bs{z}^n=(\bs{z}_1,\hdots,\bs{z}_n):\frac{1}{n}\sum_{i=1}^{n}\lVert \bs{z}_i \rVert^2\leq \rho\},$ where
$\rho=2a^2P+2N_{R}\sigma^2+2.$
Then 
 \begin{align}
     W_{\mbf g}(B^{c}|\bs{t}^n) &=W_{\mbf g}\left(\left(B^{c}\cap F\right) \cup \left(B^c\cap F^c\right)|\bs{t}^n\right) \nonumber \\
     &\leq W_{\mbf g}(B^{c}\cap F|\bs{t}^n)+W_{\mbf g}(F^c|\bs{t}^n).\nonumber
 \end{align}
 By Lemma \ref{lemmadelta}, it holds that
 \begin{align}
     W_{\mbf g}(F^{c}|\bs{t}^n)&\leq \left[ \left(1+\frac{1}{N_{R}\sigma^2}\right)2^{-\frac{1}{\ln(2)N_{R}\sigma^2}}   \right]^{n} \nonumber \\
     &=2^{-n \left( \frac{1}{\ln(2)N_{R}\sigma^2}-\log\left(1+\frac{1}{N_{R}\sigma^2} \right)\right)}.\nonumber
 \end{align}
By Lemma \ref{approximation}, it holds that
\begin{align}
    &W_{\mbf g}(B^{c}\cap F|\bs{t}^n) \leq 2^{\frac{2n}{\ln(2)\sigma^2}\left[\sqrt{P\rho}+aP\right] \nu }  W_{\hat{\mbf g}}(B^{c}\cap F|\bs{t}^n). \nonumber
\end{align}
Now
\begin{align}
    W_{\hat{\mbf g}}(B^c\cap F|\bs{t}^n)\leq W_{\hat{\mbf g}}(B^c|\bs{t}^n)\leq e(\Gamma_n,\hat{\mbf g}).\nonumber
\end{align}
This implies using \eqref{epsilonstrich} that for all $\mbf g \in \mc G_{a}$
\begin{align}
 \ e(\Gamma_n,\mbf g) &\leq 2^{-n \left( \frac{1}{\ln(2)N_{R}\sigma^2}-\log\left(1+\frac{1}{N_{R}\sigma^2} \right)\right)}+ (|\mc G_{a}^{\prime}|+|\mc G_{a}^{\prime}|^2) 
    2^{-n\left(\frac{\theta}{8}-\frac{2}{\ln(2)\sigma^2}\left[\sqrt{P\rho}+aP\right] \nu\right)}\nonumber \\
    &\quad+|\mc G_{a}^{\prime}|2^{-n\left(\hat{\beta}-\frac{2}{\ln(2)\sigma^2}\left[\sqrt{P\rho}+aP\right] \nu \right)}  +|\mc G_{a}^{\prime}|2^{-n\left[ c_1 -c_{2}\nu       \right]},
    \label{exponentials}
\end{align}
where
\begin{align}
    c_1=\frac{N_{R}}{2\ln(2)}\left[\left(1+\frac{(\ln(2)\theta)^2}{(4N_{R})^2}\right)^{\frac{1}{2}}-1\right] \nonumber
\end{align}
and
\begin{align}
   c_2= \frac{2}{\ln(2)\sigma^2}\left[\sqrt{P\rho}+aP\right].\nonumber 
\end{align}
The exponentials in $\eqref{exponentials}$ are all of the form $2^{-n(K_{1}-K_2\nu)}$ where $K_1$ and $K_2$ do not depend on $n$ and where $K_1$ is positive and $K_2$ is non-negative.
For $\nu$ sufficiently small, it holds that $K_1-K_2\nu>0,$ which yields 
$\underset{n\rightarrow\infty}{\lim} e(\Gamma_n,\mbf g)=0. $
This proves that $\underset{\mbf Q \in \mc Q_{P}}{\max}\underset{\mbf g \in \mc G_{a}}{\min} \log\det(\mbf I_{N_R}+\frac{1}{\sigma^2}\mbf g \mbf Q \mbf g^H)$ is an achievable rate for $\mathcal{C}.$ This completes the direct proof of Theorem \ref{capacitycompoundchannels}.
\subsection{Converse Proof of Theorem \ref{capacitycompoundchannels}}
Let $R$ be any achievable rate for $\mc C=\{W_\mbf g: \mbf g \in \mc G_{a}\}.$  So, for every $\theta,\delta>0,$ there exists a code sequence $(\Gamma_n)_{n=1}^\infty$  such that for all $\mbf g \in \mc G_{a}$
    \[
        \frac{\log\lVert \Gamma_n\rVert}{n}\geq R-\delta
    \]
    and 
  \begin{align}
      e(\Gamma_n,\mbf g)\leq\theta, \label{maxerror}
    \end{align}
    for sufficiently large $n$.
Notice that from \eqref{maxerror}, it follows that the average error
probability is also bounded from above by $\theta.$ The uniformly-distributed message is modeled by $W$ and the random decoded message is modeled by $\hat{W}.$  The set of messages is denoted by $\mathcal{W}.$    
For any $\mbf g \in \mc G_{a},$ the uniformly-distributed message $W$ is mapped to the random input sequence of the channel $W_\mbf g$, denoted by $\bs{T}^n=(\bs{T}_1,\hdots,\bs{T}_n).$ The corresponding random output sequence is denoted by $\bs{Z}^n=(\bs{Z}_1,\hdots,\bs{Z}_n).$ The covariance matrix of each input $\bs{T}_i$ is denoted by $\mbf Q_i.$  We define $\mbf Q^\star$ such that
$\mbf Q^\star=\frac{1}{n}\sum_{i=1}^{n}\mbf Q_i.$
By using $\Gamma_n$ as a transmission-code for $\mc C,$ it follows that
\begin{align}
    \mbb P\left[ W\neq \hat{W}\right]\leq  \theta. \nonumber
\end{align}
We have
\begin{align}
    H(W)&= \log |\mathcal{W}| \nonumber \\
        &=    \log \lVert \Gamma_n \rVert \nonumber \\
        &\geq n(R-\delta).
        \label{entropyconverse2}
\end{align}
On the one hand, as shown in \eqref{applyfano}, we obtain by applying Fano's inequality
\begin{align}
   H(W)\leq \frac{1+I(W;\hat{W})}{1-\theta}. \label{entropyboundconverse2}
\end{align}
On the other hand, as shown in \eqref{mutinfconverse}, it holds that
\begin{align}
    \frac{1}{n} I(W;\hat{W})\leq \log\det(\mbf I_{N_{R}}+\frac{1}{\sigma^2}\mbf g \mbf Q^{\star} \mbf g^{H}  ).
    \label{mutinfconverse2}
\end{align} 
As a result, it follows from \eqref{entropyconverse2}, \eqref{entropyboundconverse2} and \eqref{mutinfconverse2} that for every $\mbf g \in \mc G_{a}$ 
\begin{align}
    n(R-\delta) \leq \frac{n\log\det(\mbf I_{N_{R}}+\frac{1}{\sigma^2}\mbf g \mbf Q^{\star} \mbf g^{H}  )+1}{1-\theta}.
\nonumber \end{align}
Hence, 
\begin{align}
    n(R-\delta) \leq \frac{n \underset{\mbf g \in \mc G_{a}}{\min}\log\det(\mbf I_{N_{R}}+\frac{1}{\sigma^2}\mbf g \mbf Q^{\star} \mbf g^{H}  )+1}{1-\theta}.
\nonumber \end{align}
Since $\mbf Q^\star \in \mc Q_{P}$ (see Lemma \ref{traceQ}), it follows that
\begin{align}
    n(R-\delta) \leq \frac{n \underset{\mbf Q \in \mc Q_{P}}{\max} \underset{\mbf g \in \mc G_{a}}{\min}\log\det(\mbf I_{N_{R}}+\frac{1}{\sigma^2}\mbf g \mbf Q \mbf g^{H}  )+1}{1-\theta}.
\nonumber \end{align}
This implies that for sufficiently large $n$ and for every $\delta,\theta>0,$ we have
\begin{align}
    R \leq \frac{ \underset{\mbf Q \in \mc Q_{P}}{\max} \underset{\mbf g \in \mc G_{a}}{\min}\log\det(\mbf I_{N_{R}}+\frac{1}{\sigma^2}\mbf g \mbf Q \mbf g^{H}  )+\frac{1}{n}}{1-\theta}+\delta.
    \nonumber
\end{align}
It follows that
\begin{align}
    R&\leq \underset{\delta,\theta>0}{\inf} \ \underset{n\rightarrow\infty}{\lim} \left[\frac{ \underset{\mbf Q \in \mc Q_{P}}{\max} \underset{\mbf g \in \mc G_{a}}{\min}\log\det(\mbf I_{N_{R}}+\frac{1}{\sigma^2}\mbf g \mbf Q \mbf g^{H}  )+\frac{1}{n}}{1-\theta}+\delta\right] \nonumber \\
     &= \underset{\mbf Q \in \mc Q_{P}}{\max} \underset{\mbf g \in \mc G_{a}}{\min}\log\det(\mbf I_{N_{R}}+\frac{1}{\sigma^2}\mbf g \mbf Q \mbf g^{H}). \nonumber
\end{align}
This completes the converse proof of Theorem \ref{capacitycompoundchannels}.

\section{Conclusion and Discussion}
\label{conclusion}
In this paper, we considered the problem of message transmission and the problem of CR generation over point-to-point MIMO slow fading channels. The first goal of this paper was to derive a lower and an upper bound on  the outage transmission capacity of single-user MIMO slow fading channels with average input power constraint, AWGN and with arbitrary state distribution under the assumption of CSIR and to show that our bounds coincide except possibly at the points of  discontinuity of the outage transmission capacity, of which there are, at most, countably many. The second goal was to establish a lower and an upper bound on the outage
CR capacity of a two-source model with unidirectional communication over the MIMO slow fading channel with AWGN
and with arbitrary state distribution using our bounds on the 
 outage transmission capacity of the MIMO slow fading channel.
The obtained results are particularly relevant in the problem of correlation-assisted identification over MIMO slow fading channels, where Alice and Bob have now access to a correlated source. This is an extension to the work done in \cite{deterministicfading}, where the focus is on deterministic identification over fading channels.
One can therefore introduce the concept of outage in the correlation-assisted identification framework and  proceed analogously to \cite{identificationcode}  to construct identification codes for MIMO slow fading channels based on the concatenation of two transmission codes using CR as a resource. This allows to derive a lower bound on the outage correlation-assisted identification capacity of MIMO slow fading channels in the log-log scale.
As a future work, it would be interesting to study the
problem of CR generation in fast fading environments, where
the channel state varies over the time scale of transmission.


\section*{Acknowledgments}
H. Boche was supported in part by the German Federal Ministry of Education and Research (BMBF) within the national initiative on 6G Communication Systems through the research hub 6G-life under Grant 16KISK002, within the national initiative on Post Shannon Communication
(NewCom) under Grant 16KIS1003K. He was further supported by the German Research Foundation (DFG) within Germany’s Excellence Strategy EXC-2092–390781972.
M. Wiese was supported by the Deutsche Forschungsgemeinschaft (DFG, German
Research Foundation) within the Gottfried Wilhelm Leibniz Prize under Grant BO 1734/20-1, and within Germany’s Excellence Strategy EXC-2111-390814868. C.\ Deppe was supported in part by the German Federal Ministry of Education and Research (BMBF) under Grant 16KIS1005 and in part by the German Federal Ministry of Education and Research (BMBF) within the national initiative on 6G Communication Systems through the research hub 6G-life under Grant 16KISK002.
R. Ezzine was supported by the German Federal Ministry of Education and Research (BMBF) under Grant 16KIS1003K.

This work has been presented in part at the virtual IEEE International Symposium on Information Theory (ISIT 2021) and in part at the virtual IEEE Information Theory Workshop (ITW 2021).



%

\end{document}